\author{Jakob Piribauer}{Technische Universit\"at Dresden, Germany; Universit\"at Leipzig, Germany}{jakob.piribauer@tu-dresden.de}{0000-0003-4829-0476}{}%{email}{orcid}{funding}
\authorrunning{J. Piribauer}
\keywords{Markov decision processes, variance, non-determinism, probabilism}
\title{Demonic variance and a non-determinism score for Markov decision processes}
\pgfplotsset{width=10cm,compat=1.9}
    \newcolumntype{L}{>{\raggedright\arraybackslash}X}
    \newcolumntype{R}{>{\raggedleft\arraybackslash}X}
\newtheorem{mydef}{Definition}[section]
\newtheorem{assumption}[mydef]{Assumption}
\newcommand{\rawdiaplus}{%
  \begin{tikzpicture}
    \useasboundingbox (-0.7ex, -0.9ex) rectangle (0.7ex, 0.9ex);
    \node (w) at (-0.7ex,0) {};
    \node (e) at (+0.7ex,0) {};
    \node (s) at (0,-0.9ex) {};
    \node (n) at (0,+0.9ex) {};
    \draw (n.center) -- (e.center) -- (s.center) -- (w.center) -- (n.center);
    \draw (n.center) -- (s.center);
    \draw (e.center) -- (w.center);
  \end{tikzpicture}}
\newsavebox{\diamondplusbox}
\savebox{\diamondplusbox}{\rawdiaplus}
\newcommand{\rawdiaminus}{%
  \begin{tikzpicture}
    \useasboundingbox (-0.7ex, -0.9ex) rectangle (0.7ex, 0.9ex);
    \node (w) at (-0.7ex,0) {};
    \node (e) at (+0.7ex,0) {};
    \node (s) at (0,-0.9ex) {};
    \node (n) at (0,+0.9ex) {};
    \draw (n.center) -- (e.center) -- (s.center) -- (w.center) -- (n.center);
    \draw (e.center) -- (w.center);
  \end{tikzpicture}}
\newsavebox{\diamondminusbox}
\savebox{\diamondminusbox}{\rawdiaminus}
\newcommand{\WR}{\mathrm{WR}}
\newcommand{\Var}{\mathbb{V}}
\newcommand{\dem}{\mathit{dem}}
\newcommand{\nds}{\mathrm{NDS}}
\newcommand{\cE}{\mathcal{E}}
\newcommand{\cM}{\mathcal{M}}
\newcommand{\cN}{\mathcal{N}}
\newcommand{\eqdef}{\ensuremath{\stackrel{\text{\tiny def}}{=}}}
\renewcommand{\Pr}{\mathrm{Pr}}
\newcommand{\sinit}{s_{\mathit{\scriptscriptstyle init}}}
\newcommand{\Act}{\mathit{Act}}
\newcommand{\act}{\alpha}
\newcommand{\fpath}{\pi}
\newcommand{\last}{\mathit{last}}
\newcommand{\Cyl}{\mathit{Cyl}}
\newcommand{\sched}{\mathfrak{S}}
\newcommand{\tsched}{\mathfrak{T}}
\newcommand{\usched}{\mathfrak{U}}
\newcommand{\rsched}{\mathfrak{R}}
\newcommand{\rew}{\mathit{rew}}
\newcommand{\wgt}{\mathit{wgt}}
\newcommand{\Rational}{\mathbb{Q}}
\newcommand{\CiteAppendix}[1]{}
\begin{document}

\maketitle

\begin{abstract}
This paper studies the influence of probabilism and non-determinism on some quantitative aspect $X$ of the execution of a system modeled as a Markov decision process (MDP).
To this end,  the novel notion of \emph{demonic variance} is introduced:
For a random variable $X$ in an MDP $\cM$, it is defined as $1/2$ times the maximal expected squared distance of the values of $X$ in two independent execution of $\cM$ in which also the non-deterministic choices are resolved independently by two distinct schedulers.

It is shown that the demonic variance is between $1$ and $2$ times as large as the maximal variance of $X$ in $\cM$ that can be achieved by a single scheduler.
This allows defining a non-determinism score for $\cM$ and $X$ measuring how strongly the difference of $X$ in two  executions of $\cM$ can be influenced by the non-deterministic choices.
Properties of MDPs $\cM$  with extremal values of the non-determinism score are established.
Further, the algorithmic problems of computing the maximal variance and the demonic variance are investigated for two random variables, namely weighted reachability and accumulated rewards. In the process, also the structure of schedulers maximizing the variance and of scheduler pairs  realizing the demonic variance is analyzed.
\end{abstract}

\section{Introduction}

	  \begin{figure*}[t]
	  \Large 
	  \centering
		  \begin{subfigure}[b]{0.36\textwidth}
\centering
   \resizebox{0.85\textwidth}{!}{%
      \begin{tikzpicture}[scale=1,auto,node distance=8mm,>=latex]
        \tikzstyle{round}=[thick,draw=black,circle]

 \node[round, draw=black,minimum size=9mm] (choice2) {};
  \node[round, left=12mm of choice2, minimum size=9mm] (choice1) {};

    \node[round, right=12mm of choice2, minimum size=9mm] (choice3) {};
   \node[round, above=6mm of choice2, minimum size=9mm,xshift=-80pt] (outcome1) {$1$};
     \node[left=6mm of outcome1] (label1) {\Large $\cM$:};

   \node[round, above=6mm of choice2, minimum size=9mm,xshift=-30pt] (outcome0) {$0$};
    \node[round, above=6mm of choice2, minimum size=9mm,xshift=80pt] (outcome3) {$3$};
   \node[round, above=6mm of choice2, minimum size=9mm,xshift=30pt] (outcome4) {$4$};
    \node[round, below=6mm of choice2, minimum size=9mm] (init) {$\sinit$};

         \draw[color=black ,->,very thick] (init)  edge    node [pos=0.3,left=3pt] {$\alpha$}  (choice1) ;
           \draw[color=black ,->,very thick] (init)  edge    node [pos=0.3,right] {$\beta$}  (choice2) ;
                      \draw[color=black ,->,very thick] (init)  edge    node [pos=0.3,right=3pt] {$\gamma$}  (choice3) ;
           
            \draw[color=black ,->,very thick] (choice2)  edge  node [very near start, anchor=center] (m6) {}   node [pos=0.1,left=3pt] {$1/2$}  (outcome0) ;
             \draw[color=black ,->,very thick] (choice2)  edge  node [very near start, anchor=center] (m5)  {} node [pos=0.1,right=3pt] {$1/2$}  (outcome4) ;
             \draw[color=black , very thick] (m5.center) edge [bend right=35]  (m6.center);

\draw[color=black ,->,very thick] (choice1)  edge     (outcome1) ;

\draw[color=black ,->,very thick] (choice3)  edge    (outcome3) ;

      \end{tikzpicture}
    }
  \end{subfigure}
    \begin{subfigure}[b]{0.36\textwidth}
\centering
    \resizebox{0.85\textwidth}{!}{%
      \begin{tikzpicture}[scale=1,auto,node distance=8mm,>=latex]
        \tikzstyle{round}=[thick,draw=black,circle]

 \node[round, draw=black,minimum size=9mm] (choice2) {};

  \node[round, left=12mm of choice2, minimum size=9mm] (choice1) {};

    \node[round, right=12mm of choice2, minimum size=9mm] (choice3) {};
   \node[round, above=6mm of choice2, minimum size=9mm,xshift=-80pt] (outcome1) {$4$};
      \node[left=6mm of outcome1] (label1) {\Large $\cN$:};
   \node[round, above=6mm of choice2, minimum size=9mm,xshift=-30pt] (outcome0) {$0$};
    \node[round, above=6mm of choice2, minimum size=9mm,xshift=80pt] (outcome3) {$0$};
   \node[round, above=6mm of choice2, minimum size=9mm,xshift=30pt] (outcome4) {$4$};
    \node[round, below=6mm of choice2, minimum size=9mm] (init) {$\sinit$};

         \draw[color=black ,->,very thick] (init)  edge    node [pos=0.3,left=3pt] {$\alpha$}  (choice1) ;
           \draw[color=black ,->,very thick] (init)  edge    node [pos=0.3,right] {$\beta$}  (choice2) ;
                      \draw[color=black ,->,very thick] (init)  edge    node [pos=0.3,right=3pt] {$\gamma$}  (choice3) ;
           
            \draw[color=black ,->,very thick] (choice2)  edge  node [very near start, anchor=center] (m6) {}   node [pos=0.1,left=3pt] {\LARGE$\frac12$}  (outcome0) ;
             \draw[color=black ,->,very thick] (choice2)  edge  node [very near start, anchor=center] (m5)  {} node [pos=0.1,right=3pt] {\LARGE$\frac12$}  (outcome4) ;
             \draw[color=black , very thick] (m5.center) edge [bend right=35]  (m6.center);

    \draw[color=black ,->,very thick] (choice1)  edge  node [very near start, anchor=center] (k6) {}   node [pos=0.1,left=3pt] {\LARGE$\frac14$}  (outcome1) ;
             \draw[color=black ,->,very thick] (choice1)  edge  node [very near start, anchor=center] (k5)  {} node [pos=0.1,right=3pt] {\LARGE$\frac34$}  (outcome0) ;
             \draw[color=black , very thick] (k5.center) edge [bend right=35]  (k6.center);
             
                 \draw[color=black ,->,very thick] (choice3)  edge  node [very near start, anchor=center] (h6) {}   node [pos=0.1,left=3pt] {\LARGE$\frac34$}  (outcome4) ;
             \draw[color=black ,->,very thick] (choice3)  edge  node [very near start, anchor=center] (h5)  {} node [pos=0.1,right=3pt] {\LARGE$\frac14$}  (outcome3) ;
             \draw[color=black , very thick] (h5.center) edge [bend right=35]  (h6.center);

      \end{tikzpicture}
    }
  \end{subfigure}
\caption{ MDPs modeling a communication protocol and the time required to process a message.}
\label{fig:communication}
\end{figure*}
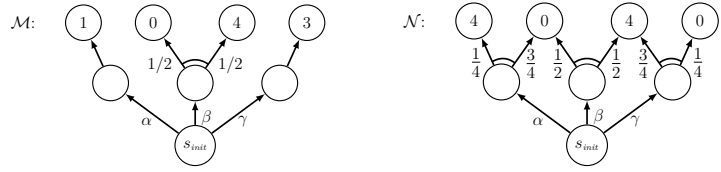

In software and hardware systems, uncertainty manifests in two distinct forms: \emph{non-determinism} and \emph{probabilism}. Non-determinism  emerges from, e.g., unknown operating environments, user interactions, or concurrent processes. Probabilistic behavior  arises through deliberate randomization in algorithms or can be inferred, e.g., from  probabilities of component failures.
 In this paper, we investigate the uncertainty in the value $X$ of some quantitative aspect of a system whose behavior is subject to non-determinism \emph{and} probabilism.
 On the one hand, we aim to quantify this uncertainty. In the spirit of the variance that quantifies uncertainty in purely probabilistic settings, we introduce the notion of \emph{demonic variance} that generalizes the variance in the presence of non-determinism.
 On the other hand, we provide a \emph{non-determinism score} (NDS) based on this demonic variance that measures the extent to which the uncertainty of $X$ can be ascribed to the non-determinism.

 As formal models, we use Markov decision processes (MDPs, see, e.g., \cite{Puterman}), one of the most prominent models combining non-determinism and probabilism, heavily used in verification, operations research, and artificial intelligence.
 The non-deterministic choices in 
an MDP are resolved by a \emph{scheduler}.
 Once a scheduler is fixed, the system behaves purely probabilistically.

\vspace{6pt}
\noindent\textbf{Demonic variance.}
For a random variable $Y$, the variance is equal to half the expected squared deviation of two independent copies $Y_1$ and $Y_2$ of $Y$: 
\[
\Var(Y) \eqdef \mathbb{E}((Y-\mathbb{E}(Y))^2)= \mathbb{E}(Y^2)-\mathbb{E}(Y)^2 = \frac{1}{2}\mathbb{E}(Y_1^2 - 2 Y_1Y_2 +Y_2^2) = \frac{1}{2}\mathbb{E}((Y_1-Y_2)^2) .
\]
For a quantity $X$ in an MDP $\cM$, we obtain a random variable $X_{\cM}^\sched$ for each scheduler $\sched$.\footnote{Note that the notation $X^{\sched}_{\cM}$ here differs from the notation used in the technical part of the paper.}
 The 
maximal variance
$
\Var^{\max}_{\cM} (X) \eqdef \sup_{\sched} \Var (X^{\sched}_{\cM}) 
$
can serve as a measure for the ``amount of probabilistic uncertainty'' regarding $X$ present in the MDP. 
However, in the presence of non-determinism, quantifying the spread of outcomes in terms of the squared deviation of two independent executions of a system 
gives rise to a whole new meaning:
We can allow the non-determinism to be resolved independently as well.
To this end, we consider two different scheduler $\sched_1$ and $\sched_2$ in two independent copies $\cM_1$ and $\cM_2$ of $\cM$ and define 
\[
\Var^{\sched_1,\sched_2}_{\cM}(X) \eqdef \frac{1}{2}\mathbb{E}((X^{\sched_1}_{\cM_1}-X^{\sched_2}_{\cM_2})^2).\]
If we now allow for a \emph{demonic} choice of the two schedulers making this uncertainty as large as possible, we arrive at the \emph{demonic~variance} 
$\Var^{\dem}_{\cM}(X) \eqdef \sup_{\sched_1,\sched_2} \Var^{\sched_1,\sched_2}_{\cM}(X)$
of $X$ in $\cM$.

\begin{example}
\label{ex:communication}
To illustrate a potential use case,
consider a
communication network in which messages are processed according to a randomized protocol employed on different hardware at the different nodes of the network.
A low worst-case expected processing time of the protocol is clearly desirable. In addition, however, large differences  in the processing time at different nodes 
make buffering necessary and  increase the risk of package losses.

Consider  the  MDPs $\cM$ and $\cN$ in Fig. \ref{fig:communication} modeling such a communication protocol.
Initially, a non-deterministic choice between $\alpha$, $\beta$, and $\gamma$ is made. Then, a final node containing the processing time $X$ is reached according to the depicted distributions.
In both MDPs, the expected value of $X$ lies between $1$ and $3$ for all schedulers $\sched$ -- with the values $1$ and $3$ being realized by $\alpha$ and $\gamma$. Furthermore, as the outcomes lie between $0$ and $4$, the distribution over outcomes leading to the highest possible variance of $4$ is the one that takes value $0$ and $4$ with probability $\frac12$ each, which is realized by a scheduler choosing $\beta$. So, $\Var^{\max}_{\cM}(X)=\Var^{\max}_{\cN}(X)=4$.

However, the demonic variances are different: Our results will show that the demonic variance is obtained by a pair of \emph{deterministic} schedulers that do not randomize over the non-deterministic choices.
In $\cM$, we can easily check that no combination of such schedulers $\sched$ and $\tsched$ leads to a value $\Var^{\sched,\tsched}_{\cM}(X)$ of more than $4=\Var^{\beta,\beta}_{\cM}(X)$ where $\beta$ denotes the scheduler that chooses $\beta$ with probability $1$.
In $\cN$, on the other hand, the demonic variance is
$
\Var^{\dem}_{\cN}(X)=\Var^{\alpha,\gamma}_{\cN}(X) = \frac12 \mathbb{E}((X_{\cN_1}^\alpha - X_{\cN_2}^\gamma)^2) =\frac12 \left(\frac{10}{16}\cdot 16\right) = 5$.

So, despite the  same  maximal variance and range of  expected values,
the worst-case expected squared deviation between two values of $X$ in independent executions is worse in $\cN$ than in $\cM$.
Hence, we  argue that the protocol modeled by $\cM$ should be preferred.
\end{example}

\vspace{6pt}
\noindent\textbf{Non-determinism score (NDS).}
By the definition of the demonic variance, it is clear that $\Var^{\dem}_{\cM}(X) \geq \Var^{\max}_{\cM}(X)$.
Under mild assumptions ensuring the well-definedness, we will  prove that  $\Var^{\dem}_{\cM}(X) \leq 2 \Var^{\max}_{\cM}(X)$, too.
So, the demonic variance is between 1 and 2 times as large as the maximal variance. We use this to define the
\emph{non-determinism score (NDS)}
\[
\nds(\cM,X) \eqdef \frac{\Var^{\dem}_{\cM}(X) - \Var^{\max}_{\cM}(X)}{ \Var^{\max}_{\cM}(X)} \in [0,1].
\]
The NDS captures how much larger the expected squared deviation of two outcomes can be made by resolving the non-determinism in two executions independently compared to how large it can be solely due to the probabilism under 
a single resolution of the non-determinism.

		  \begin{figure*}[t]
		  \begin{subfigure}[b]{0.24\textwidth}
\centering
    \resizebox{.85\textwidth}{!}{%
      \begin{tikzpicture}[scale=1,auto,node distance=8mm,>=latex]
        \tikzstyle{round}=[thick,draw=black,circle]

        \node[round, draw=black,minimum size=9mm] (goal) {$2$};
        \node[round, left=12mm of goal, minimum size=9mm] (s2) {$3$};
             \node[round, right=12mm of goal, minimum size=9mm] (s3) {$0$};
        \node[round, below=6mm of goal, xshift=-40pt, minimum size=9mm] (c1) { };
        \node[round, below=6mm of goal, xshift=40pt, minimum size=9mm] (c2) { };        
        \node[round, below=24mm of goal, minimum size=9mm] (init) {$\sinit$};
        
         \draw[color=black ,->,very thick] (init)  edge  node [very near start, anchor=center] (k6) {}   node [pos=0.3,left=3pt] {$1/2$}  (c1) ;
           \draw[color=black ,->,very thick] (init)  edge    node [very near start, anchor=center] (k5)  {}   node [pos=0.3,right=3pt] {$1/2$}  (c2) ;
           \draw[color=black , very thick] (k5.center) edge [bend right=35]  (k6.center);
           
            \draw[color=black ,->,very thick] (c1)  edge  node [very near start, anchor=center] (m6) {}   node [pos=0.1,left=3pt] {$1/3$}  (s2) ;
             \draw[color=black ,->,very thick] (c1)  edge  node [very near start, anchor=center] (m5)  {} node [pos=0.1,right=3pt] {$2/3$}  (goal) ;
             \draw[color=black , very thick] (m5.center) edge [bend right=35]  (m6.center);
             
              \draw[color=black ,->,very thick] (c2)  edge  node [very near start, anchor=center] (n6) {}   node [pos=0.1,left=3pt] {$1/2$}  (goal) ;
             \draw[color=black ,->,very thick] (c2)  edge   node [very near start, anchor=center] (n5)  {}  node [pos=0.1,right=3pt] {$1/2$}  (s3) ;
              \draw[color=black , very thick] (n5.center) edge [bend right=35]  (n6.center);

      \end{tikzpicture}
    }
  \caption{$\nds = 0$.}
  \label{fig:nds0}
  \end{subfigure}
    \begin{subfigure}[b]{0.24\textwidth}
\centering
    \resizebox{.85\textwidth}{!}{%
      \begin{tikzpicture}[scale=1,auto,node distance=8mm,>=latex]
        \tikzstyle{round}=[thick,draw=black,circle]

        \node[round, draw=black,minimum size=9mm] (goal) {$2$};
        \node[round, left=12mm of goal, minimum size=9mm] (s2) {$3$};
             \node[round, right=12mm of goal, minimum size=9mm] (s3) {$0$};
        \node[round, below=6mm of goal, xshift=-40pt, minimum size=9mm] (c1) { };
        \node[round, below=6mm of goal, xshift=40pt, minimum size=9mm] (c2) { };        
        \node[round, below=24mm of goal, minimum size=9mm] (init) {$\sinit$};
        
         \draw[color=black ,->,very thick] (init)  edge    node [pos=0.3,left=3pt] {$\alpha$}  (c1) ;
           \draw[color=black ,->,very thick] (init)  edge    node [pos=0.3,right=3pt] {$\beta$}  (c2) ;
           
            \draw[color=black ,->,very thick] (c1)  edge  node [very near start, anchor=center] (m6) {}   node [pos=0.1,left=3pt] {$1/3$}  (s2) ;
             \draw[color=black ,->,very thick] (c1)  edge  node [very near start, anchor=center] (m5)  {} node [pos=0.1,right=3pt] {$2/3$}  (goal) ;
             \draw[color=black , very thick] (m5.center) edge [bend right=35]  (m6.center);
             
              \draw[color=black ,->,very thick] (c2)  edge  node [very near start, anchor=center] (n6) {}   node [pos=0.1,left=3pt] {$1/2$}  (goal) ;
             \draw[color=black ,->,very thick] (c2)  edge   node [very near start, anchor=center] (n5)  {}  node [pos=0.1,right=3pt] {$1/2$}  (s3) ;
              \draw[color=black , very thick] (n5.center) edge [bend right=35]  (n6.center);

      \end{tikzpicture}
    }
  \caption{$\nds \approx 0.32$.}
  \label{fig:nds3}
  \end{subfigure}
   \begin{subfigure}[b]{0.24\textwidth}
\centering
    \resizebox{.85\textwidth}{!}{%
      \begin{tikzpicture}[scale=1,auto,node distance=8mm,>=latex]
        \tikzstyle{round}=[thick,draw=black,circle]

        \node[round, draw=black,minimum size=9mm] (goal) {$2$};
        \node[round, left=12mm of goal, minimum size=9mm] (s2) {$3$};
             \node[round, right=12mm of goal, minimum size=9mm] (s3) {$0$};
        \node[round, below=6mm of goal, xshift=-40pt, minimum size=9mm] (c1) { };
        \node[round, below=6mm of goal, xshift=40pt, minimum size=9mm] (c2) { };        
        \node[round, below=24mm of goal, minimum size=9mm] (init) {$\sinit$};
        
         \draw[color=black ,->,very thick] (init)  edge    node [pos=0.3,left=3pt] {$\alpha$}  (c1) ;
           \draw[color=black ,->,very thick] (init)  edge    node [pos=0.3,right=3pt] {$\beta$}  (c2) ;
           
            \draw[color=black ,->,very thick] (c1)  edge  node [very near start, anchor=center] (m6) {}   node [pos=0.1,left=3pt] {$1/3$}  (s2) ;
             \draw[color=black ,->,very thick] (c1)  edge  node [very near start, anchor=center] (m5)  {} node [pos=0.1,right=3pt] {$2/3$}  (goal) ;
             \draw[color=black , very thick] (m5.center) edge [bend right=35]  (m6.center);
             
              \draw[color=black ,->,very thick] (c2)  edge    node [pos=0.1,right=3pt] {$1$}  (s3) ;

      \end{tikzpicture}
    }
  \caption{$\nds \approx 0.92$.}
  \label{fig:nds9}
  \end{subfigure}
     \begin{subfigure}[b]{0.24\textwidth}
\centering
    \resizebox{.85\textwidth}{!}{%
      \begin{tikzpicture}[scale=1,auto,node distance=8mm,>=latex]
        \tikzstyle{round}=[thick,draw=black,circle]

        \node[round, draw=white,minimum size=9mm] (goal) {};
        \node[round, left=12mm of goal, minimum size=9mm] (s2) {$3$};
             \node[round, right=12mm of goal, minimum size=9mm] (s3) {$0$};
        \node[round, below=6mm of goal, xshift=-40pt, minimum size=9mm] (c1) { };
        \node[round, below=6mm of goal, xshift=40pt, minimum size=9mm] (c2) { };        
        \node[round, below=24mm of goal, minimum size=9mm] (init) {$\sinit$};
        
         \draw[color=black ,->,very thick] (init)  edge    node [pos=0.3,left=3pt] {$\alpha$}  (c1) ;
           \draw[color=black ,->,very thick] (init)  edge    node [pos=0.3,right=3pt] {$\beta$}  (c2) ;
           
            \draw[color=black ,->,very thick] (c1)  edge    node [pos=0.1,left=3pt] {$1$}  (s2) ;
                         
              \draw[color=black ,->,very thick] (c2)  edge    node [pos=0.1,right=3pt] {$1$}  (s3) ;

      \end{tikzpicture}
    }
  \caption{$\nds =1$.}
  \label{fig:nds1}
  \end{subfigure}
\caption{Example MDPs with different non-determinism scores (NDSs).}
\label{fig:nds}
\end{figure*}
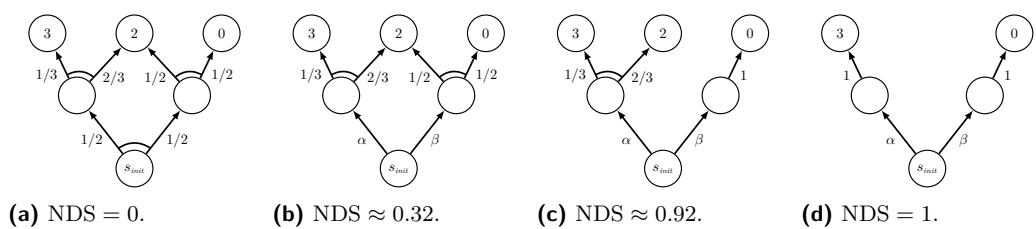

\begin{example}
For an illustration of the NDS, four simple MDPs and their NDSs are depicted in Figure \ref{fig:nds}. In all of the  MDPs except for the first one, a scheduler has to make a (randomized) choice over actions $\alpha$ and $\beta$ in the initial state $\sinit$.
Afterwards one of the terminal states is reached according to the specified probabilities. The terminal states are equipped with a weight that specifies the value of $X$ at the end of the execution.
For all of these MDPs, the maximal variance can be computed by expressing the variance in terms of the probability $p$ that $\alpha$ is chosen and maximizing the resulting quadratic function. In the interest of brevity, we do not present these computations. The pair of (deterministic) schedulers realizing the demonic variance 
always consists of the scheduler choosing $\alpha$ and the scheduler choosing $\beta$ making it  easy to compute the demonic variance in these examples.
\end{example}

\vspace{6pt}
\noindent\textbf{Potential applications.}
First of all, the demonic variance might serve as the basis for refined guarantees on the behavior of systems, in particular, when employed in different environments.
As a first  result in this direction, we will prove an analogue to Chebyshev's Inequality using the demonic variance. 
Further, as illustrated in Example \ref{ex:communication}, 
achieving a low demonic variance or NDS can be desirable when designing systems.
Hence, a reasonable synthesis task could be to design a system ensuring a high expected value of a quantity $X$ while keeping the demonic variance of $X$ below a threshold.

Secondly, the demonic variance and the NDS can serve to enhance the 
 explainability of a system's behavior, a topic of growing importance  in the area of formal verification (see, e.g.,  \cite{DBLP:conf/icalp/BaierD0JMPZ21} for an overview).
More concretely, the NDS can be understood as a measure assigning \emph{responsibility} for the scattering of a quantity $X$ in different executions to the non-determinism and the probabilism present in the system, respectively.
Further, considering the NDS for different starting states  makes it possible to pinpoint regions of the state space in which the non-determinism has a particularly high influence.
Notions of {responsibility} that quantify to which extent certain facets of the behavior of a system can be ascribed to certain components, states, or events
have been studied in various settings
 \cite{ChocklerH04,YazdanpanahDJAL19, BaierFM2021, MascleBFJK2021, BaierFM2021b}.

Finally, the NDS can also be understood as a measure for the power of control when non-determinism  models controllable aspects of a system. This interpretation could be useful, e.g., when designing exploration strategies in reinforcement learning.
Here, the task is to learn good strategies as fast as possible  by interacting with a system. One of the main challenges is to decide which regions of the state space to explore   (see \cite{Ladosz22} for a recent survey).
Estimations for the NDS  starting from different states could be useful here:
States from which the  NDS is high might be more promising to explore than
states from which the NDS is low as the difference in received rewards from such a state is largely subject to randomness.

\vspace{6pt}
\noindent\textbf{Contributions.}
Besides establishing general results for the demonic variance and the NDS, we investigate the two notions for weighted reachability and accumulated rewards.
For weighted reachability, terminal states of an MDP are equipped with a weight that is received if an execution ends in this state. For accumulated rewards, all states are assigned  rewards that are summed up along an execution.
The main contributions of this paper are as follows.
\begin{itemize}
\item
We introduce the novel notions of demonic variance and non-determinism score.
For general random variables $X$, we prove that the demonic variance is at most twice as large as the maximal variance.
Furthermore, we prove an analogue of 
Chebyshev's inequality. 
For the non-determinism score, we establish consequences of a score of $0$ or $1$. 
\item
In the process, we prove a result of independent interest using a topology on the space of schedulers that states that convergence with respect to this topology implies convergence of the corresponding probability measures. 
\item
For weighted reachability, we show that the
maximal and the demonic variance  can be computed via quadratic programs.
For the maximal variance, this results in a
polynomial-time algorithm; for the demonic variance, in a separable bilinear program of polynomial size yielding an exponential time upper bound.
Further, we establish that there is a memoryless scheduler maximizing the variance and 
 a pair of memoryless deterministic schedulers realizing the demonic variance.
\item
For accumulated rewards, we prove that 
 the maximal variance and an optimal finite-memory scheduler can be computed in exponential time.
Further, we prove that the demonic variance is realized by a pair of deterministic finite-memory schedulers which can be computed via a bilinear program of exponential size.
\end{itemize}

\vspace{6pt}
\noindent\textbf{Related work.}
We are not aware of investigations of notions similar to the demonic variance for MDPs.
Previous work on the variance in MDPs usually focused on the minimization of the variance.
In \cite{MannorTsitsiklis2011}, the problem to find schedulers that ensure a certain expected value while keeping the variance below a threshold is investigated
for accumulated rewards in the finite horizon setting. It is shown that deciding whether there is a scheduler ensuring variance $0$ is NP-hard.
In \cite{mandl1971variance}, the minimization of the variance of accumulated rewards and of the mean payoff is addressed with a focus on optimality equations and no algorithmic results.
The variance of accumulated weights in Markov chains is shown to be computable in polynomial time in \cite{verhoeff2004reward}.
For the mean payoff, algorithms were given to compute schedulers that achieve given bounds on the expectation and  notions of variance and variability in~\cite{brazdil2017trading}.

One objective incorporating the variance that has been studied on MDPs is the variance-penalized expectation (VPE) \cite{filar1989variance,collins1997finite,PiribauerSB22}. Here, the goal is to find a scheduler that maximizes the expected reward minus a penalty factor times the variance. 
In \cite{PiribauerSB22}, the objective is studied for accumulated rewards.
Methodically, our results for the maximal and demonic variance of accumulated rewards share similarities with the techniques of \cite{PiribauerSB22} and we 
 make use of some results proved there, such as the result that among expectation-optimal schedulers a variance-optimal memoryless deterministic scheduler can be computed in polynomial time.
 Nevertheless, the optimization of the  VPE inherently  requires the minimization of the variance. In particular, it is shown in \cite{PiribauerSB22} that deterministic schedulers are optimal for the VPE, while randomization is necessary for the maximization of the variance.

Besides the variance, several other notions that aim to bound the uncertainty of the outcome of some quantitative aspect in MDPs have been studied -- in particular, in the context of risk-averse optimization:
Given a probability $p$, quantiles for a quantity $X$ are the best bound $B$ such that $X$ exceeds $B$ with probability at  most $p$ in the worst or best case. For accumulated rewards in MDPs, quantiles have been studied
in  \cite{UB13,DBLP:conf/nfm/BaierDDKK14,HaaseKiefer15,DBLP:journals/fmsd/RandourRS17}.
The \emph{conditional value-at-risk} is a more involved measures  that quantifies how far the probability mass of  the tail of the probability distribution lies above a quantile.
In \cite{kretinsky2018}, this notion has been investigated for weighted reachability and mean payoff; in \cite{icalp2020} for accumulated rewards.
A further measure incentivizing a high expected value while keeping the probability of low outcomes small is the {entropic risk} measure. For accumulated rewards, this measure has been studied in \cite{mfcs2023} in stochastic games that extend MDPs with an adversarial  player.

Finally, as the demonic variance is a measure that looks at a system across different executions, there is a conceptual similarity to hyperproperties
\cite{clarkson2010hyperproperties,clarkson2014temporal}. For probabilistic systems, logics expressing hyperproperties that allow to quantify over different executions or schedulers have been introduced in \cite{abraham2018hyperpctl,dimitrova2020probabilistic}.

\section{Preliminaries}
\label{sec:prelim}

\noindent\textbf{Notations for Markov decision processes.}
A \emph{Markov decision process} (MDP) is a tuple $\mathcal{M} = (S,\Act,P,\sinit)$
where $S$ is a finite set of states,
$\Act$ a finite set of actions,
$P \colon S \times \Act \times S \to [0,1] \cap \Rational$  the
transition probability function, and
$\sinit \in S$ the initial state.
We require that
$\sum_{t\in S}P(s,\act,t) \in \{0,1\}$
for all $(s,\alpha)\in S\times \Act$.
We say that action $\alpha$ is \emph{enabled} in state $s$ iff $\sum_{t\in S}P(s,\act,t) =1$ and denote the set of all actions that are enabled in state $s$ by $\Act(s)$. We further require that $\Act(s) \not=\emptyset$ for all $s\in S$.
If for a state $s$ and all actions $\alpha\in Act(s)$, we have $P(s,\alpha,s)=1$, we say that $s$ is \emph{absorbing}.
The paths of $\cM$ are finite or
infinite sequences $s_0 \, \act_0 \, s_1 \, \act_1  \ldots$
where states and actions alternate such that
$P(s_i,\act_i,s_{i+1}) >0$ for all $i\geq0$.
For $\fpath =
    s_0 \, \act_0 \, s_1 \, \act_1 \,  \ldots \act_{k-1} \, s_k$,
$P(\fpath) =
    P(s_0,\act_0,s_1)
    \cdot \ldots \cdot P(s_{k-1},\act_{k-1},s_k)$
denotes the probability of $\fpath$ and $\last(\fpath)=s_k$ its last state. 
Often, we equip MDPs with a reward function $\rew\colon S\times \Act \to \mathbb{N}$.
The \emph{size} of $\cM$
is the sum of the number of states
plus the total sum of the encoding lengths in binary of the non-zero
probability values
$P(s,\alpha,s')$ as fractions of co-prime integers as well as the encoding length in binary of the rewards if a reward function is used.
{A \emph{Markov chain} is an MDP in which the set of actions is a singleton. In this case, we can drop the set of actions and consider a Markov chain as a tuple $\cM=(S,P,\sinit, \rew)$ where 
$P$ now is a function from $S\times S$ to $[0,1]$ and $\rew$ a function from $S$ to $\mathbb{N}$.}

{
An \emph{end component} of $\cM$ is a strongly connected sub-MDP formalized by a subset $S^\prime\subseteq S$ of states and a non-empty subset $\mathfrak{A}(s)\subseteq \Act(s)$  for each state $s\in S^\prime$ such that for each $s\in S^\prime$, $t\in S$ and $\alpha\in \mathfrak{A}(s)$ with $P(s,\alpha,t)>0$, we have $t\in S^\prime$ and such that in the resulting sub-MDP all states are reachable from each other.
An end-component is a $0$-end-component if
it only contains state-action-pairs with reward $0$.}
Given two MDPs $\cM=(S,\Act, P,\sinit)$ and $\cN=(S^\prime,\Act^\prime, P^\prime,\sinit^\prime)$, we define the (synchronous) product $\cM\otimes \cN$ as the tuple 
$(S\times S^\prime, \Act\times \Act^\prime, P^{\otimes}, (\sinit,\sinit^\prime))$ where  we define
$
P^{\otimes} ((s,s^\prime), (\alpha,\beta) , (t,t^\prime)) = P(s,\alpha, t) \cdot P(s^\prime, \beta, t^\prime)
$
for all $(s,s^\prime), (t,t^\prime) \in S\times S^\prime$ and $(\alpha,\beta)\in \Act \times \Act^\prime$.

%%%%%%%%%%%%%%%%%%%%%%%%%%%%%%%%%%%%%%%%%%%%

\noindent\textbf{Schedulers.}
A \emph{ scheduler} (also called \emph{policy}) for $\cM$
is a function $\sched$ that assigns to each finite path $\fpath$
a probability distribution over $\Act(\last(\fpath))$.
If  $\sched(\fpath)=\sched(\fpath^\prime)$ for all finite paths $\fpath$ and $\fpath^\prime$ with $\last(\fpath)=\last(\fpath^\prime)$,
we say that $\sched$ is \emph{memoryless}. In this case, we also view  schedulers as functions mapping states $s\in S$ to probability distributions over $\Act(s)$.
A  scheduler $\sched$ is called deterministic if $\sched(\fpath)$ is a Dirac distribution
for each finite path $\fpath$, in which case we also view the  scheduler as a mapping to actions in $\Act(\last(\fpath))$.
Given two MDPs $\cM=(S,\Act, P,\sinit)$ and $\cN=(S^\prime,\Act^\prime, P^\prime,\sinit^\prime)$ and two schedulers $\sched$ and $\tsched$ for $\cM$ and $\cN$, respectively,
we define the product scheduler $\sched\otimes \tsched$ for $\cM\otimes \cN$ by defining for a finite path $\pi = (s_0,t_0)\, (\alpha_0 ,\beta_0) \, (s_1,t_1)\, \dots \, (s_k,t_k)$:
$
\sched\otimes \tsched (\pi) (\alpha, \beta) = \sched(s_0 \, \alpha_0 \, \dots \, s_k) (\alpha) \cdot  \tsched(t_0 \, \beta_0  \, \dots \, t_k) (\beta)
$
for all $(\alpha, \beta)\in \Act\times \Act^\prime$.

\noindent\textbf{Probability measure.}
We write $\Pr^{\sched}_{\cM,s}$ 
to denote the probability measure induced by a  scheduler $\sched$ and a state $s$ of an MDP $\cM$.
It is defined on the $\sigma$-algebra generated by the {cylinder sets} $\Cyl(\pi)$  of all infinite extensions of a finite path  $\pi =
    s_0 \, \act_0 \, s_1 \, \act_1 \,  \ldots \act_{k-1} \, s_k$ starting in state $s$, i.e., $s_0=s$, by assigning  to $\Cyl(\pi)$ the probability that $\pi$ is realized under $\sched$, which is
   $P^{\sched}(\pi) \eqdef \prod_{i=0}^{k-1}\sched(s_0\, \act_0 \dots \, s_i )(\act_i) \cdot P(s_i,\act_0,s_{i+1}) $. 
This can be extended to a unique probability measure on the mentioned $\sigma$-algebra. For details, see \cite{Puterman}.
For a random variable $X$, i.e., a measurable function defined on  infinite paths in $\cM$, we denote the expected value of $X$ under a  scheduler $\sched$ and state $s$ by $\mathbb{E}^{\sched}_{\cM,s}(X)$.
We define
$\mathbb{E}^{\min}_{\cM,s}(X) \eqdef \inf_{\sched} \mathbb{E}^{\sched}_{\cM,s}(X)$
and
$\mathbb{E}^{\max}_{\cM,s}(X) \eqdef \sup_{\sched} \mathbb{E}^{\sched}_{\cM,s}(X)$.
The variance of $X$ under the probability measure determined by $\sched$ and $s$ in $\cM$ is denoted by $\Var^{\sched}_{\cM,s}(X)$ and defined by
$
    \Var^{\sched}_{\cM,s}(X)\eqdef\mathbb{E}^{\sched}_{\cM,s}((X-\mathbb{E}^{\sched}_{\cM,s}(X))^2)=\mathbb{E}^{\sched}_{\cM,s}(X^2) -\mathbb{E}^{\sched}_{\cM,s}(X)^2$.
We define $\Var^{\max}_{\cM,s}(X) \eqdef \sup_{\sched} \Var^{\sched}_{\cM,s}(X)$.
If $s=\sinit$, we sometimes drop the subscript $s$ in $\Pr_{\cM,s}^{\sched}$,  $\mathbb{E}_{\cM,s}^\sched$ and $\Var^{\sched}_{\cM,s}(X)$.
\vspace{4pt}

\noindent\textbf{Mixing  schedulers.}
Intuitively, we often want to use a  scheduler that initially decides  to behave like a  scheduler $\sched$ and then to stick to this  scheduler with probability $p$ and to behave like a  scheduler $\tsched$ with probability $1-p$. As this intuitive description does not match the definition of  schedulers as functions from finite paths\footnote{This description would be admissible if we allowed stochastic memory updates (see, e.g., \cite{brazdil2014markov}).}, we provide a formal definition:
For two  schedulers $\sched$ and $\tsched$ and $p\in [0,1]$, we use $p\sched \oplus (1-p)\tsched$ to denote the following  scheduler.
For a path $\pi =
    s_0 \, \act_0 \, s_1 \, \act_1 \,  \ldots \act_{k-1} \, s_k$, we define for an action $\alpha$ enabled in $s_k$
    \begin{align*}
     (p\sched \oplus (1-p)\tsched) (\pi) (\alpha)   
     \eqdef \, &\frac{p\cdot P^\sched(\pi)\cdot \sched(\pi)(\alpha)}{p\cdot P^\sched(\pi) + (1-p) \cdot P^{\tsched}(\pi)}  + \frac{(1-p)\cdot P^\tsched(\pi)\cdot \tsched(\pi)(\alpha)}{p\cdot P^\sched(\pi) + (1-p) \cdot P^{\tsched}(\pi)} .
    \end{align*}
   This is well-defined for any path that has positive probability under $\sched$ or $\tsched$. 
   The following result is folklore; a proof is included in Appendix \ref{app:prelim}.
    
    \begin{restatable}{proposition}{propmixing}
    Let the  schedulers $\sched$ and  $\tsched$ and the value $p$ be as above. Then, for any path $\pi =
    s_0 \, \act_0 \, s_1 \, \act_1 \,  \ldots \act_{k-1} \, s_k$, we have
    $
    P^{p\sched \oplus (1-p)\tsched (\pi) }(\pi) = p P^\sched(\pi) + (1-p) P^\tsched(\pi)
    $.
    \end{restatable}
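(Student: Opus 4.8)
The plan is to prove the identity by induction on the length $k$ of the path $\pi = s_0\,\act_0\,s_1 \ldots \act_{k-1}\,s_k$, exploiting the fact that $P^\usched$ factorizes over transitions for every scheduler $\usched$. Writing $\mixsched = p\sched \oplus (1-p)\tsched$, the single observation that makes everything work is that the right-hand side $p\,P^\sched(\pi) + (1-p)\,P^\tsched(\pi)$ is exactly the normalizing denominator appearing in the definition of $\mixsched(\pi)$. In the base case $k=0$ the path is the single state $s_0$ and $P^\usched(s_0)=1$ for every $\usched$ by the empty-product convention, so both sides equal $p+(1-p)=1$.

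For the inductive step I would let $\finpath = s_0\,\act_0 \ldots \act_{k-2}\,s_{k-1}$ be the length-$(k-1)$ prefix of $\pi$, so that for every scheduler $\usched$ one has $P^\usched(\pi) = P^\usched(\finpath)\cdot \usched(\finpath)(\act_{k-1})\cdot P(s_{k-1},\act_{k-1},s_k)$, and apply the induction hypothesis $P^\mixsched(\finpath) = p\,P^\sched(\finpath) + (1-p)\,P^\tsched(\finpath)$ to the prefix. I would then distinguish two cases according to the value of this combined prefix probability. If it is strictly positive, then $\mixsched$ is well-defined at $\finpath$, and multiplying the induction hypothesis by $\mixsched(\finpath)(\act_{k-1})$ causes the induction-hypothesis factor to cancel the denominator in the definition of $\mixsched$, leaving
\[
P^\mixsched(\finpath)\cdot \mixsched(\finpath)(\act_{k-1}) = p\,P^\sched(\finpath)\,\sched(\finpath)(\act_{k-1}) + (1-p)\,P^\tsched(\finpath)\,\tsched(\finpath)(\act_{k-1}).
\]
Multiplying through by $P(s_{k-1},\act_{k-1},s_k)$ and recognizing the two summands as $p\,P^\sched(\pi)$ and $(1-p)\,P^\tsched(\pi)$ then gives the claim for $\pi$.

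The remaining case is when the combined probability of the prefix $\finpath$ is zero. Here the induction hypothesis already yields $P^\mixsched(\finpath)=0$, whence $P^\mixsched(\pi)=0$ as well, since $P^\mixsched(\finpath)$ is a factor of $P^\mixsched(\pi)$ -- and this holds irrespective of how $\mixsched$ is (arbitrarily) defined on the zero-probability prefix $\finpath$ on which the formula does not apply. On the other side, monotonicity of $P^\sched$ and $P^\tsched$ along prefixes forces $p\,P^\sched(\pi)+(1-p)\,P^\tsched(\pi) \le p\,P^\sched(\finpath)+(1-p)\,P^\tsched(\finpath) = 0$, so both sides vanish. I expect this well-definedness bookkeeping -- splitting on whether the prefix has positive probability and checking that the degenerate branch gives $0=0$ -- to be the only delicate point; the substantive content is the single cancellation of the induction-hypothesis factor against the normalizing denominator, which needs no further computation.
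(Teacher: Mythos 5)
Your proposal is correct and follows essentially the same route as the paper: induction on the path length, with the single key step being the cancellation of the normalizing denominator in the definition of $p\sched \oplus (1-p)\tsched$ against the induction-hypothesis factor for the prefix. Your explicit handling of the zero-probability prefix is in fact slightly more careful than the paper's proof, whose cancellation step tacitly assumes the prefix has positive probability under $\sched$ or $\tsched$.
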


We conclude
$
\Pr_{\cM,s}^{p\sched \oplus (1-p)\tsched} (A) = p\Pr_{\cM,s}^{\sched}(A)  + (1-p) \Pr_{\cM,s}^{\tsched}(A)$
for any measurable set of paths $A$.
Hence, we can  think of the  scheduler $p\sched \oplus (1-p)\tsched$ as behaving like $\sched$ with probability $p$ and like $\tsched$ with probability $(1-p)$.
In particular, we can also conclude that for a random variable $X$, we have 
$
\mathbb{E}_{\cM,s}^{p\sched \oplus (1-p)\tsched} (X) = p\mathbb{E}_{\cM,s}^{\sched}(X)  + (1-p) \mathbb{E}_{\cM,s}^{\tsched}(X)
$.
For the variance, we obtain the following as shown in Appendix \ref{app:prelim}.

\begin{restatable}{lemma}{variancemix}
\label{lem:variance_mix}
Given $\cM$, $X$, and two schedulers $\sched_1$ and $\sched_2$, as well as $p\in [0,1]$, let $\tsched = p \sched_1 \oplus (1-p) \sched_2$. Then,
$
\Var^{\tsched}_{\cM}(X) = p \Var_{\cM}^{\sched_1}(X) + (1-p) \Var_{\cM}^{\sched_2}(X) + p (1-p) (\mathbb{E}_{\cM}^{\sched_1}(X) - \mathbb{E}_{\cM}^{\sched_2}(X))^2$.
\end{restatable}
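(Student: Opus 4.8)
The plan is to reduce the lemma to the probability-measure decomposition recorded just before the statement, namely $\Pr^{p\sched_1 \oplus (1-p)\sched_2}_{\cM,s}(A) = p\Pr^{\sched_1}_{\cM,s}(A) + (1-p)\Pr^{\sched_2}_{\cM,s}(A)$ for every measurable set $A$, from which the expectation of any random variable under $\tsched$ is the corresponding convex combination. The only conceptual point is to apply this linearity not merely to $X$ but also to $X^2$, which is itself a measurable function on the paths; this is what lets the computation go through without revisiting the definition of the mixing scheduler.

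First I would abbreviate $m_i = \mathbb{E}^{\sched_i}_{\cM}(X)$ and $v_i = \Var^{\sched_i}_{\cM}(X)$ for $i \in \{1,2\}$, so that $\mathbb{E}^{\sched_i}_{\cM}(X^2) = v_i + m_i^2$ by the definition of the variance. Applying the linearity of expectation to $X$ and to $X^2$ then yields $\mathbb{E}^{\tsched}_{\cM}(X) = pm_1 + (1-p)m_2$ and $\mathbb{E}^{\tsched}_{\cM}(X^2) = p(v_1 + m_1^2) + (1-p)(v_2 + m_2^2)$.

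The last step is to substitute these into $\Var^{\tsched}_{\cM}(X) = \mathbb{E}^{\tsched}_{\cM}(X^2) - \mathbb{E}^{\tsched}_{\cM}(X)^2$ and expand the square. Collecting terms, the coefficient of $m_1^2$ becomes $p - p^2$ and that of $m_2^2$ becomes $(1-p) - (1-p)^2$, both equal to $p(1-p)$, and together with the cross term $-2p(1-p)m_1 m_2$ they combine into $p(1-p)(m_1 - m_2)^2$; the surviving $pv_1 + (1-p)v_2$ gives the convex combination of the individual variances, matching the claimed formula.

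This is the standard variance-decomposition identity (the law of total variance for the two-point ``which scheduler'' split), so I expect no real obstacle in the algebra. The one point deserving explicit mention — and the closest thing to a pitfall — is the justification that the established linearity of $\mathbb{E}^{\tsched}_{\cM}(\cdot)$ covers $X^2$ as well as $X$, so I would state that step explicitly rather than quietly reuse the identity only for $X$.
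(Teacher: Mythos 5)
Your proposal is correct and follows essentially the same route as the paper's proof: both apply the linearity of expectation under the mixing scheduler to $X$ and to $X^2$, write $\Var^{\tsched}_{\cM}(X) = \mathbb{E}^{\tsched}_{\cM}(X^2) - \mathbb{E}^{\tsched}_{\cM}(X)^2$, and regroup the expanded square into $p(1-p)(\mathbb{E}^{\sched_1}_{\cM}(X)-\mathbb{E}^{\sched_2}_{\cM}(X))^2$ plus the convex combination of variances. Your explicit remark that the measure-level decomposition justifies linearity for $X^2$ (not only for $X$) is a point the paper leaves implicit, but the argument is the same.
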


\noindent
\textbf{Topology and convergence of measures.}
Given a family of topological spaces $((S_i,\tau_i))_{i\in I}$, the product topology $\tau$ on $\prod_{i\in I}S_i$ is the coarsest topology such that the projections
$
p_i \colon \prod_{i\in I}S_i  \to S_i , \quad
(s_i)_{i\in I}  \mapsto s_i
$
are continuous for all $i\in I$.
For measures $(\mu_j)_{j\in \mathbb{N}}$ and $\mu$ on a measure space $(\Omega,\Sigma)$ where $\Omega$ is a metrizable topological space and $\Sigma$ the Borel $\sigma$-algebra on $\Omega$, we say
that the sequence $(\mu_j)_{j\in \mathbb{N}}$ \emph{weakly converges} to $\mu$ if for all bounded continuous functions $f\colon \Omega \to \mathbb{R}$, we have 
$
\lim_{j\to \infty} \int f \mathrm{d}\mu_j = \int f \mathrm{d}\mu$.
The set of infinite paths $\Pi_{\cM}$ of an MDP $\cM$ with the topology generated by the cylinder sets is metrizable as we can define the metric $d(\pi,\pi^\prime)=2^{-\ell}$ where $\ell$ is the length of the longest common prefix of $\pi$ and $\pi^\prime$.

\section{Demonic variance and non-determinism score}
\label{sec:demonic}

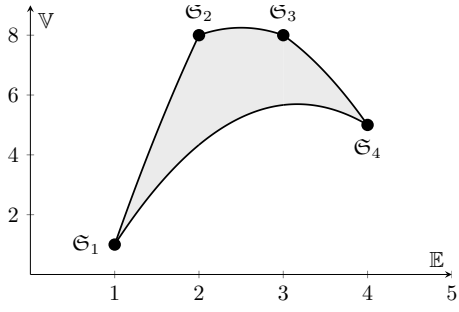
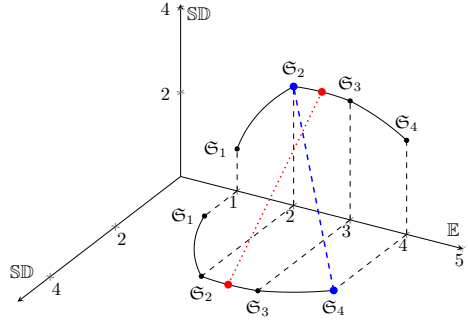
\begin{figure*}[t]
\vspace{-12pt}
\begin{subfigure}[b]{.45\textwidth}
\resizebox{1\textwidth}{!}{
 \begin{tikzpicture}[yscale=0.8,xscale=.8]
      \begin{axis}[
      	height=7cm, width=10cm,
          axis lines = middle,
          xlabel={$\mathbb{E}$},
          ylabel={$\Var$},
          ymin=0, ymax=9,
          xmin=0, xmax=5,
                 ]
        \addplot [name path=A,domain=1:2,
          samples=300,
          color=black,thick]
        {(10*x-x*x-8)};
         \addplot [name path=D,domain=1:2,
          samples=300,
          color=black,thick]
        {-x*x+19/3*x-13/3};
        \addplot [name path=B,domain=2:3,
          samples=300,
          color=black,thick]
        {(6*x-x*x-(x-2))};
         \addplot [name path=E,domain=2:3,
          samples=300,
          color=black,thick]
        {-x*x+19/3*x-13/3};
        \addplot [name path=C,domain=3:4,
          samples=300,
          color=black, thick]
        {(3*x-x*x+9+x-4)};
        \addplot [name path=F,domain=3:4,
          samples=300,
          color=black,thick]
        {-x*x+19/3*x-13/3};

        \addplot[black!8] fill between[of=A and D];
        \addplot[black!8] fill between[of=B and E];
        \addplot[black!8] fill between[of=C and F];

        \node[label={180:{$\sched_1$}},circle,fill,inner sep=2pt] at (axis cs:1,1) {};
        \node[label={90:{$\sched_2$}},circle,fill,inner sep=2pt] at (axis cs:2,8) {};
        \node[label={90:{$\sched_3$}},circle,fill,inner sep=2pt] at (axis cs:3,8) {};
        \node[label={270:{$\sched_4$}},circle,fill,inner sep=2pt] at (axis cs:4,5) {};
      \end{axis}
    \end{tikzpicture}
    }
    \caption{Possible combinations of variance and expectation in Example~\ref{ex:graphical}.}
    \label{fig:2d}
\end{subfigure}
\hspace{12pt}
\begin{subfigure}[b]{.45\textwidth}
\resizebox{1\textwidth}{!}{
\begin{tikzpicture}
\begin{axis}[
     axis lines = center,   
   %y dir = reverse,
   xmin=0, xmax=5, 
   ymin=0, ymax=5, 
   zmin=0, zmax=4.1,
    xlabel={$\mathbb{SD}$},
    ylabel={$\mathbb{E}$},
    zlabel={$\mathbb{SD}$},
%    yticklabel=\empty,
%    xticklabel=\empty,
%    zticklabel=\empty,
   %ylabel style={anchor=north }, %xshift=-160,yshift=50
    %ticks=none,
    samples=50,
    view={120}{40} % Adjust the view angle
]
 \addplot3 [
        domain=1:2,
        samples=60,
        samples y=0,
    ] (
{0},
{x},
{(10*x-x*x-8)^0.5}
); 
\addplot3 [
        domain=2:3,
        samples=60,
        samples y=0,
    ] (
{0},
{x},
{(6*x-x*x-(x-2))^0.5}
); 
\addplot3 [
        domain=3:4,
        samples=60,
        samples y=0,
    ] (
{0},
{x},
{(3*x-x*x+9+x-4)^0.5}
);

 \node[label={180:{$\sched_1$}},circle,fill,inner sep=1pt] at (axis cs:0,1,1) {};
  \draw[dashed] (axis cs:0,1,0) -- (axis cs:0,1,1);
  \node[label={90:{$\sched_2$}},circle,fill,inner sep=1pt] at (axis cs:0,2,8^0.5) {};
    \draw[dashed] (axis cs:0,2,0) -- (axis cs:0,2,8^0.5);
   \node[label={90:{$\sched_3$}},circle,fill,inner sep=1pt] at (axis cs:0,3,8^0.5) {};
       \draw[dashed] (axis cs:0,3,0) -- (axis cs:0,3,8^0.5);
    \node[label={90:{$\sched_4$}},circle,fill,inner sep=1pt] at (axis cs:0,4,5^0.5) {};
        \draw[dashed] (axis cs:0,4,0) -- (axis cs:0,4,5^0.5);

     \addplot3 [
        domain=1:2,
        samples=60,
        samples y=0,
    ] (
    {(10*x-x*x-8)^0.5},
{x},
{0}
); 
\addplot3 [
        domain=2:3,
        samples=60,
        samples y=0,
    ] (
{(6*x-x*x-(x-2))^0.5},
{x},
{0}
); 
\addplot3 [
        domain=3:4,
        samples=60,
        samples y=0,
    ] (
{(3*x-x*x+9+x-4)^0.5},
{x},
{0}
); 

 \node[label={180:{$\sched_1$}},circle,fill,inner sep=1pt] at (axis cs:1,1,0) {};
   \draw[dashed] (axis cs:0,1,0) -- (axis cs:1,1,0);
  \node[label={270:{$\sched_2$}},circle,fill,inner sep=1pt] at (axis cs:8^0.5,2,0) {};
    \draw[dashed] (axis cs:0,2,0) -- (axis cs:8^0.5,2,0);
   \node[label={270:{$\sched_3$}},circle,fill,inner sep=1pt] at (axis cs:8^0.5,3,0) {};
     \draw[dashed] (axis cs:0,3,0) -- (axis cs:8^0.5,3,0);
    \node[label={270:{$\sched_4$}},circle,fill,inner sep=1pt] at (axis cs:5^0.5,4,0) {};
      \draw[dashed] (axis cs:0,4,0) -- (axis cs:5^0.5,4,0);

      \node[color=blue,circle,fill,inner sep=1.5pt] at (axis cs:5^0.5,4,0) {};
         \node[color=blue,circle,fill,inner sep=1.5pt] at (axis cs:0,2,8^0.5) {};
      
         \draw[dashed,thick,color=blue] (axis cs:5^0.5,4,0) -- (axis cs:0,2,8^0.5);
         
         \node[color=red,circle,fill,inner sep=1.5pt] at (axis cs:8.25^0.5,2.5,0) {};
         \node[color=red,circle,fill,inner sep=1.5pt] at (axis cs:0,2.5,8.25^0.5) {};
         
          \draw[dotted,thick,color=red] (axis cs:8.25^0.5,2.5,0) -- (axis cs:0,2.5,8.25^0.5);

\end{axis}
\end{tikzpicture}
}
\caption{Plot of the standard deviation over the expectation on two orthogonal planes.}
\label{fig:3d}
\end{subfigure}
\caption{Graphical illustration of the task to find the demonic variance (see  Example \ref{ex:graphical}).}
\label{fig:graphical}
\end{figure*}

In this section, we formally define the demonic variance.
After proving first auxiliary results, we prove an analogue of Chebyshev's Inequality using the demonic variance.
Then, we introduce the non-determinism score and investigate necessary and sufficient conditions for this score to be $0$ or $1$.
Proofs omitted here can be found in Appendix \ref{app:demonic}.

Throughout this section, let $\cM= (S,\Act, P,\sinit)$ be an MDP and let $X$ be a random variable, i.e., a Borel measurable function on the infinite paths of $\cM$.
We will work under two assumptions that ensure that all notions are well-defined:
First, note that $\Var^{\max}_{\cM}(X)=0$ implies that there is a constant $c$ such that under all schedulers $\sched$, we have $\Pr_{\cM}^{\sched}(X=c)=1$ -- an uninteresting case.
Furthermore, for  meaningful definitions of demonic variance and non-determinism score, we need that the expected value and the variance of $X$ in $\cM$ are finite. Hence, we work under the following assumption:

\begin{assumption}
\label{ass:positivevar}
\label{ass:bounded}
We assume that 
$0<\Var^{\max}_{\cM}(X)<\infty$ and that
$
\sup_{\sched} \left|   \mathbb{E}^{\sched}_{\cM}(X) \right| < \infty$.
\end{assumption}

\subsection{Demonic variance}

As described in the introduction, the idea behind the demonic variance is to quantify the expected squared deviation of $X$ in two independent executions of $\cM$, in which the non-determinism is resolved independently as well.
We use the following notation: Given a path in $\cM\otimes\cM$ consisting of a sequence of pairs of states and pairs of actions, we denote by $X_1$ and $X_2$ the function $X$ applied to the projection of the path on the first component and  on the second component, respectively.
Given two schedulers $\sched_1$ and $\sched_2$ for $\cM$, we define 
\[
\Var^{\sched_1,\sched_2}_{\cM}(X) \eqdef   \frac{1}{2}\mathbb{E}^{\sched_1\otimes \sched_2}_{\cM\otimes \cM} ((X_1 - X_2)^2).
\]
Intuitively, in this definition two independent executions of $\cM$ are run in parallel while the non-determinism  is resolved by $\sched_1$ in the first execution and  by $\sched_2$  in the second component.
As the two components in the  products $\cM\otimes \cM$ and $\sched_1\otimes \sched_2$ are independent, the resulting distributions of $X$ in the two components, i.e., $X_1$ and $X_2$ are independent as well.
The factor $\frac{1}{2}$ is included as for a random variable $Y$, this factor also appears in the representation $\Var(Y)=\frac{1}{2}\mathbb{E}((Y_1-Y_2)^2)$ for two independent copies $Y_1$ and $Y_2$ of $Y$.

The \emph{demonic variance} is now the worst-case value when ranging over all pairs of schedulers:
\[
\Var^{\dem}_{\cM}(X) \eqdef  \sup_{\sched_1,\sched_2} \frac{1}{2}\mathbb{E}^{\sched_1\otimes \sched_2}_{\cM\otimes \cM} ((X_1 - X_2)^2).
\]
A first simple, but useful, result allows us to express $\Var^{\sched_1,\sched_2}_{\cM}(X)$ in terms of the expected values and variances of $X$ under $\sched_1$ and $\sched_2$.

\begin{restatable}{lemma}{variancetwo}
\label{lem:variance_two}
Given two schedulers $\sched_1$ and $\sched_2$ for $\cM$, we have
\[
\Var^{\sched_1,\sched_2}_{\cM}(X)  = \frac{1}{2} \left( \Var_{\cM}^{\sched_1}(X) + \Var_{\cM}^{\sched_2}(X) + (\mathbb{E}_{\cM}^{\sched_1}(X) - \mathbb{E}_{\cM}^{\sched_2}(X))^2 \right).
\]
\end{restatable}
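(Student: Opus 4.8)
The plan is to expand the squared difference $(X_1 - X_2)^2$ and exploit the product structure of $\cM \otimes \cM$ together with $\sched_1 \otimes \sched_2$ to identify the resulting expectations with the per-scheduler quantities on the right-hand side.

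First I would establish the key structural fact that the measure $\Pr^{\sched_1 \otimes \sched_2}_{\cM \otimes \cM}$ is the product of the two marginal measures $\Pr^{\sched_1}_{\cM}$ and $\Pr^{\sched_2}_{\cM}$ along the two projections. This follows directly from the definitions in Section~\ref{sec:prelim}: the transition probabilities of the product MDP factor as $P^{\otimes}((s,s'),(\alpha,\beta),(t,t')) = P(s,\alpha,t)\cdot P(s',\beta,t')$, and the product scheduler factors analogously, so the probability of every cylinder set of $\cM\otimes\cM$ equals the product of the probabilities of its two projected cylinders. By uniqueness of the extension to the generated $\sigma$-algebra, the whole measure factors. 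Consequently $X_1$ and $X_2$, being $X$ applied to the two projections, are \emph{independent}, and the marginal law of $X_i$ under $\Pr^{\sched_1\otimes\sched_2}_{\cM\otimes\cM}$ coincides with the law of $X$ under $\Pr^{\sched_i}_{\cM}$. In particular, $\mathbb{E}^{\sched_1\otimes\sched_2}_{\cM\otimes\cM}(X_i^2) = \mathbb{E}^{\sched_i}_{\cM}(X^2)$ and $\mathbb{E}^{\sched_1\otimes\sched_2}_{\cM\otimes\cM}(X_1 X_2) = \mathbb{E}^{\sched_1}_{\cM}(X)\cdot\mathbb{E}^{\sched_2}_{\cM}(X)$.

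Given these facts, the computation is routine. Expanding the square and applying independence to the cross term gives
\[
\mathbb{E}^{\sched_1\otimes\sched_2}_{\cM\otimes\cM}((X_1-X_2)^2) = \mathbb{E}^{\sched_1}_{\cM}(X^2) + \mathbb{E}^{\sched_2}_{\cM}(X^2) - 2\,\mathbb{E}^{\sched_1}_{\cM}(X)\,\mathbb{E}^{\sched_2}_{\cM}(X),
\]
and substituting $\mathbb{E}^{\sched_i}_{\cM}(X^2) = \Var^{\sched_i}_{\cM}(X) + \mathbb{E}^{\sched_i}_{\cM}(X)^2$ and regrouping the remaining expectation terms into the perfect square $(\mathbb{E}^{\sched_1}_{\cM}(X) - \mathbb{E}^{\sched_2}_{\cM}(X))^2$ yields the claim after multiplying by $\tfrac12$.

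The only genuine subtlety — and hence the step I would spell out most carefully — is the justification of independence and the matching of marginals. Assumption~\ref{ass:bounded} guarantees that all expectations and variances involved are finite, so splitting $\mathbb{E}((X_1-X_2)^2)$ into a sum of integrable terms and factoring the cross term via the product measure is legitimate; without this integrability the expansion of the square could be ill-defined. I expect the measure-factorization argument to be the main obstacle, though it is essentially bookkeeping once the product constructions are unwound on cylinder sets.
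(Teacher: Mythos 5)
Your proposal is correct and follows essentially the same route as the paper's proof: expand $(X_1-X_2)^2$, use independence of $X_1$ and $X_2$ under the product measure to factor the cross term and identify marginals, then substitute $\mathbb{E}(X^2) = \Var(X) + \mathbb{E}(X)^2$ and regroup into the perfect square. The paper simply cites independence as a one-word justification (having asserted the product-measure factorization in the surrounding text), whereas you spell out the cylinder-set argument and the integrability provided by Assumption~\ref{ass:bounded} --- a sound elaboration of the same proof rather than a different one.
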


This lemma allows us to provide an insightful graphical interpretation of the demonic variance using 
 the standard deviation $\mathbb{SD}(X)\eqdef \sqrt{\Var(X)}$ of a random variable $X$:
 \begin{example}
 \label{ex:graphical}
 Suppose in an MDP $\cM$, there are four deterministic scheduler $\sched_1,\dots,\sched_4$ with expected values $1$, $2$, $3$, and $4$ and variances $1$, $8$, $8$, and $5$ for a random variable $X$.
 Lemma \ref{lem:variance_mix} allows us to compute the variances of schedulers obtained by randomization leading to parabolic line segments in the expectation-variance-plane as depicted in
 Figure \ref{fig:2d} (see also \cite{PiribauerSB22}). Further randomizations also make it possible to realize any combination of expectation and variance in the interior of the resulting shape.
 When looking for the maximal variance and the demonic variance, only the upper bound of  this shape is relevant.
 
 In Figure \ref{fig:3d},
 we now depict the standard deviations of schedulers on this upper bound over the expectation twice on two orthogonal planes.
 Clearly, the highest standard deviation (and consequently variance) is obtained for the expected value $2.5$ in this example.
 The red dotted line of length $
\sqrt{2\Var^{\max}_{\cM}(X)}$ connects the two points corresponding to this maximum on the two planes.
Considering $\sched_2$ and $\sched_4$, we can also find the value 
 $
\sqrt{2\Var^{\sched_2,\sched_4}_{\cM}(X)}
 $
:
 The blue dashed line connects the point corresponding to $\sched_2$ on one of the planes to the point corresponding to $\sched_4$ on the other plane.
 By the Pythagorean theorem, its length is
 \begin{align*}
 \sqrt{\sqrt{\Var^{\sched_2}_{\cM}(X)}^2 + (\mathbb{E}^{\sched_2}_{\cM}(X) -  (\mathbb{E}^{\sched_4}_{\cM}(X) )^2 + \sqrt{\Var^{\sched_4}_{\cM}(X)}^2  }  = \sqrt{2\Var^{\sched_2,\sched_4}_{\cM}(X)}.
 \end{align*}
 So, finding  $\sqrt{2}$ times the ``demonic standard deviation'' and hence the demonic variance corresponds to finding two points on the two orthogonal graphs  with maximal distance.

 \end{example}

\noindent
The relation between maximal and demonic variance is shown  in the following proposition.
\begin{restatable}{proposition}{boundstwo}
\label{prop:bounds_vdem}
We have
$
\Var^{\max}_{\cM}(X) \leq \Var^{\dem}_{\cM}(X) \leq 2\Var^{\max}_{\cM}(X)$.
\end{restatable}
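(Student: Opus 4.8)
The plan is to prove the two inequalities separately, since only the upper bound requires real work.

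For the lower bound $\Var^{\max}_{\cM}(X) \leq \Var^{\dem}_{\cM}(X)$, I would restrict the supremum defining the demonic variance to diagonal pairs $(\sched,\sched)$. By Lemma~\ref{lem:variance_two} the expectation-difference term vanishes, so $\Var^{\sched,\sched}_{\cM}(X) = \Var^{\sched}_{\cM}(X)$, and taking the supremum over $\sched$ gives the claim. (This is the inequality already noted informally in the introduction.)

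For the upper bound, I would first observe that bounding each summand of Lemma~\ref{lem:variance_two} on its own is too weak: combining $\Var^{\sched_1}_{\cM}(X),\Var^{\sched_2}_{\cM}(X) \leq \Var^{\max}_{\cM}(X)$ with the mixing estimate (Lemma~\ref{lem:variance_mix} at $p=\tfrac12$, which gives $(\mathbb{E}^{\sched_1}_{\cM}(X)-\mathbb{E}^{\sched_2}_{\cM}(X))^2 \leq 4\Var^{\max}_{\cM}(X)$) only yields the suboptimal bound $3\Var^{\max}_{\cM}(X)$. The key idea to recover the sharp factor $2$ is to compare the demonic expression against the variance of the single mixed scheduler $\tsched \eqdef \tfrac12\sched_1 \oplus \tfrac12\sched_2$ \emph{as a whole}. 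Writing $a = \Var^{\sched_1}_{\cM}(X)$, $b = \Var^{\sched_2}_{\cM}(X)$ and $e = (\mathbb{E}^{\sched_1}_{\cM}(X)-\mathbb{E}^{\sched_2}_{\cM}(X))^2$, Lemma~\ref{lem:variance_mix} gives $\Var^{\tsched}_{\cM}(X) = \tfrac12 a + \tfrac12 b + \tfrac14 e$, whereas Lemma~\ref{lem:variance_two} gives $\Var^{\sched_1,\sched_2}_{\cM}(X) = \tfrac12 a + \tfrac12 b + \tfrac12 e$. Since $a,b \geq 0$, the difference $2\Var^{\tsched}_{\cM}(X) - \Var^{\sched_1,\sched_2}_{\cM}(X) = \tfrac12 a + \tfrac12 b$ is non-negative, hence $\Var^{\sched_1,\sched_2}_{\cM}(X) \leq 2\Var^{\tsched}_{\cM}(X) \leq 2\Var^{\max}_{\cM}(X)$, the last step using that $\tsched$ is a single scheduler. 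Taking the supremum over all $\sched_1,\sched_2$ yields $\Var^{\dem}_{\cM}(X) \leq 2\Var^{\max}_{\cM}(X)$.

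The main obstacle is precisely this observation that one must not bound the three terms independently: instead, the $\tfrac14 e$ contribution in the variance of the mixed scheduler already accounts for half of the $\tfrac12 e$ contribution in the demonic expression, so that after doubling $\Var^{\tsched}_{\cM}(X)$ only the non-negative variance terms $\tfrac12 a + \tfrac12 b$ remain to be discarded. Before running the computation I would record that Assumption~\ref{ass:positivevar} makes all occurring expectations and variances finite, so the identities of Lemma~\ref{lem:variance_two} and Lemma~\ref{lem:variance_mix} apply without any $\infty - \infty$ ambiguity and the mixed scheduler $\tsched$ is well-defined in the sense of the preliminaries. Finally, I would note that the configuration $a = b = 0$, $e = 4\Var^{\max}_{\cM}(X)$ (as in Figure~\ref{fig:nds1}) attains equality, confirming that the factor $2$ is sharp.
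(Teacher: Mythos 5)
Your proof is correct and follows essentially the same route as the paper: both bounds are obtained exactly as in the paper's proof, with the lower bound via diagonal pairs and the upper bound by comparing $\Var^{\sched_1,\sched_2}_{\cM}(X)$ (Lemma~\ref{lem:variance_two}) against twice the variance of the mixed scheduler $\tfrac12\sched_1 \oplus \tfrac12\sched_2$ (Lemma~\ref{lem:variance_mix}) and discarding the non-negative terms $\tfrac12\Var^{\sched_1}_{\cM}(X) + \tfrac12\Var^{\sched_2}_{\cM}(X)$. Your additional remarks on why term-by-term bounding is too weak and on sharpness are sound but not needed for the argument.
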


\noindent
By means of Chebyshev's Inequality, the variance can be used to bound the probability that a random variable $Y$ lies far from its expected value.
Using the demonic variance, we can prove an analogous result providing bounds on the probability that the outcomes of $X$ in two independent executions of the MDP $\cM$ lie far apart.
This can be seen as a first step in the direction of using the demonic variance to provide guarantees on the behavior of a system.
\begin{restatable}{theorem}{cheby}
\label{thm:cheby}

We have
$
\Pr_{\cM\otimes\cM}^{\sched\otimes \tsched} \left( |X_1 - X_2| \geq k\cdot  \sqrt{\Var^{\dem}_{\cM}(X)} \right) \leq \frac{2}{k^2}
$
for any $k\in \mathbb{R}_{>0}$  and  schedulers $\sched$ and $\tsched$ for $\cM$.
\end{restatable}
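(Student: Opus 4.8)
The plan is to recognize the claim as a direct instance of Markov's inequality applied to the nonnegative random variable $(X_1 - X_2)^2$ on the product MDP $\cM \otimes \cM$ under the product scheduler $\sched \otimes \tsched$. First I would record the definitional identity together with the sup-bound: by the definition of $\Var^{\sched,\tsched}_{\cM}(X)$ and of the demonic variance as a supremum over all scheduler pairs,
\[
\mathbb{E}^{\sched\otimes \tsched}_{\cM\otimes \cM}\bigl((X_1 - X_2)^2\bigr) = 2\,\Var^{\sched,\tsched}_{\cM}(X) \le 2\,\Var^{\dem}_{\cM}(X).
\]
Before anything else I would check that $\Var^{\dem}_{\cM}(X)$ is strictly positive and finite, so that the square root and the division appearing below are legitimate: by Proposition \ref{prop:bounds_vdem} together with Assumption \ref{ass:positivevar} we have $0 < \Var^{\max}_{\cM}(X) \le \Var^{\dem}_{\cM}(X) \le 2\,\Var^{\max}_{\cM}(X) < \infty$.

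Next I would rewrite the event from the statement in terms of the squared difference. Since $k\sqrt{\Var^{\dem}_{\cM}(X)} \ge 0$, squaring is monotone on the relevant range and the two events coincide:
\[
|X_1 - X_2| \ge k\sqrt{\Var^{\dem}_{\cM}(X)} \quad\Longleftrightarrow\quad (X_1 - X_2)^2 \ge k^2\,\Var^{\dem}_{\cM}(X).
\]
Applying Markov's inequality to the nonnegative random variable $(X_1 - X_2)^2$ with threshold $a = k^2\,\Var^{\dem}_{\cM}(X)$ and then invoking the bound from the first step gives
\[
\Pr_{\cM\otimes\cM}^{\sched\otimes \tsched}\bigl((X_1 - X_2)^2 \ge k^2\,\Var^{\dem}_{\cM}(X)\bigr) \le \frac{\mathbb{E}^{\sched\otimes \tsched}_{\cM\otimes \cM}\bigl((X_1 - X_2)^2\bigr)}{k^2\,\Var^{\dem}_{\cM}(X)} \le \frac{2\,\Var^{\dem}_{\cM}(X)}{k^2\,\Var^{\dem}_{\cM}(X)} = \frac{2}{k^2}.
\]
Combining this with the event-equivalence yields the claim.

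There is no genuinely hard step here: the whole argument reduces to Markov's inequality, the definitional identity $\mathbb{E}^{\sched\otimes\tsched}_{\cM\otimes\cM}((X_1-X_2)^2) = 2\,\Var^{\sched,\tsched}_{\cM}(X)$, and the supremum bound $\Var^{\sched,\tsched}_{\cM}(X) \le \Var^{\dem}_{\cM}(X)$. The only points worth a moment's care are the well-definedness of $\sqrt{\Var^{\dem}_{\cM}(X)}$ (handled via Proposition \ref{prop:bounds_vdem} and Assumption \ref{ass:positivevar}) and the provenance of the factor $2$ in the bound $\tfrac{2}{k^2}$: it is exactly the factor relating $\mathbb{E}((X_1-X_2)^2)$ to the demonic variance, so, in contrast to the classical Chebyshev bound $\tfrac{1}{k^2}$, this factor reflects the two-execution nature of the estimate and cannot be removed in general.
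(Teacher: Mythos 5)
Your proposal is correct and follows essentially the same route as the paper's own proof: square the event, apply Markov's inequality to the nonnegative random variable $(X_1-X_2)^2$, and bound its expectation by $2\,\Var^{\dem}_{\cM}(X)$ via the definition of the demonic variance as a supremum over scheduler pairs. The only additions are your explicit well-definedness check of $\sqrt{\Var^{\dem}_{\cM}(X)}$ (which the paper leaves implicit under Assumption~\ref{ass:positivevar}) and the remark on the provenance of the factor $2$; both are harmless refinements, not a different argument.
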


Using the result that $\Var^{\dem}_{\cM}(X)\leq 2\Var^{\max}_{\cM}(X)$, we obtain the following variant of the inequality providing a weaker bound in terms of the maximal variance.

\begin{restatable}{corollary}{corcheby}
We have $\Pr_{\cM\otimes\cM}^{\sched\otimes \tsched} \left( |X_1 - X_2| \geq k\cdot  \sqrt{\Var^{\max}_{\cM}(X)} \right) \leq \frac{4}{k^2}$
for any $k\in \mathbb{R}_{>0}$  and  schedulers $\sched$ and $\tsched$ for $\cM$.
\end{restatable}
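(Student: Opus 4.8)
The plan is to obtain the corollary as an immediate consequence of Theorem~\ref{thm:cheby} combined with the bound $\Var^{\dem}_{\cM}(X) \leq 2\Var^{\max}_{\cM}(X)$ established in Proposition~\ref{prop:bounds_vdem}. Write $V \eqdef \Var^{\max}_{\cM}(X)$ and $D \eqdef \Var^{\dem}_{\cM}(X)$ for brevity. By Assumption~\ref{ass:positivevar} we have $0 < V < \infty$, and since $V \leq D \leq 2V$ by Proposition~\ref{prop:bounds_vdem}, the demonic variance $D$ is likewise strictly positive and finite. Hence both square roots appearing below are well-defined positive reals, and the quotient used to rescale $k$ is legitimate.

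First I would match the two thresholds by rescaling the constant. Given $k \in \Real_{>0}$, set $\tilde{k} \eqdef k \sqrt{V}/\sqrt{D}$, so that $\tilde{k} > 0$ and $\tilde{k}\sqrt{D} = k\sqrt{V}$. Applying Theorem~\ref{thm:cheby} with $\tilde{k}$ in place of $k$ then yields
\[
\Pr_{\cM\otimes\cM}^{\sched\otimes \tsched} \left( |X_1 - X_2| \geq k\sqrt{V} \right) = \Pr_{\cM\otimes\cM}^{\sched\otimes \tsched} \left( |X_1 - X_2| \geq \tilde{k}\sqrt{D} \right) \leq \frac{2}{\tilde{k}^2}.
\]
It remains to bound $2/\tilde{k}^2$. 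Since $D \leq 2V$, we have $\tilde{k} = k\sqrt{V/D} \geq k/\sqrt{2}$, hence $\tilde{k}^2 \geq k^2/2$ and therefore $2/\tilde{k}^2 \leq 4/k^2$. Chaining this with the displayed inequality gives the claimed bound $\tfrac{4}{k^2}$.

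Since the argument is a one-line substitution into the already-proven Theorem~\ref{thm:cheby}, there is essentially no genuine obstacle; the only point requiring attention is the well-definedness of the rescaling, i.e.\ ensuring $D \in (0,\infty)$ so that dividing by $\sqrt{D}$ is permissible. This is guaranteed by Assumption~\ref{ass:positivevar} together with Proposition~\ref{prop:bounds_vdem}, as noted above. Alternatively, one could avoid introducing $\tilde{k}$ altogether and argue directly via monotonicity of the measure in the threshold, using $\sqrt{D} \leq \sqrt{2}\,\sqrt{V}$, but the rescaling route makes the constant $\tfrac{4}{k^2}$ most transparent.
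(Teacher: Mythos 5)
Your proof is correct and follows essentially the same route as the paper: both derive the bound by applying Theorem~\ref{thm:cheby} with a rescaled constant and invoking $\Var^{\dem}_{\cM}(X) \leq 2\Var^{\max}_{\cM}(X)$ from Proposition~\ref{prop:bounds_vdem}; the paper merely uses that inequality to enlarge the event first (the monotonicity argument you mention as an alternative) rather than to bound $2/\tilde{k}^2$ afterwards.
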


\subsection{Non-determinism score}

We have seen that the demonic variance is larger than the maximal variance by a factor  between $1$ and $2$. As described in the introduction, we use this insight as the basis for a score quantifying how much worse the ``uncertainty'' of $X$ is when non-determinism can be resolved differently in two executions of an MDP compared to how bad it can be in a single execution. We define the non-determinism score (NDS)
\[
\nds(\cM,X) \eqdef \frac{\Var^{\dem}_{\cM}(X) - \Var^{\max}_{\cM}(X)}{\Var^{\max}_{\cM}(X)}.
\]
By Assumption \ref{ass:positivevar},  the NDS is well-defined.
By Proposition \ref{prop:bounds_vdem}, the NDS always returns a value in  $[0,1]$.
Clearly, in Markov chains, the NDS is $0$. A bit more general, we can show:
\begin{restatable}{proposition}{ndszero}
\label{prop:nds0}
If  $\mathbb{E}^{\sched}_{\cM}(X) = \mathbb{E}^{\tsched}_{\cM}(X)$ for all schedulers $\sched$ and $\tsched$, then $\nds(\cM,X)=0$.
\end{restatable}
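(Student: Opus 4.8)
The plan is to reduce the demonic variance to the maximal variance under the hypothesis that all schedulers yield the same expectation. First I would invoke Lemma~\ref{lem:variance_two}, which expresses, for any two schedulers $\sched_1$ and $\sched_2$,
\[
\Var^{\sched_1,\sched_2}_{\cM}(X) = \frac{1}{2}\left( \Var^{\sched_1}_{\cM}(X) + \Var^{\sched_2}_{\cM}(X) + \left(\mathbb{E}^{\sched_1}_{\cM}(X) - \mathbb{E}^{\sched_2}_{\cM}(X)\right)^2 \right).
\]
Under the hypothesis, the expectation difference $\mathbb{E}^{\sched_1}_{\cM}(X) - \mathbb{E}^{\sched_2}_{\cM}(X)$ vanishes for every pair of schedulers, so the square term drops out entirely and we are left with $\Var^{\sched_1,\sched_2}_{\cM}(X) = \frac{1}{2}\big(\Var^{\sched_1}_{\cM}(X) + \Var^{\sched_2}_{\cM}(X)\big)$.

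Next I would take the supremum over both schedulers. Since the right-hand side is now an average of two single-scheduler variances, each bounded above by $\Var^{\max}_{\cM}(X)$, we immediately get $\Var^{\sched_1,\sched_2}_{\cM}(X) \leq \Var^{\max}_{\cM}(X)$, and hence $\Var^{\dem}_{\cM}(X) \leq \Var^{\max}_{\cM}(X)$. Conversely, Proposition~\ref{prop:bounds_vdem} already gives the reverse inequality $\Var^{\dem}_{\cM}(X) \geq \Var^{\max}_{\cM}(X)$, so the two quantities coincide. Plugging $\Var^{\dem}_{\cM}(X) = \Var^{\max}_{\cM}(X)$ into the definition of the NDS yields a numerator of zero, and since the denominator is strictly positive by Assumption~\ref{ass:positivevar}, we conclude $\nds(\cM,X)=0$.

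I do not expect any serious obstacle here: the proof is essentially an application of Lemma~\ref{lem:variance_two} combined with the already-established lower bound. The only point requiring a little care is making sure the supremum manipulation is legitimate --- namely that taking $\sup_{\sched_1,\sched_2}$ of the average of two variances equals $\Var^{\max}_{\cM}(X)$ rather than something larger. This follows because both terms are supremized independently and each attains (or approaches) the common bound $\Var^{\max}_{\cM}(X)$, so the average cannot exceed it, while choosing $\sched_1=\sched_2$ to be (near-)variance-optimal shows the bound is tight. This also transparently matches the intuition that if the non-determinism cannot shift the expectation at all, then resolving it independently in two executions offers no additional spread beyond what a single scheduler already achieves.
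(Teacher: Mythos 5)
Your proof is correct and follows essentially the same route as the paper's: both apply Lemma~\ref{lem:variance_two}, use the hypothesis to drop the squared expectation-difference term, and bound the resulting average of two single-scheduler variances by $\Var^{\max}_{\cM}(X)$ to conclude $\Var^{\dem}_{\cM}(X) = \Var^{\max}_{\cM}(X)$. Your explicit appeal to Proposition~\ref{prop:bounds_vdem} for the reverse inequality merely makes precise what the paper leaves implicit.
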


In transition systems viewed as MDPs in which all transition probabilities are $0$ or $1$, the NDS is $1$: Under Assumption \ref{ass:bounded} in a transition system the value of $X$ must be bounded, i.e., $X\in [a,b]$ for some $a,b\in \mathbb{R}$ such that  $\sup_\pi X(\pi)=b$ and $\inf_\pi X(\pi)=a$ where $\pi$ ranges over all paths. Any path can be realized by a scheduler with probability $1$.
So, for any $\varepsilon>0$, there are schedulers $\sched$ and $\tsched$ with $\Pr^{\sched}_{\cM}(X<a+\varepsilon) = 1$ and $\Pr^{\tsched}_{\cM}(X>b-\varepsilon) = 1$.
Then, $\Var^{\sched,\tsched}_{\cM}(X) \geq\frac{1}{2} (b-a - 2\varepsilon)^2$. For $\varepsilon \to 0$, this converges to $\frac{(a-b)^2}{2}$.
It is well-known that the variance of random variables taking values in $[a,b]$ is maximal for the random variable taking values $a$ and $b$ with probability $\frac{1}{2}$ each.
The variance in this case is $\frac{(a-b)^2}{4}$. So, the maximal variance is (at most) half the demonic variance in this case. Consequently, the NDS is $1$.

Of course, a NDS of $1$ does not imply that there are no probabilistic transitions in $\cM$. 
Nevertheless, a NDS of $1$ has severe implications showing that the outcome of $X$ can be heavily influenced by the non-determinism in this case as the following theorem shows:

\begin{restatable}{theorem}{thmndsone}
\label{thm:nds1}
 If $\nds(\cM,X)=1$, the following statements hold:
 \begin{enumerate}
 \item
 For every $\varepsilon>0$, 
there are schedulers $\mathfrak{Min}_\varepsilon$ and $\mathfrak{Max}_\varepsilon$ with 
$\mathbb{E}^{\mathfrak{Min}_\varepsilon}_{\cM} (X) \leq \mathbb{E}^{\min}_{\cM} (X) +\varepsilon $ and $\Var^{\mathfrak{Min}_\varepsilon}_{\cM} (X) \leq \varepsilon$,  and
$\mathbb{E}^{\mathfrak{Max}_\varepsilon}_{\cM} (X) \geq \mathbb{E}^{\max}_{\cM} (X) -\varepsilon$ and $ \Var^{\mathfrak{Max}_\varepsilon}_{\cM} (X) \leq \varepsilon$.
\item
If there are schedulers $\sched_0$ and $\sched_1$, with $\Var^{\dem}_{\cM}(X) = \Var^{\sched_0,\sched_1}_{\cM}(X)$, then, for $i=0$ or $i=1$, 
$
\Pr_{\cM}^{\sched_i} ( X = \mathbb{E}^{\min}_{\cM}(X)) = 1$ and $ \Pr_{\cM}^{\sched_{1-i}} ( X = \mathbb{E}^{\max}_{\cM}(X)) = 1$.
\item
If $X$ is bounded and continuous wrt the topology generated by the cylinder sets,  there are schedulers $\mathfrak{Min}$ and $\mathfrak{Max}$ with
$
\Pr_{\cM}^{\mathfrak{Min}} ( X = \mathbb{E}^{\min}_{\cM}(X)) = 1$ and  $\Pr_{\cM}^{\mathfrak{Max}} ( X = \mathbb{E}^{\max}_{\cM}(X)) = 1$.
\end{enumerate}
\end{restatable}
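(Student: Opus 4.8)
The plan is to reduce the entire statement to the algebraic identity of Lemma~\ref{lem:variance_two} combined with the mixing formula of Lemma~\ref{lem:variance_mix}. Write $V \eqdef \Var^{\max}_{\cM}(X)$, $m \eqdef \mathbb{E}^{\min}_{\cM}(X)$ and $M \eqdef \mathbb{E}^{\max}_{\cM}(X)$; these are finite by Assumption~\ref{ass:bounded}, and $\nds(\cM,X)=1$ means precisely $\Var^{\dem}_{\cM}(X) = 2V$. Two preliminary facts drive everything. First, mixing a scheduler whose expectation is close to $m$ with one whose expectation is close to $M$ using weight $p=\tfrac12$ and bounding the resulting variance by $V$ through Lemma~\ref{lem:variance_mix} yields, in the limit, $(M-m)^2 \le 4V$. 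Second, for an \emph{arbitrary} pair $\sched_1,\sched_2$, Lemma~\ref{lem:variance_two} reads $2\Var^{\sched_1,\sched_2}_{\cM}(X) = \Var^{\sched_1}_{\cM}(X) + \Var^{\sched_2}_{\cM}(X) + (\mathbb{E}^{\sched_1}_{\cM}(X) - \mathbb{E}^{\sched_2}_{\cM}(X))^2$, whereas applying Lemma~\ref{lem:variance_mix} to $\tfrac12\sched_1 \oplus \tfrac12\sched_2$ and using that its variance is at most $V$ gives $\tfrac12(\Var^{\sched_1}_{\cM}(X) + \Var^{\sched_2}_{\cM}(X)) + \tfrac14(\mathbb{E}^{\sched_1}_{\cM}(X) - \mathbb{E}^{\sched_2}_{\cM}(X))^2 \le V$. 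Combining the two relations eliminates the expectation-gap term and yields the key inequality
\[
\Var^{\sched_1}_{\cM}(X) + \Var^{\sched_2}_{\cM}(X) \le 4V - 2\Var^{\sched_1,\sched_2}_{\cM}(X),
\]
which says that a pair of schedulers can only approach the demonic optimum if both individual variances vanish.

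For statement~1, I would fix an optimizing sequence $(\sched_1^{(n)},\sched_2^{(n)})$ with $\Var^{\sched_1^{(n)},\sched_2^{(n)}}_{\cM}(X) \to 2V$. The key inequality forces $\Var^{\sched_1^{(n)}}_{\cM}(X)$ and $\Var^{\sched_2^{(n)}}_{\cM}(X)$ to tend to $0$, and then the identity of Lemma~\ref{lem:variance_two} forces $(\mathbb{E}^{\sched_1^{(n)}}_{\cM}(X) - \mathbb{E}^{\sched_2^{(n)}}_{\cM}(X))^2 \to 4V$. Since both expectations lie in $[m,M]$ and $(M-m)^2 \le 4V$, this is only possible if $(M-m)^2 = 4V$ and the two expectations approach the two endpoints $m$ and $M$. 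Assigning, for large $n$, the scheduler of smaller expectation to $\Min_\varepsilon$ and the one of larger expectation to $\Max_\varepsilon$ then gives the required near-minimal respectively near-maximal expectation together with variance at most $\varepsilon$.

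Statement~2 is the same computation with the supremum attained: if $\Var^{\sched_0,\sched_1}_{\cM}(X) = 2V$ exactly, the key inequality gives $\Var^{\sched_0}_{\cM}(X) + \Var^{\sched_1}_{\cM}(X) \le 0$, hence both variances are exactly $0$ and $(\mathbb{E}^{\sched_0}_{\cM}(X) - \mathbb{E}^{\sched_1}_{\cM}(X))^2 = 4V = (M-m)^2$, so $\{\mathbb{E}^{\sched_0}_{\cM}(X),\mathbb{E}^{\sched_1}_{\cM}(X)\} = \{m,M\}$. Variance $0$ under a scheduler means $X$ equals its expectation almost surely, which yields $\Pr_{\cM}^{\sched_i}(X = m) = 1$ for the index whose expectation is $m$ and $\Pr_{\cM}^{\sched_{1-i}}(X = M) = 1$ for the other.

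Statement~3 is where the real work lies: I must turn the approximating schedulers of statement~1 into honest optimizers. The space of schedulers is a countable product of simplices over the finite action sets, hence compact in the product topology, so the sequence $(\Min_{1/n})_n$ from statement~1 has a convergent subsequence with limit $\Min$. Here I would invoke the convergence result announced in the introduction: convergence of schedulers in this topology implies weak convergence of the induced path measures $\Pr^{\Min_{1/n}}_{\cM} \to \Pr^{\Min}_{\cM}$. Because $X$ is bounded and continuous with respect to the cylinder topology, so is $X^2$, and weak convergence therefore lets both the expectation of $X$ and of $X^2$ -- and hence the variance -- pass through the limit; thus $\mathbb{E}^{\Min}_{\cM}(X) = m$ and $\Var^{\Min}_{\cM}(X) = 0$, giving $\Pr_{\cM}^{\Min}(X = m) = 1$, and the same argument applied to $(\Max_{1/n})_n$ produces $\Max$. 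I expect this last step -- securing existence of the limit scheduler and justifying that boundedness together with continuity let expectation and variance commute with the weak limit -- to be the main obstacle, since statements~1 and~2 are purely algebraic consequences of the two lemmas above.
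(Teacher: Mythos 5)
Your proposal is correct and follows essentially the same route as the paper: statements 1 and 2 rest on exactly the same combination of Lemma~\ref{lem:variance_mix}, Lemma~\ref{lem:variance_two} and $\Var^{\dem}_{\cM}(X) = 2\Var^{\max}_{\cM}(X)$ (your ``key inequality'' is a rearrangement of the paper's computation with near-optimal pairs), and statement 3 uses the same compactness-of-schedulers plus weak-convergence argument, the proposition you invoke being precisely what the paper proves inline in its appendix. The one small difference is to your credit: by working with optimizing sequences and the bound $(\mathbb{E}^{\max}_{\cM}(X)-\mathbb{E}^{\min}_{\cM}(X))^2 \le 4\Var^{\max}_{\cM}(X)$, you avoid the paper's unjustified assumption that schedulers attaining $\mathbb{E}^{\max}_{\cM}(X)$ and $\mathbb{E}^{\min}_{\cM}(X)$ exist.
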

\noindent
The first two statements  can be shown by elementary calculations. For  the third statement, we use topological arguments.
We view schedulers as elements of $\prod_{k=0}^\infty \mathrm{Distr}(\Act)^{\mathrm{Paths}_{\cM}^k}$ where $\mathrm{Paths}_{\cM}^k$ is the set of paths of length $k$ in $\cM$
 and prove  the following result:

\begin{proposition}
The space of schedulers 
 $
 \mathrm{Sched}(\cM) = \prod_{k=0}^\infty \mathrm{Distr}(\Act)^{\mathrm{Paths}_{\cM}^k}
 $
with the product topology is compact. So, every sequence of schedulers has a converging subsequence in this space.
Further,
for a sequence $(\sched_j)_{j\in \mathbb{N}}$  converging to a scheduler $\sched$ in this space, the sequence of probability measures
$(\Pr^{\sched_j}_{\cM})_{j\in \mathbb{N}}$ weakly converges to the probability measure $\Pr^{\sched}_{\cM}$.
\end{proposition}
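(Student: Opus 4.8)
The plan is to treat the two assertions separately: compactness of the scheduler space is a direct consequence of Tychonoff's theorem, whereas the weak-convergence statement requires translating coordinatewise convergence of schedulers into convergence of the induced measures on cylinder sets and then upgrading this to genuine weak convergence.

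For compactness, I would first observe that $\mathrm{Distr}(\Act)$ is the standard simplex in $\mathbb{R}^{|\Act|}$, hence a compact metrizable space, since $\Act$ is finite. For each $k$ the index set $\mathrm{Paths}_{\cM}^k$ is finite, as a finite MDP is finitely branching, so $\mathrm{Distr}(\Act)^{\mathrm{Paths}_{\cM}^k}$ is a finite product of compact metrizable spaces and therefore again compact and metrizable. The scheduler space $\mathrm{Sched}(\cM)$ is the countable product of these factors; by Tychonoff's theorem it is compact, and a countable product of metrizable spaces is metrizable. A compact metrizable space is sequentially compact, which yields the claim that every sequence of schedulers has a convergent subsequence.

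For the weak-convergence part, I would start from the fact that convergence $\sched_j \to \sched$ in the product topology is exactly coordinatewise convergence: for every finite path $\fpath$ and every action $\act$ we have $\sched_j(\fpath)(\act) \to \sched(\fpath)(\act)$. Fixing a cylinder $\Cyl(\fpath)$ for $\fpath = s_0 \act_0 \dots \act_{k-1} s_k$, its measure $\Pr^{\sched}_{\cM}(\Cyl(\fpath)) = P^{\sched}(\fpath)$ is a finite product of factors of the form $\sched(\cdots)(\act_i)\, P(s_i,\act_i,s_{i+1})$; since a product of finitely many converging factors converges, $\Pr^{\sched_j}_{\cM}(\Cyl(\fpath)) \to \Pr^{\sched}_{\cM}(\Cyl(\fpath))$ for every cylinder.

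It remains to upgrade this pointwise convergence on cylinders to weak convergence, which I expect to be the main step. Here I would use that the path space $\Pi_{\cM}$ is compact, being a closed subset of the compact product $(S\times\Act)^{\mathbb{N}}$, and that with the metric $d(\fpath,\fpath')=2^{-\ell}$ every continuous $f$ is uniformly continuous. Given $\varepsilon>0$, uniform continuity provides a depth $N$ such that $f$ varies by less than $\varepsilon$ on each cylinder of length $N$; replacing $f$ by the locally constant function $g$ taking a fixed value on each such cylinder gives $\|f-g\|_\infty \le \varepsilon$. Since $g$ is a finite linear combination of cylinder indicators, $\int g\,\mathrm{d}\Pr^{\sched_j}_{\cM} \to \int g\,\mathrm{d}\Pr^{\sched}_{\cM}$ by the previous step, and a three-$\varepsilon$ argument — bounding the first and third terms by $\|f-g\|_\infty \le \varepsilon$ because all the $\Pr^{\sched_j}_{\cM}$ and $\Pr^{\sched}_{\cM}$ are probability measures of total mass $1$ — yields $\int f\,\mathrm{d}\Pr^{\sched_j}_{\cM} \to \int f\,\mathrm{d}\Pr^{\sched}_{\cM}$, which is exactly weak convergence. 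The only delicate point is the uniform approximation of bounded continuous functions by cylinder functions; this is where compactness of $\Pi_{\cM}$ is essential, as it upgrades mere pointwise approximability to the uniform bound the three-$\varepsilon$ estimate needs.
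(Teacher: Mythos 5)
Your proof is correct, and its skeleton---Tychonoff for compactness of $\mathrm{Sched}(\cM)$, then convergence of cylinder probabilities obtained from coordinatewise convergence of the schedulers---matches the paper's. Where you genuinely diverge is the step that upgrades convergence on cylinders to weak convergence. The paper invokes an external criterion (Feinberg, Kasyanov, Zgurovsky): it suffices to verify $\liminf_{j\to\infty}\mu_j(U)\ge\mu(U)$ for every finite union $U$ of elements of a countable base of the topology, which the paper then reduces to the cylinder computation via disjointness of same-length cylinders. You instead prove the upgrade from scratch: the path space of a finite MDP is a closed subspace of the compact space $(S\times\Act)^{\mathbb{N}}$, hence compact, so every continuous $f$ is uniformly continuous and admits a uniform $\varepsilon$-approximation by a function constant on cylinders of a fixed depth; since such a function is a finite linear combination of cylinder indicators, a three-$\varepsilon$ estimate finishes the argument. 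Your route is self-contained and elementary, at the price of exploiting compactness of the path space, which the paper never needs (its cited criterion would survive, e.g., a countably infinite state space, where the path space is no longer compact); the paper's route is shorter but imports a nontrivial external result. A minor point in your favour: you justify the passage from compactness to sequential compactness via metrizability of the countable product, whereas the paper asserts the existence of converging subsequences directly from compactness, which strictly speaking requires exactly the metrizability observation you supply.
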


An example for a random variable that is bounded and continuous wrt the topology generated by the cylinder sets is the discounted reward:
Given a reward function $\rew\colon S \to \mathbb{R}$, the discounted reward of a path $\pi=s_0\act_0s_1\dots$ is defined as $\mathit{DR}_\lambda(\pi) \eqdef \sum_{j=0}^\infty \lambda^j \rew(s_j)$
for some discount factor $\lambda\in (0,1)$. First, $| \mathit{DR}_\lambda| $ is bounded by $\max_{s\in S} |\rew(s)| \cdot \frac{1}{1-\lambda}$.
Further, for any $\varepsilon>0$, let $N$ be a natural number such that $\max_{s\in S} |\rew(s)| \cdot \frac{\lambda^N}{1-\lambda}<\varepsilon$. Then, $|\mathit{DR}_\lambda(\pi) - \mathit{DR}_\lambda(\rho) |<\varepsilon$
for all paths $\pi$ and $\rho$ that share a prefix of length more than $N$.

\section{Weighted reachability}
\label{sec:WR}

We now address the problems to compute the demonic and the maximal variance for weighted reachability where a weight is collected on a run depending on which absorbing state is reached.
As the NDS is defined via these two quantities, we do not address it separately here.
Throughout this section, let   $\cM=(S, \Act, P,\sinit)$ be an MDP with  set of absorbing states $T\subseteq S$ and let $\wgt\colon T\to \mathbb{Q}$ be a weight function. 
We define the random variable $\WR$  on infinite paths $\pi$ by
$
\WR(\pi) = 
\wgt(t)$ if $\pi$ reaches  the absorbing state $t\in T$, and 
$\WR(\pi) = 0$ if $\pi$ does not reach $T$.
The main result we are going to establish is the following:
\vspace{6pt}

\noindent \textbf{Main result. }
\textit{
The maximal variance $\Var^{\max}_{\cM}(\WR)$ and an optimal memoryless randomized scheduler can be computed in polynomial time.}

\textit{
The demonic variance $\Var^{\dem}_{\cM}(\WR)$ can be computed as the solution to a bilinear program that can be constructed in polynomial time. Furthermore, there is a pair of memoryless deterministic schedulers realizing the demonic variance.}
\vspace{6pt}

The following standard model transformation collapsing end components (see \cite{deAlf99}) allows us to assume that $T$ is reached almost surely under any scheduler:
We add a new absorbing state $t^\ast$ and set $\wgt(t^\ast)=0$ and collapse all maximal end components $\cE$ in $S\setminus T$ to single states $s_{\cE}$. 
In $s_\cE$, all actions that were enabled in some state in $\cE$ and that did not belong to $\cE$ as well as a new action $\tau$ leading to $t^\ast$ with probability $1$ are enabled. 
In the resulting MDP $\cN$, the set of absorbing states $T\cup\{t^\ast\}$ is reached almost surely under any scheduler. Further, for any scheduler $\sched$ for $\cM$, there is a scheduler $\tsched$ for $\cN$ such that the distribution of $\WR$ is the same under $\sched$ in $\cM$ and under $\tsched$ in $\cN$, and vice versa. So, w.l.o.g., assume the following:
\begin{assumption}
The set $T$ is reached almost surely under any scheduler $\sched$ for $\cM$.
\end{assumption}

In the sequel, we first address the computation of the maximal variance  and afterwards of the demonic variance of $\WR$ in $\cM$.
Omitted proofs can be found in Appendix \ref{app:WR}.

\vspace{6pt}
\noindent
\textbf{Computation of the maximal variance.}
It is well-known that the set of vectors
$
(\Pr^{\sched}_{\cM}(\lozenge q))_{q\in T}
$
of combinations of reachability probabilities for states in $T$ that can be realized by a scheduler $\sched$ can be described by a system of linear inequalities (see, e.g., \cite{Kallenberg}).
We  provide such a system of inequalities below in equations (\ref{eq:nonnegative}) -- (\ref{eq:def}). The equations use 
  variables $x_{s.\alpha}$ for all state-action pairs $(s,\alpha)$ encoding the expected number of times action $\alpha$ is taken in state $s$.
  Setting $\mathds{1}_{s=\sinit}=1$ if $s=\sinit$ and $\mathds{1}_{s=\sinit}=0$ otherwise,     we require
\begin{align}
x_{s,\alpha} & \geq 0  & \text{ for all  $(s,\alpha)$,}
\label{eq:nonnegative} \\
\sum_{\alpha\in \Act(s)} x_{s,\alpha} & = \sum_{t\in S,\beta\in \Act(t)} x_{t,\beta}\cdot P(t,\beta,s) + \mathds{1}_{s=\sinit} 
& \text{ for all $s\in S\setminus T$,}
\label{eq:constraint}\\
y_q & =  \sum_{t\in S,\beta\in \Act(t)} x_{t,\beta}\cdot P(t,\beta,q) & \text{ for all $q\in T$}.
\label{eq:def}
\end{align}
The variables $y_q$ for $q\in T$  represent the probabilities that state $q$ is reached.
We can now express the expected value of $\WR$ and $\WR^2$ via variables $e_1$ and $e_2$ via the constraints:
\begin{equation}
\label{eq:expectations}
e_1=  \sum_{q\in T} y_q \cdot \wgt(q) \quad \text{ and }\quad e_2=  \sum_{q\in T} y_q \cdot \wgt(q)^2.
\end{equation}
The variance can now be written as a quadratic objective function in $e_1$ and $e_2$:
\begin{equation}
\label{eq:objective}
\text{maximize } \quad e_2 - e_1^2.
\end{equation}

\begin{restatable}{theorem}{WRmax}
\label{thm:WRmaxvar}
The  maximal value in objective (\ref{eq:objective}) under constraints (\ref{eq:nonnegative}) -- (\ref{eq:expectations}) is
$\Var^{\max}_{\cM}(\WR)$.
\end{restatable}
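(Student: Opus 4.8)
The plan is to reduce the maximization of the variance over all schedulers to the maximization of $e_2 - e_1^2$ over the feasible region of the program, by showing that the two optimization problems range over the same set of achievable pairs $(e_1,e_2)$. The crucial observation is that, because $T$ is reached almost surely under our standing assumption, the distribution of $\WR$ under a scheduler $\sched$ is completely determined by the vector of reachability probabilities $(\Pr^{\sched}_{\cM}(\lozenge q))_{q\in T}$, so the variance becomes a function of this vector alone.

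First I would make this reduction precise. Since the states in $T$ are absorbing, every path reaches at most one state of $T$, so the events $\{\lozenge q\}_{q\in T}$ are pairwise disjoint; and since $T$ is reached almost surely, their union has probability $1$ under every scheduler. Hence $\WR = \sum_{q\in T}\wgt(q)\cdot\mathds{1}_{\lozenge q}$ holds $\Pr^{\sched}_{\cM}$-almost surely, which yields
\[
\mathbb{E}^{\sched}_{\cM}(\WR) = \sum_{q\in T}\Pr^{\sched}_{\cM}(\lozenge q)\,\wgt(q), \qquad \mathbb{E}^{\sched}_{\cM}(\WR^2) = \sum_{q\in T}\Pr^{\sched}_{\cM}(\lozenge q)\,\wgt(q)^2.
\]
Consequently, if $(y_q)_{q\in T} = (\Pr^{\sched}_{\cM}(\lozenge q))_{q\in T}$ and $e_1,e_2$ are defined from these $y_q$ via (\ref{eq:expectations}), then $e_2 - e_1^2 = \mathbb{E}^{\sched}_{\cM}(\WR^2)-\mathbb{E}^{\sched}_{\cM}(\WR)^2 = \Var^{\sched}_{\cM}(\WR)$.

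Next I would invoke the standard flow-based characterization of achievable reachability vectors (e.g., \cite{Kallenberg}): the projections onto the $(y_q)_{q\in T}$-coordinates of the feasible solutions of (\ref{eq:nonnegative})--(\ref{eq:def}) are exactly the vectors $(\Pr^{\sched}_{\cM}(\lozenge q))_{q\in T}$ realizable by some scheduler $\sched$. For the direction from schedulers to feasible points one sets $x_{s,\alpha}$ to the expected number of times $\alpha$ is taken in $s$; these values are finite because, in the finite MDP $\cM$, almost-sure reachability of $T$ makes every state of $S\setminus T$ transient with uniformly geometrically decaying occupation probability, so the expected visit counts are finite and the flow equations (\ref{eq:constraint})--(\ref{eq:def}) are satisfied. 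For the converse one reads off a (memoryless) scheduler from the frequencies $x_{s,\alpha}$ in the usual way. Combining this correspondence with the moment identities above, the sets of achievable pairs $(e_1,e_2)$ coincide for the two problems, so
\[
\Var^{\max}_{\cM}(\WR) = \sup_{\sched} \Var^{\sched}_{\cM}(\WR) = \sup_{\text{feasible}} (e_2 - e_1^2),
\]
which is the maximal value of the objective (\ref{eq:objective}). Since the feasible $(y_q)$ form a compact polytope and $e_1,e_2$ depend continuously (in fact linearly) on them, the supremum is attained and the optimum is a genuine maximum.

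I expect the main obstacle to be the careful justification of the flow-LP correspondence under our assumptions, in particular arguing the finiteness of the occupation measures $x_{s,\alpha}$ from almost-sure reachability of $T$ and verifying both directions of the bijection between feasible points and achievable reachability vectors. Everything else, namely the moment computation and the identification of $e_2-e_1^2$ with the variance, is then a direct consequence of $\WR$ being a deterministic function of which absorbing state is reached.
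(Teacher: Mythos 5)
Your proof is correct and follows essentially the same route as the paper: both rest on the standard correspondence (cited from Kallenberg) between feasible solutions of the flow constraints (\ref{eq:nonnegative})--(\ref{eq:def}) and the achievable reachability vectors $(\Pr^{\sched}_{\cM}(\lozenge q))_{q\in T}$, combined with the observation that $e_1$ and $e_2$ are then exactly the first and second moments of $\WR$, so the objective $e_2-e_1^2$ equals the variance. You merely expand details the paper delegates to the citation (finiteness of occupation measures under the standing assumption, both directions of the correspondence, attainment of the supremum), which is fine but not a different argument.
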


Due to the concavity of the objective function, we conclude:

\begin{restatable}{corollary}{corMR}
\label{cor:MR}
The maximal variance $\Var^{\max}_{\cM}(\WR)$ can be computed in polynomial time. Furthermore, there is a memoryless randomized scheduler $\sched$ with  $\Var_{\cM}^{\sched}(\WR)= \Var_{\cM}^{\max}(\WR)$, which can also be computed in polynomial time.
\end{restatable}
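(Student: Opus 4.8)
The plan is to derive the corollary directly from Theorem~\ref{thm:WRmaxvar} by observing that the optimization problem it describes is a \emph{convex} program and that an optimal memoryless scheduler can be read off from its solution. By Theorem~\ref{thm:WRmaxvar}, $\Var^{\max}_{\cM}(\WR)$ equals the maximum of the objective (\ref{eq:objective}) subject to the linear constraints (\ref{eq:nonnegative})--(\ref{eq:expectations}), so it suffices to show that this maximum, together with a maximizing flow $(x_{s,\alpha})$, is computable in polynomial time.

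First I would establish the concavity that the preceding remark alludes to. The objective depends only on $e_1$ and $e_2$; since $e_2$ enters linearly and $-e_1^2$ is concave, $e_2 - e_1^2$ is a concave (negative-semidefinite) quadratic in the optimization variables, while (\ref{eq:nonnegative})--(\ref{eq:expectations}) describe a polytope $\mathcal{P}$. Maximizing a concave quadratic over a polytope with rational data is a convex quadratic program and can be solved exactly in polynomial time by known methods (e.g.\ via the ellipsoid method). Because $\mathcal{P}$ is compact and the objective is continuous, the supremum is attained, so this step yields both the value $\Var^{\max}_{\cM}(\WR)$ and an optimal flow.

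Second, I would recover a memoryless randomized scheduler from an optimal flow $(x_{s,\alpha})$ via the standard frequency-to-scheduler correspondence (see, e.g., \cite{Kallenberg}): set $\sched(s)(\alpha) = x_{s,\alpha} / \sum_{\beta \in \Act(s)} x_{s,\beta}$ whenever the denominator is positive, and choose $\sched(s)$ arbitrarily otherwise. Under $\sched$, the expected frequencies reproduce $(x_{s,\alpha})$ and, in particular, the reachability probabilities of the absorbing states equal the values $y_q$ from (\ref{eq:def}). Since $\WR$ is determined by which state of $T$ is reached, the distribution of $\WR$ under $\sched$ gives $\mathbb{E}^{\sched}_{\cM}(\WR) = e_1$ and $\mathbb{E}^{\sched}_{\cM}(\WR^2) = e_2$, hence $\Var^{\sched}_{\cM}(\WR) = e_2 - e_1^2$ matches the optimum; this scheduler is clearly computable in polynomial time from $(x_{s,\alpha})$.

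The main conceptual point, and the step I would treat most carefully, is that variance maximization over the \emph{scheduler} space is not itself concave: by Lemma~\ref{lem:variance_mix}, mixing two schedulers adds the nonnegative term $p(1-p)(\mathbb{E}^{\sched_1}_{\cM}(\WR)-\mathbb{E}^{\sched_2}_{\cM}(\WR))^2$, so randomization can strictly increase the variance and optima need not occur at deterministic schedulers. Tractability only emerges after reparametrizing by the polytope of achievable reachability-probability vectors $(y_q)_{q\in T}$, over which that very same term makes $e_2 - e_1^2$ concave. The remaining technical care is to confirm that \emph{exact} (not merely approximate) rational optimization is achievable in polynomial time for this convex quadratic program and that the extracted memoryless scheduler genuinely attains the optimum rather than only approaching it; both follow from compactness of $\mathcal{P}$ and from known exact solvers for convex quadratic programming.
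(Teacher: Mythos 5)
Your proposal is correct and takes essentially the same route as the paper's proof: both argue that the objective $e_2 - e_1^2$ is concave while the constraints (\ref{eq:nonnegative})--(\ref{eq:expectations}) are linear, so the maximum is computable in polynomial time by exact convex quadratic programming (the paper cites \cite{kozlov1979polynomial}), and both extract the memoryless scheduler via the same normalization $\sched(s)(\alpha) = x_{s,\alpha}/\sum_{\beta\in \Act(s)} x_{s,\beta}$ with the frequency-to-scheduler correspondence from \cite{Kallenberg}. Your additional remarks on attainment of the supremum and the zero-denominator case are sound refinements but do not change the argument.
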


\vspace{6pt}
\noindent
\textbf{Computation of the demonic variance.}
The demonic variance can also be expressed as the solution to a quadratic program.
To encode the reachability probabilities for states in $T$ under two distinct schedulers,  we 
use variables $x_{s,\alpha}$ for all state weight pairs $(s,\alpha)$ and $y_q$ for $q\in T$ subject to constraints (\ref{eq:nonnegative}) -- (\ref{eq:def}) as before.
Additionally, we use variables $x_{s,\alpha}^\prime$ for all state weight pairs $(s,\alpha)$ and $y_q^\prime$ for $q\in T$ subject to the analogue constraints (\ref{eq:nonnegative}$^\prime$) -- (\ref{eq:def}$^\prime$) using these primed variables.
The maximization of the demonic variance can be expressed as
\begin{equation}
\label{eq:objdem}
\text{maximize } \quad \frac{1}{2} \sum_{q,r\in T} y_q \cdot y^\prime_r \cdot (\wgt(q)-\wgt(r))^2.
\end{equation}

\begin{restatable}{theorem}{bilinear}
\label{thm:bilinear}
The  maximum  in  (\ref{eq:objdem}) under constraints (\ref{eq:nonnegative}) -- (\ref{eq:def}), (\ref{eq:nonnegative}$^\prime$) -- (\ref{eq:def}$^\prime$)  is
$\Var^{\dem}_{\cM}(\WR)$.
\end{restatable}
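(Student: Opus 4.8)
The plan is to show that the bilinear program (\ref{eq:objdem}) and the optimization defining $\Var^{\dem}_{\cM}(\WR)$ range over exactly the same set of objective values, so that their optima coincide. The bridge is the observation that, for weighted reachability, the distribution of $\WR$ under a scheduler $\sched$ is completely determined by the vector of reachability probabilities $(\Pr^{\sched}_{\cM}(\lozenge q))_{q\in T}$: since $T$ is reached almost surely, $\WR$ takes the value $\wgt(q)$ with probability $\Pr^{\sched}_{\cM}(\lozenge q)$, and these probabilities sum to $1$.

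First I would fix a pair of schedulers $\sched_1,\sched_2$ and set $y_q = \Pr^{\sched_1}_{\cM}(\lozenge q)$ and $y'_r = \Pr^{\sched_2}_{\cM}(\lozenge r)$. By the definition of the product $\cM\otimes\cM$ and the product scheduler $\sched_1\otimes\sched_2$ given in the preliminaries, the two components evolve independently, so $X_1$ and $X_2$ are independent random variables with the marginals just described. Expanding $\frac12\mathbb{E}^{\sched_1\otimes\sched_2}_{\cM\otimes\cM}((X_1-X_2)^2)$ over the joint law and using independence then yields exactly $\frac12\sum_{q,r\in T} y_q\, y'_r\,(\wgt(q)-\wgt(r))^2$, i.e.\ the objective of (\ref{eq:objdem}). (Alternatively this follows from Lemma \ref{lem:variance_two} together with the identity $\sum_{q,r} y_q y'_r (\wgt(q)-\wgt(r))^2 = \Var^{\sched_1}_{\cM}(\WR)+\Var^{\sched_2}_{\cM}(\WR)+(\mathbb{E}^{\sched_1}_{\cM}(\WR)-\mathbb{E}^{\sched_2}_{\cM}(\WR))^2$, which uses $\sum_q y_q=\sum_r y'_r=1$.) Thus every admissible scheduler pair produces a feasible point of the program with the same objective value, giving $\Var^{\dem}_{\cM}(\WR)\le$ the program optimum.

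For the converse I would invoke the standard characterization already used for Theorem \ref{thm:WRmaxvar} (see \cite{Kallenberg}): the vectors $(y_q)_{q\in T}$ arising as projections of feasible solutions of (\ref{eq:nonnegative})--(\ref{eq:def}) are exactly the realizable reachability-probability vectors $(\Pr^{\sched}_{\cM}(\lozenge q))_{q\in T}$, and symmetrically for the primed constraint system. Hence any feasible point of the combined system is witnessed by some pair $\sched_1,\sched_2$ whose value $\Var^{\sched_1,\sched_2}_{\cM}(\WR)$ equals the objective at that point, giving the reverse inequality and therefore equality. Finally, since the objective depends only on $(y_q)_{q\in T}$ and $(y'_r)_{r\in T}$, which range over the compact projection of the feasible region contained in a product of two probability simplices, the continuous bilinear objective attains its maximum; so the supremum defining $\Var^{\dem}_{\cM}(\WR)$ is in fact realized by a scheduler pair, and the stated maximum equals $\Var^{\dem}_{\cM}(\WR)$.

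The step I expect to be the main obstacle is the converse direction, specifically pinning down that nothing is lost by restricting attention to reachability-probability vectors: one must argue that arbitrary, possibly history-dependent and randomized, scheduler pairs cannot produce a value exceeding what the polytope allows, which rests on the fact that the law of $\WR$ factors through the reachability probabilities, combined with the classical realizability result for those probabilities. The bilinear (non-concave) shape of (\ref{eq:objdem}) is what blocks a polynomial-time guarantee, but it does not affect the correctness of the equality established here.
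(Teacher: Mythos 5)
Your proof is correct and takes essentially the same route as the paper: the paper's own proof just states that the claim ``follows analogously to Theorem \ref{thm:WRmaxvar}'', which unpacks to exactly your argument --- the Kallenberg correspondence between feasible points of each (separable) constraint block and realizable reachability vectors, combined with the identity equating the bilinear objective with $\Var^{\sched_1,\sched_2}_{\cM}(\WR)$ via independence of the two components (or via Lemma \ref{lem:variance_two}). Your closing compactness argument for attainment of the maximum is a detail the paper leaves implicit, and it is consistent with how the paper later uses the theorem in Corollary \ref{cor:MD_dem}.
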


The quadratic objective function (\ref{eq:objdem}) is not concave. However, it is \emph{bilinear} and \emph{separable}. 
This means that the variables can be split into two sets,  the primed and the unprimed variables, such that the quadratic terms only contain products of variables from different sets and  each constraint contains 
only variables from the same set.
In general, checking whether the solution to a separable bilinear program exceeds a given threshold is NP-hard \cite{mangasarian1995linear}.
Nevertheless, solution methods tailored for bilinear programs that perform well in practice have been developed (see, e.g., \cite{kolodziej2013global}).
Further, bilinearity allows us to conclude:

\begin{restatable}{corollary}{MDdem}
\label{cor:MD_dem}
There is a pair of memoryless deterministic schedulers $\sched$ and $\tsched$ for $\cM$ such that 
$
\Var_{\cM}^{\dem} (\WR) = \Var^{\sched,\tsched}_{\cM}(\WR)$.
\end{restatable}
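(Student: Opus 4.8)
The plan is to exploit the separable bilinear structure identified after Theorem~\ref{thm:bilinear}. By that theorem, $\Var^{\dem}_{\cM}(\WR)$ equals the maximum of the objective (\ref{eq:objdem}) over the product $\mathcal{P}\times\mathcal{P}$, where $\mathcal{P}$ denotes the set of vectors $(x_{s,\alpha})_{s,\alpha}$ together with the induced $(y_q)_{q\in T}$ satisfying (\ref{eq:nonnegative})--(\ref{eq:def}), appearing once for the unprimed and once for the primed variables. Under the standing assumption that $T$ is reached almost surely, the expected total number of visits to states in $S\setminus T$ is finite and uniformly bounded, so $\mathcal{P}$ is a bounded, pointed polytope; in particular it has finitely many vertices.

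First I would recall the classical correspondence (e.g.\ \cite{Kallenberg}) between $\mathcal{P}$ and schedulers: every point of $\mathcal{P}$ is the state--action frequency vector of some scheduler and, conversely, every vertex of $\mathcal{P}$ is the frequency vector of a \emph{memoryless deterministic} scheduler. Concretely, a vertex determines such a scheduler $\sched$ by letting $\sched$ pick, in each state $s$ with $\sum_{\alpha}x_{s,\alpha}>0$, the (essentially unique) action $\alpha$ with $x_{s,\alpha}>0$; this scheduler then realizes $\Pr^{\sched}_{\cM}(\lozenge q)=y_q$ for all $q\in T$.

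Next I would run the standard vertex argument for separable bilinear programs. Let $(\mathbf{x}^{\ast},\mathbf{z}^{\ast})$ attain the maximum, with $\mathbf{x}$ the unprimed and $\mathbf{z}$ the primed data. Fixing $\mathbf{z}^{\ast}$, objective (\ref{eq:objdem}) becomes a \emph{linear} function of the unprimed variables over $\mathcal{P}$ (each $y_q$ is linear in the $x_{s,\alpha}$ by (\ref{eq:def})), so its maximum is attained at a vertex $\mathbf{x}^{\ast\ast}$ with value at least that of $(\mathbf{x}^{\ast},\mathbf{z}^{\ast})$, hence equal by optimality. Fixing now $\mathbf{x}^{\ast\ast}$, the objective is linear in the primed variables and is maximized at a vertex $\mathbf{z}^{\ast\ast}$, again without decreasing the value. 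Thus the maximum is realized by a pair of vertices of $\mathcal{P}$.

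Finally I would translate this vertex pair back to schedulers. Let $\sched$ and $\tsched$ be the memoryless deterministic schedulers corresponding to $\mathbf{x}^{\ast\ast}$ and $\mathbf{z}^{\ast\ast}$. Since $\Pr^{\sched}_{\cM}(\lozenge q)=y_q$ and $\Pr^{\tsched}_{\cM}(\lozenge r)=y^\prime_r$, and the value of $\WR$ in the two independent copies equals $\wgt(q)$ and $\wgt(r)$ respectively, the objective at this vertex pair equals $\frac12\sum_{q,r\in T}\Pr^{\sched}_{\cM}(\lozenge q)\,\Pr^{\tsched}_{\cM}(\lozenge r)\,(\wgt(q)-\wgt(r))^2=\Var^{\sched,\tsched}_{\cM}(\WR)$. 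As this common value is the maximum, we conclude $\Var^{\dem}_{\cM}(\WR)=\Var^{\sched,\tsched}_{\cM}(\WR)$. I expect the main subtlety to be justifying the vertex--to--memoryless-deterministic-scheduler correspondence and, relatedly, confirming that $\mathcal{P}$ is genuinely a bounded polytope rather than an unbounded polyhedron; this is exactly where the almost-sure reachability assumption is essential.
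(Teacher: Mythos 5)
Your proposal is correct and takes essentially the same approach as the paper: both proofs exploit the separable bilinear structure by fixing one block of variables, noting that the objective (\ref{eq:objdem}) becomes linear in the other block, and alternating to deterministic memoryless optima on each side. The only difference is presentation---the paper casts the inner linear optimization as a weighted-reachability expectation maximization with modified weights $\wgt'(q)=\sum_{r\in T}y'_r(\wgt(q)-\wgt(r))^2$ and invokes the known sufficiency of memoryless deterministic schedulers, whereas you invoke the equivalent vertex characterization of the frequency polytope from \cite{Kallenberg}.
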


For the complexity of the threshold problem, we can conclude an NP upper bound.
Whether the computation of the demonic variance is possible in polynomial time is left open.

\begin{restatable}{corollary}{NPupper}
Given $\cM$, $\wgt$ and $\vartheta\in \mathbb{Q}$, deciding whether $\Var^{\dem}_{\cM}(\WR)\geq \vartheta$ in in NP.
\end{restatable}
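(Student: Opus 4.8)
The plan is to turn Corollary~\ref{cor:MD_dem} directly into a guess-and-check procedure. That corollary guarantees that the supremum defining $\Var^{\dem}_{\cM}(\WR)$ is attained by a pair of \emph{memoryless deterministic} schedulers, and such a pair is a polynomial-size certificate: each memoryless deterministic scheduler is a map from the states $S$ to enabled actions and is thus described by $|S|$ action choices, i.e.\ by polynomially many bits in $\Size(\cM)$. The NP procedure therefore guesses a pair $(\sched,\tsched)$ of memoryless deterministic schedulers and verifies in polynomial time that $\Var^{\sched,\tsched}_{\cM}(\WR)\geq\vartheta$, accepting iff the test succeeds. Correctness is immediate: every pair of schedulers yields a value at most $\Var^{\dem}_{\cM}(\WR)$, so an accepting branch forces $\Var^{\dem}_{\cM}(\WR)\geq\vartheta$; conversely, if $\Var^{\dem}_{\cM}(\WR)\geq\vartheta$, then by Corollary~\ref{cor:MD_dem} an optimal memoryless deterministic pair is guessed on some branch and passes the test.

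It remains to argue that the verification runs in polynomial time. Fixing the memoryless deterministic scheduler $\sched$ turns $\cM$ into a Markov chain. For each target $q\in T$, the vector of reachability probabilities $(\Pr^{\sched}_{\cM,s}(\lozenge q))_{s\in S}$ is the solution of the standard linear reachability system (value $1$ at $q$, value $0$ at the other absorbing states, and $z_s=\sum_{t\in S}P(s,\sched(s),t)\,z_t$ for $s\in S\setminus T$); this system has a unique solution because $T$ is reached almost surely under the standing assumption of this section. Reading off the $\sinit$-components gives the probabilities $y_q=\Pr^{\sched}_{\cM}(\lozenge q)$, and by Gaussian elimination on these rational systems each $y_q$ is a rational of polynomial bit-size computable in polynomial time; the analogous probabilities $y'_r=\Pr^{\tsched}_{\cM}(\lozenge r)$ for the guessed $\tsched$ are obtained the same way.

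By the identity underlying Theorem~\ref{thm:bilinear}, the value $\Var^{\sched,\tsched}_{\cM}(\WR)$ equals the objective~(\ref{eq:objdem}) evaluated at these $y_q,y'_r$, namely $\tfrac{1}{2}\sum_{q,r\in T}y_q\,y'_r\,(\wgt(q)-\wgt(r))^2$; this is a sum of $|T|^2$ products of rationals of polynomial bit-size and is therefore itself a rational of polynomial bit-size, computable exactly in polynomial time, so the final comparison with $\vartheta$ reduces to comparing two exact rationals and is polynomial. I expect the only point requiring care to be this bit-size and exactness bookkeeping: all data are rational and Gaussian elimination on rational systems returns exact rational solutions of polynomially bounded size, which is what lets the comparison with $\vartheta$ be decided without approximation error. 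The overall structure is then a routine nondeterministic guess of polynomial-size certificates followed by deterministic polynomial-time verification, yielding the claimed NP bound.
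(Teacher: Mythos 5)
Your proposal is correct and follows the same route as the paper: guess a pair of memoryless deterministic schedulers (justified by Corollary~\ref{cor:MD_dem}), compute $\Var^{\sched,\tsched}_{\cM}(\WR)$ exactly in polynomial time, and compare with $\vartheta$. The paper states the polynomial-time verification step without detail, whereas you spell it out via linear reachability systems and rational arithmetic — a sound elaboration of the same argument.
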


\section{Accumulated rewards}
\label{sec:acc}

One of the most important random variables studied on MDPs are accumulated rewards:
Let $\mathcal{M} = (S,\Act,P,\sinit)$ be an MDP and let $\rew\colon S \to \mathbb{N}$ be a reward function.
We extend the reward function to paths $\pi=s_0\act_0s_1\dots$ by
$
\rew(\pi)= \sum_{i=0}^\infty \rew(s_i)$.
For this random variable, we  prove the following result:
\vspace{4pt}

\noindent \textbf{Main result. }
\textit{
The maximal variance $\Var^{\max}_{\cM}(\rew)$ and an optimal  randomized finite-memory scheduler can be computed in exponential time.}

\textit{
The demonic variance $\Var^{\dem}_{\cM}(\rew)$ can be computed as the solution to a bilinear program that can be constructed in exponential time. Furthermore, there is a pair of  deterministic finite-memory schedulers realizing the demonic variance.}
\vspace{4pt}

\noindent
We provide a sketch outlining the proof strategy. For a detailed exposition, see Appendix~\ref{app:acc}.

\begin{proof}[Proof sketch for the main result.]
It can be   checked in polynomial time whether $\mathbb{E}^{\max}_{\cM}(\rew) < \infty$ \cite{deAlf99}.
If this is the case,
this allows us to perform the same preprocessing as in Section \ref{sec:WR} that  removes all end components without changing  the possible distributions of $\rew$ \cite{deAlf99}.

\vspace{6pt}
\noindent
\textbf{Bounding expected values and expectation maximizing actions:}
After the pre-processing, a terminal state is reached almost surely. As shown in \cite{PiribauerSB22}, this allows to obtain a bound $Q$ 
on $\mathbb{E}^{\max}_\cM (\rew^2)$ in polynomial time.
Further, the maximal expectation $\mathbb{E}^{\max}_{\cM,s}(\rew)$ from each state $s$ can be computed in polynomial time \cite{BerTsi91,deAlf99}.
From these values, a set of \emph{maximizing actions} $\Act^{\max}(s)$ for each state $s$ can be computed. After the preprocessing, a scheduler is expectation optimal iff it only chooses actions from these sets.
If a scheduler $\sched$ initially chooses a non-maximizing action in a state $s$, the expected value $\mathbb{E}_{\cM,s}^{\sched}(\rew)$ is strictly smaller than $\mathbb{E}_{\cM,s}^{\max}(\rew)$. We define $\delta$ to be the minimal difference between these values ranging over all starting states and non-maximizing actions.
So, $\delta$ is the ``minimal loss'' in expected value of $\rew$ received by choosing a non-maximizing action.

\vspace{6pt}
\noindent
\textbf{Switching to expectation maximization:}
Using the values $Q$ and $\delta$, we  provide a bound $B$ such that any scheduler  choosing a non-maximizing action with positive probability after a path $\pi$ with $\rew(\pi)\geq B$ cannot realize the maximal variance. 
The bound $B$ can be computed in polynomial time and its numerical value is exponential in the size of the input.

It follows that variance maximizing schedulers have to maximize the future expected rewards after a reward of at least $B$ has been accumulated. Furthermore, we can show that among all expectation maximizing schedulers, a variance maximizing scheduler has to be used above the reward bound $B$.
In \cite{PiribauerSB22}, it is shown that a memoryless deterministic expectation maximizing scheduler $\usched$ that maximizes the variance among all expectation maximizing schedulers can be computed in polynomial time. 
So, schedulers maximizing the variance of $\rew$ can be chosen to behave like $\usched$ once a reward of at least $B$ has been accumulated.

\vspace{6pt}
\noindent
\textbf{Quadratic program:}
Now, we can unfold the MDP $\cM$ by storing in the state space   how much reward has been accumulated  up to the bound $B$. This results on an exponentially larger MDP $\cM^\prime$. Using the expected values $\mathbb{E}^{\usched}_{\cM,s}(\rew)$ and the variances $\Var^{\usched}_{\cM,s}(\rew)$ under $\usched$ from each state $s$, we can formulate a quadratic program similar to the one for weighted reachability in Section \ref{sec:WR} for this unfolded MDP $\cM^\prime$.
From the solution to this  quadratic program, the maximal variance and an optimal memoryless  scheduler $\sched$ for $\cM^\prime$ can be extracted. Transferred back to $\cM$, the scheduler $\sched$ corresponds to a reward-based finite-memory scheduler that keeps track of the accumulated reward up to bound $B$.
As the quadratic program is convex, these computations can be carried out in exponential time.

\vspace{6pt}
\noindent
\textbf{Demonic variance:}
For the demonic variance, the overall proof follows the same steps.
Similar to the bound $B$ above, a bound $B^\prime$ can be provided such that in any pair of scheduler $\sched$ and $\tsched$ realizing the demonic variance, both schedulers can be assumed to switch to the behavior  of the memoryless deterministic scheduler $\usched$ above the reward bound $B^\prime$.
Again by unfolding the state space up to this reward bound, the demonic variance can be computed via a bilinear program of exponential size similar to the one used in 
Section \ref{sec:WR} for weighted reachability. Furthermore, the pair of optimal memoryless deterministic schedulers in the unfolded MDP, which can be extracted from the solution, corresponds to a pair of deterministic reward-based finite-memory schedulers in the original MDP $\cM$.
\end{proof}

\section{Conclusion}
We introduced the notion of demonic variance that quantifies the uncertainty under probabilism \emph{and} non-determinism of a random variable $X$ in an MDP $\cM$.
As this demonic variance is at most twice as big as the maximal variance of $X$, we used it to define the NDS for MDPs.

The demonic variance can be used to provide new types of guarantees on the behavior of  systems. 
 A first step in this direction is the variant of Chebyshev's Inequality using the demonic variance proved in this paper. Furthermore, the demonic variance and the NDS can serve as the basis for notions of responsibility. On the one hand, such notions could ascribe responsibility for the uncertainty to non-determinism and probabilism.
 On the other hand, comparing the NDS from different starting states can be used to identify regions of the state space in which the non-deterministic choices are of high importance. 
 
 For weighted reachability and accumulated rewards, we proved that randomized finite-memory schedulers are sufficient to maximize the variance. For the demonic variance,
 even pairs of deterministic finite-memory schedulers are sufficient.
 While we obtained upper bounds via the formulation of the computation problems as quadratic programs, determining the precise complexities is left as future work.
In the case of accumulated rewards, we restricted to non-negative rewards. When dropping this restriction, severe difficulties have to be expected
as several  related problems on MDPs  exhibit inherent number-theoretic difficulties  rendering the decidability status of the corresponding decision problems open \cite{icalp2020}.

Of course the investigation of the demonic variance and NDS for further random variables constitutes an interesting direction for future work.
For practical purposes, studying also the approximability of the maximal and demonic variance is important.

Finally, In the  spirit of the demonic variance,  further notions can be defined to quantify the uncertainty in $X$ if the non-determinism in two executions of $\cM$ is not resolved independently, but information can be passed between the two executions. This could  be useful, e.g., to analyze the potential power of coordinated attacks on a network.
Formally, such a notion could be defined as
$
\sup_{\sched} \mathbb{E}^{\sched}_{\cM\otimes \cM} ( (X_1-X_2)^2)
$
 where $\sched$ ranges over all schedulers for $\cM\otimes \cM$.
In this context, also using an asynchronous product of $\cM$ with $\cM$ could be reasonable.

%
% ---- Bibliography ----
%
%
 \bibliographystyle{splncs04}
 \bibliography{references/lit}

\clearpage

\begin{appendix}

\section{Omitted proofs of Section \ref{sec:prelim}}
\label{app:prelim}

\propmixing*

   \begin{proof}
    We prove the result by induction on $k$. For $k=0$, there is nothing to show.
    So, assume the  induction hypothesis (IH) that the claim holds for $k\in \mathbb{N}$.
    Let $\pi =
    s_0 \, \act_0 \, s_1 \, \act_1 \,  \ldots \act_{k-1} \, s_k \act_k \, s_{k+1}$ and $\pi^\prime =
    s_0 \, \act_0 \, s_1 \, \act_1 \,  \ldots \act_{k-1} \, s_k$.
    Then,
    \begin{align*}
    & P^{p\sched \oplus (1-p)\tsched} (\pi) \\
     = & P^{p\sched \oplus (1-p)\tsched} (\pi^\prime) \cdot (p\sched \oplus (1-p)\tsched) (\pi^\prime)(\act_k) \cdot P(s_k,\act_k,s_{k+1}) \\
     \overset{\text{(IH)}}{=} & (p P^\sched(\pi^\prime) + (1-p) P^\tsched(\pi^\prime) ) \cdot P(s_k,\act_k,s_{k+1}) \\
     &  \cdot \left(\frac{p\cdot P^\sched(\pi)\cdot \sched(\pi)(\alpha)}{p\cdot P^\sched(\pi) + (1-p) \cdot P^{\tsched}(\pi)}  + \frac{(1-p)\cdot P^\tsched(\pi)\cdot \tsched(\pi)(\alpha)}{p\cdot P^\sched(\pi) + (1-p) \cdot P^{\tsched}(\pi)}\right)  \\
     = & p P^\sched(\pi)\cdot \sched(\pi)(\alpha)  \cdot P(s_k,\act_k,s_{k+1}) \\
     &+ (1-p) P^\tsched(\pi)\cdot \tsched(\pi)(\alpha) \cdot P(s_k,\act_k,s_{k+1}) \\
     =& pP^\sched(\pi) + (1-p)P^{\tsched}(\pi).
           \end{align*}
This concludes the induction step.  
    \end{proof}

    \variancemix*
    
    \begin{proof}
We express the variance of $X$ under $\tsched$ as 
\begin{align*}
& \Var^{\tsched}_{\cM}(X) = \mathbb{E}^{\tsched}_{\cM}(X^2) -  \mathbb{E}^{\tsched}_{\cM}(X)^2 \\
={}& p\, \mathbb{E}^{\sched_1}_{\cM}(X^2) + (1-p) \mathbb{E}^{\sched_2}_{\cM}(X^2) - \left( p\,\mathbb{E}^{\sched_1}_{\cM}(X) +(1-p) \mathbb{E}^{\sched_2}_{\cM}(X)  \right)^2 \\
={}& p\,\mathbb{E}^{\sched_1}_{\cM}(X^2) - p^2 \mathbb{E}^{\sched_1}_{\cM}(X)^2 -p(1-p)\mathbb{E}^{\sched_1}_{\cM}(X)^2 +p(1-p)\mathbb{E}^{\sched_1}_{\cM}(X)^2 \\
&+ (1-p)\,\mathbb{E}^{\sched_2}_{\cM}(X^2) - (1-p)^2 \mathbb{E}^{\sched_2}_{\cM}(X)^2 \\
& -p(1-p)\mathbb{E}^{\sched_2}_{\cM}(X)^2 +p(1-p)\mathbb{E}^{\sched_2}_{\cM}(X)^2 \\
&- 2 p (1-p) \mathbb{E}^{\sched_1}_{\cM}(X)\mathbb{E}^{\sched_2}_{\cM}(X) \\
={}& p\Var^{\sched_1}_{\cM}(X) + (1-p) \Var^{\sched_2}_{\cM}(X) + p(1-p) (\mathbb{E}^{\sched_1}_{\cM}(X) - \mathbb{E}^{\sched_1}_{\cM}(X))^2.
\end{align*}
\end{proof}

\section{Omitted proofs of Section \ref{sec:demonic}}
\label{app:demonic}

\variancetwo*

\begin{proof}
We compute
\begin{align*}
&  \mathbb{E}^{\sched_1\otimes \sched_2}_{\cM\otimes \cM} ((X_1 - X_2)^2)  \\
 ={} &  \mathbb{E}^{\sched_1\otimes \sched_2}_{\cM\otimes \cM} (X_1^2 -2X_1X_2 + X_2^2)\\
 ={} & \mathbb{E}^{\sched_1}_{\cM} (X^2) - 2  \mathbb{E}^{\sched_1}_{\cM} (X)  \mathbb{E}^{\sched_2}_{\cM} (X) +  \mathbb{E}^{\sched_2}_{\cM} (X^2)  \quad \text{ (independence)} \\
 = {}& \Var^{\sched_1}_{\cM} (X) +  \mathbb{E}^{\sched_1}_{\cM} (X)^2 - 2  \mathbb{E}^{\sched_1}_{\cM} (X)  \mathbb{E}^{\sched_2}_{\cM} (X) +  \Var^{\sched_2}_{\cM} (X) +  \mathbb{E}^{\sched_2}_{\cM} (X)^2  \\
  ={} & \Var^{\sched_1}_{\cM} (X) +  \Var^{\sched_2}_{\cM} (X)  + (\mathbb{E}_{\cM}^{\sched_1}(X) - \mathbb{E}_{\cM}^{\sched_2}(X))^2.  \qedhere
\end{align*}
\end{proof}

\boundstwo*

\begin{proof}
Clearly, $\Var^{\dem}_{\cM}(X) \geq \Var^{\max}_{\cM}(X)$ as for any scheduler $\sched$, we have $\Var^{\sched,\sched}_{\cM}(X) = \Var^{\sched}_{\cM}(X)$.
For a pair of schedulers $\sched$ and $\tsched$, let $\rsched = \frac{1}{2}\sched \oplus \frac{1}{2} \tsched$. Then, by Lemma \ref{lem:variance_mix},
\begin{align*}
& \Var^{\rsched}_{\cM}(X)  = \frac{1}{2} \Var^{\sched}_{\cM}(X) + \frac{1}{2} \Var^{\tsched}_{\cM}(X) + \frac{1}{4} (\mathbb{E}^{\sched}_{\cM}(X) - \mathbb{E}^{\tsched}_{\cM}(X))^2 \\
 \geq{} & \frac{1}{4} \Var^{\sched}_{\cM}(X) + \frac{1}{4} \Var^{\tsched}_{\cM}(X) + \frac{1}{4} (\mathbb{E}^{\sched}_{\cM}(X) - \mathbb{E}^{\tsched}_{\cM}(X))^2  = \frac{1}{2}\Var^{\sched,\tsched}_{\cM}(X)
\end{align*}
where the last equality follows from Lemma \ref{lem:variance_two}. So,
\[
\Var^{\dem}_{\cM}(X) = \sup_{\sched, \tsched}\Var^{\sched,\tsched}_{\cM}(X) \leq 2\sup_{\rsched}\Var^{\rsched}_{\cM}(X) = 2 \Var^{\max}_{\cM}(X). \qedhere
\]
\end{proof}

\cheby*

\begin{proof}
The Markov inequality states that for any random variable $Y$ only taking non-negative values and any value $a>0$, we have $\Pr(Y\geq a) \leq \frac{\mathbb{E}(Y)}{a}$.
Considering the random variable $(X_1-X_2)^2$ in $\cM\otimes\cM$, we can apply the Markov inequality to obtain
\begin{align*}
&\Pr_{\cM\otimes\cM}^{\sched\otimes \tsched} \left( |X_1 - X_2| \geq k\cdot  \sqrt{\Var^{\dem}_{\cM}(X)} \right) \\
& = \Pr_{\cM\otimes\cM}^{\sched\otimes \tsched} \left( (X_1 - X_2)^2 \geq k^2\cdot  \Var^{\dem}_{\cM}(X) \right) \\
&\leq  \frac{\mathbb{E}_{\cM\otimes\cM}^{\sched\otimes \tsched} ((X_1-X_2)^2)}{k^2\cdot  \Var^{\dem}_{\cM}(X) } \leq   \frac{2\Var^{\dem}_{\cM}(X)}{k^2\cdot  \Var^{\dem}_{\cM}(X) } = \frac{2}{k^2}. \qedhere
\end{align*}
\end{proof}

\corcheby*

\begin{proof}
We observe
\begin{align*}
& \Pr_{\cM\otimes\cM}^{\sched\otimes \tsched} \left( |X_1 - X_2| \geq k\cdot  \sqrt{\Var^{\max}_{\cM}(X)} \right) \\
\leq {}& \Pr_{\cM\otimes\cM}^{\sched\otimes \tsched} \left( |X_1 - X_2| \geq k\cdot  \sqrt{\Var^{\dem}_{\cM}(X)/2} \right) \\
={} & \Pr_{\cM\otimes\cM}^{\sched\otimes \tsched} \left( |X_1 - X_2| \geq \frac{k}{\sqrt{2}}\cdot  \sqrt{\Var^{\dem}_{\cM}(X)} \right) \\
\leq {} & \frac{2}{(k/\sqrt{2})^2 }= \frac{4}{k^2}
\end{align*}
where the last line follows from Theorem \ref{thm:cheby}
\end{proof}

\ndszero*

\begin{proof}
Using Lemma \ref{lem:variance_two}, we have for any pair of schedulers $\sched$ and $\tsched$ that
\begin{align*}
\Var^{\sched,\tsched}_{\cM}(X) &= \frac{1}{2} (\Var_{\cM}^{\sched}(X) + \Var_{\cM}^{\tsched}(X) + (\mathbb{E}^{\sched}_{\cM}(X) - \mathbb{E}^{\tsched}_{\cM}(X) )^2) \\
&= \frac{1}{2} (\Var_{\cM}^{\sched}(X) + \Var_{\cM}^{\tsched}(X))
\end{align*}
where the last equality follows from our assumption. So,  the variance $\Var_{\cM}^{\sched}(X)$  or the variance $\Var_{\cM}^{\tsched}(X)$ is at least as large as  $\Var_{\cM}^{\sched,\tsched}(X)$. Consequently,
$\Var^{\max}_{\cM}(X) = \Var^{\dem}_{\cM}(X)$.
\end{proof}

\thmndsone*

\begin{proof}
\phantom{.}

\noindent
(1):
First, note that $\mathbb{E}^{\min}_{\cM}(X)<\mathbb{E}^{\max}_{\cM}(X)$ as otherwise $\nds(\cM,X)=0$ by Proposition \ref{prop:nds0}. Define 
\[
D\eqdef \mathbb{E}^{\max}_{\cM}(X) - \mathbb{E}^{\min}_{\cM}(X).
\]
Let $\varepsilon>0$.
Let $\sched$ and $\tsched$ be two schedulers such that $\Var^{\sched,\tsched}_{\cM}(X) \geq \Var^{\dem}_{\cM}(X) - \delta$ for a $\delta>0$ depending on $D$ and $\varepsilon$, which we will specify later.
Define $\rsched \eqdef \frac{1}{2}\sched \oplus \frac{1}{2} \tsched$. Then, by Lemma \ref{lem:variance_mix},
\begin{align*}
\Var^{\rsched}_{\cM}(X) & = \frac{1}{2} \Var^{\sched}_{\cM}(X) + \frac{1}{2} \Var^{\tsched}_{\cM}(X) + \frac{1}{4} (\mathbb{E}^{\sched}_{\cM}(X) - \mathbb{E}^{\tsched}_{\cM}(X))^2 .
\end{align*}
On the other hand, by Lemma \ref{lem:variance_two},
\begin{align*}
\Var^{\sched,\tsched}_{\cM}(X) &=  \frac{1}{2} \Var^{\sched}_{\cM}(X) + \frac{1}{2} \Var^{\tsched}_{\cM} + \frac{1}{2} (\mathbb{E}^{\sched}_{\cM}(X) - \mathbb{E}^{\tsched}_{\cM}(X))^2.
\end{align*}
As $\nds(\cM,X)=1$, we know $\Var^{\dem}_{\cM}(X) = 2 \Var^{\max}_{\cM}(X) $. So, 
\begin{align*}
&\Var^{\sched,\tsched}_{\cM}(X) \geq  \Var^{\dem}_{\cM}(X) - \delta \geq 2 \Var^{\rsched}_{\cM}(X) - \delta \\
& = \Var^{\sched}_{\cM}(X) +  \Var^{\tsched}_{\cM} (X) + \frac{1}{2} (\mathbb{E}^{\sched}_{\cM}(X) - \mathbb{E}^{\tsched}_{\cM}(X))^2 - \delta \\
& = \Var^{\sched,\tsched}_{\cM}(X) + \frac{1}{2} (\Var^{\sched}_{\cM}(X) +  \Var^{\tsched}_{\cM} (X) ) - \delta.
\end{align*}
So, we conclude
$\Var^{\sched}_{\cM}(X)  \leq 2 \delta$ and $  \Var^{\tsched}_{\cM}(X) \leq 2 \delta$.
Now, let $\mathfrak{Max}$ be a scheduler with $\mathbb{E}^{\mathfrak{Max}}_{\cM}(X) = \mathbb{E}^{\max}_{\cM}(X)$ and 
$\mathfrak{Min}$ a scheduler with $\mathbb{E}^{\mathfrak{Min}}_{\cM}(X) = \mathbb{E}^{\min}_{\cM}(X)$.
Then, 
\begin{align*}
2\Var^{\sched,\tsched}_{\cM}(X) \geq 2\Var^{\mathfrak{Max},\mathfrak{Min}}_{\cM}(X) - 2\delta
\end{align*}
which is equivalent to 
\begin{align*}
&\Var^{\sched}_{\cM}(X) +  \Var^{\tsched}_{\cM} (X) + (\mathbb{E}^{\sched}_{\cM}(X) - \mathbb{E}^{\tsched}_{\cM}(X))^2 \\
\geq &\Var^{\mathfrak{Max}}_{\cM}(X) +  \Var^{\mathfrak{Min}}_{\cM} (X) + (\mathbb{E}^{\mathfrak{Max}}_{\cM}(X) - \mathbb{E}^{\mathfrak{Min}}_{\cM}(X))^2 - 2\delta.
\end{align*}
Plugging in $\Var^{\sched}_{\cM}(X)  \leq 2 \delta$,  $  \Var^{\tsched}_{\cM}(X) \leq 2 \delta$ and $D= \mathbb{E}^{\max}_{\cM}(X) - \mathbb{E}^{\min}_{\cM}(X)$, we obtain
\[
(\mathbb{E}^{\sched}_{\cM}(X) - \mathbb{E}^{\tsched}_{\cM}(X))^2 \geq D^2 - 6\delta.
\]
W.l.o.g., assume $\mathbb{E}^{\sched}_{\cM}(X) \geq \mathbb{E}^{\tsched}_{\cM}(X)$ and let $E\eqdef \mathbb{E}^{\sched}_{\cM}(X) - \mathbb{E}^{\tsched}_{\cM}(X)$.
We conlcude
\[
E\geq \sqrt{D^2 - 6\delta}.
\]
Now, we can specify $\delta$ depending on $\varepsilon$ and $D$. 
We may choose any $\delta>0$ such that $\delta\leq \varepsilon<2$ and $D-\sqrt{D^2 - 6\delta}<\varepsilon$.
Then, $\Var^{\sched}_{\cM}(X)  \leq \varepsilon$ and $  \Var^{\tsched}_{\cM}(X) \leq \varepsilon$.
Further, $E\geq D-\varepsilon$ which implies $\mathbb{E}^{\sched}_{\cM}(X)\geq \mathbb{E}^{\max}_{\cM}(X) - \varepsilon$ and $\mathbb{E}^{\tsched}_{\cM}(X)\leq \mathbb{E}^{\min}_{\cM}(X) + \varepsilon$. 
\vspace{12pt}

\noindent (2):
The same reasoning as in item (1) applied to schedulers $\sched_0$ and $\sched_1$ with $\Var^{\sched_0,\sched_1}_{\cM}(X) = \Var^{\dem}_{\cM}(X)$ allows us first to conclude that 
$\Var^{\sched_0}_{\cM}(X)=0$ and $\Var^{\sched_1}_{\cM}(X)=0$. Then, we obtain 
\[
(\mathbb{E}^{\sched_0}_{\cM}(X) - \mathbb{E}^{\sched_1}_{\cM}(X))^2 = (\mathbb{E}^{\mathfrak{Max}}_{\cM}(X) - \mathbb{E}^{\mathfrak{Min}}_{\cM}(X))^2.
\]
Assuming w.l.o.g., that $\mathbb{E}^{\sched_1}_{\cM}(X)>\mathbb{E}^{\sched_0}_{\cM}(X)$, we get $\mathbb{E}^{\sched_1}_{\cM}(X)=\mathbb{E}^{\max}_{\cM}(X)$ and $\mathbb{E}^{\sched_0}_{\cM}(X)=\mathbb{E}^{\min}_{\cM}(X)$.
Together, this implies 
\[
\Pr^{\sched_0}_{\cM}(X=\mathbb{E}^{\min}_{\cM}(X))=1\quad \text{ and }\quad \Pr^{\sched_1}_{\cM}(X=\mathbb{E}^{\max}_{\cM}(X))=1. 
\] 

\noindent (3):
W.l.o.g., we assume that all actions are enabled in all states. As there is always at least one enabled action, we can simply let all disabled actions in a state have the same transition dynamics as some enabled action.
For a natural number $k$, we let 
\[\Delta_k\eqdef \{f\colon \mathrm{Paths}_{\cM}^k \to \mathrm{Distr}(\Act)\}\]
be the set of functions from the set $\mathrm{Paths}_{\cM}^k$ of finite paths in $\cM$ containing $k$ transitions to the set of distributions $\mathrm{Distr}(\Act)$ over $\Act$.
Viewing $\mathrm{Distr}(\Act)$ as a subset of $[0,1]^{\Act}$, this is a compact space with the usual Euclidean topology as it is a closed subset of $[0,1]^{\Act}$. Similarly, we can view $\Delta_k$ as
 $\mathrm{Distr}(\Act)^{\mathrm{Paths}_{\cM}^k}$ which is a (finite) product of compact spaces and hence compact with the product topology. Note that in this finite product, a basis for the topology is given by 
 products of open sets.
 
 Now, the space of schedulers can be seen as
 \[
 \mathrm{Sched}(\cM) = \prod_{k=0}^\infty \mathrm{Distr}(\Act)^{\mathrm{Paths}_{\cM}^k}.
 \]
 Again, we equip this space with the product topology. By Tychonoff's theorem, this again results in a compact space.

Now, let $(\varepsilon_n)_{n\in \mathbb{N}}$ be a sequence of positive numbers converging to $0$. For each $n$, let $\mathfrak{Min}_{\varepsilon_n}$ be a scheduler as in item (1).
As  $\mathrm{Sched}(\cM)$ is compact, the sequence $(\mathfrak{Min}_{\varepsilon_n})_{n\in \mathbb{N}}$ has a converging subsequence $(\mathfrak{Min}_j)_{j\in \mathbb{N}}$.
Let $\mathfrak{Min}$ be the limit of this subsequence.

Let $\mu_j$ be the probability measure $\Pr^{\mathfrak{Min}_j}_{\cM}$ for each $j\in \mathbb{N}$ and let $\mu$ be the probability measure $\Pr^{\mathfrak{Min}}_{\cM}$.
We will show that $\mu_j$ weakly converges to $\mu$.
As shown in \cite{feinberg2014convergence}, it is sufficient to show for each finite union $U$ of elements of a countable basis of the topology on infinite paths that $\liminf_{j\to\infty} \mu_j(U) \geq \mu(U)$.
The set of cylinder sets forms a countable basis of the topology. So, let $U=\bigcup_{i=1}^\ell \Cyl(\pi_i)$ for finite paths $\pi_1,\dots, \pi_\ell$. W.l.o.g., we can assume that the cylinder sets $\Cyl(\pi_i)$ with $1\leq i \leq \ell$ are disjoint
as we can write $U$ as union of cylinder sets generated by paths of the same length.
So, in fact it is sufficient to prove $\liminf_{j\to\infty} \mu_j(C) \geq \mu(C)$ for all cylinder set $C$.

So, consider a finite path $\pi=s_0\act_0s_1\dots s_k$. Then,
\[
\mu(\Cyl(\pi)) = \Pr^{\mathfrak{Min}}_{\cM}(\pi) =  \prod_{h=0}^{k-1} P(s_i,\act_i,s_{i+1})\cdot \mathfrak{Min}(s_0\act_0s_1\dots s_i)(\act_i).
\]
For each $\delta>0$, the set 
\begin{align*}
S_{\delta} \eqdef \{ & \sched\in \mathrm{Sched}(\cM) \mid  \\
&\left|\mathfrak{Min}(s_0\act_0s_1\dots s_i)(\act_i) - \sched(s_0\act_0s_1\dots s_i)(\act_i)\right|  <\delta \\
&\text{ for all $0\leq i \leq k-1$}\}
\end{align*}
is open in the product topology on $\mathrm{Sched}(\cM)$. So, for each $\delta$, there is an $N$ such that $\mathfrak{Min}_j\in S_{\delta}$ for all $j>N$.
Now, let 
\[
\Delta_\delta \eqdef \sup_{\sched\in S_{\delta}} |\Pr^{\sched}_{\cM}(\pi) - \Pr^{\mathfrak{Min}}_{\cM}(\pi)|.
\]
Then, $\Delta_{\delta}\in \mathcal{O}(\delta)$. This allows us to conclude that for any $\delta^\prime$, there is an $N^\prime$ such that 
$|\Pr^{\mathfrak{Min}_j}_{\cM}(\pi) - \Pr^{\mathfrak{Min}}_{\cM}(\pi)|<\delta^\prime$ for all $j>N^\prime$.
So, $\lim_{j\to \infty} \mu_j(\Cyl(\pi)) = \mu(\Cyl(\pi))$ showing that $\mu_j$ weakly converges to $\mu$. for $j\to \infty$.

For bounded, continuous, Borel measurable functions $X$, we hence can conclude that 
\begin{align*}
 \mathbb{E}^{\mathfrak{Min}}_{\cM} (X) = \int X \mathrm{d} \mu 
= \lim_{j\to \infty} \int X \mathrm{d} \mu_j  =  \lim_{j\to \infty} \mathbb{E}^{\mathfrak{Min}_j}_{\cM} (X)  
 = \mathbb{E}^{\min}_{\cM}(X).
\end{align*}
As also 
$
\int X^2 \mathrm{d} \mu = \lim_{j\to \infty} \int X^2 \mathrm{d} \mu_j
$
and $\lim_{j\to \infty} \Var^{\mathfrak{Min}_j}_{\cM} (X) = 0$,
we  conclude 
$ \mathbb{E}^{\mathfrak{Min}}_{\cM} (X^2) = ( \mathbb{E}^{\min}_{\cM}(X))^2 $.
Hence,
$
\Var^{\mathfrak{Min}}_{\cM} (X) = 0$.
So, 
$
\Pr^{\mathfrak{Min}}_{\cM}(X=  \mathbb{E}^{\min}_{\cM}(X)) =1$.
The existence of the scheduler $\mathfrak{Max}$ as claimed in the theorem can be shown analogously.
\end{proof}

\section{Omitted proofs of Section \ref{sec:WR}}
\label{app:WR}

\WRmax*

\begin{proof}
The correctness of the constraints (\ref{eq:nonnegative}) -- (\ref{eq:def}) is shown, e.g., in \cite[Theorem 9.16]{Kallenberg}. So, for each scheduler $\sched$ there is a solution to (\ref{eq:nonnegative}) -- (\ref{eq:def}) with
\[
y_q = \Pr^{\sched}_{\cM}(\lozenge q)
\]
for all $q\in T$, and vice versa.
This implies directly that  the variables $e_1$ and $e_2$ defined by constraints (\ref{eq:expectations})  in terms of these variables $y_q$ for $q\in T$ satisfy
\[
e_1 = \mathbb{E}^{\sched}_{\cM}(\WR) \quad \text{ and }\quad e_2 = \mathbb{E}^{\sched}_{\cM}(\WR^2)
\] 
for a scheduler $\sched$ corresponding to the values $y_q$.
Hence, any value of the  objective (\ref{eq:objective}) that is obtainable under constraints  (\ref{eq:nonnegative}) -- (\ref{eq:expectations}) is the variance of $\WR$ under some scheduler, and vice versa.
\end{proof}

\corMR*

\begin{proof}
Clearly, the objective function is concave and all constraints are linear.  Hence, the maximal value of the objective function can be computed in polynomial time  \cite{kozlov1979polynomial}.
(Note that  the maximization of a concave function is equivalent to the minimization of a convex function.) 
Furthermore, from the values $x_{s,\alpha}$ in the solution, a memoryless scheduler can be computed by setting $\sched(s)(\alpha) = \frac{x_{s,\alpha}}{\sum_{\alpha\in \Act(s)} x_{s,\alpha}}$ (see, e.g., \cite{Kallenberg}).
\end{proof}

\bilinear*

\begin{proof}
The statement follows analogously to Theorem \ref{thm:WRmaxvar}.
\end{proof}

\MDdem*

\begin{proof}
From Theorem \ref{thm:bilinear}, we can conclude that there are schedulers $\sched$ and $\tsched$ with $\Var^{\dem}_{\cM}(\WR)= \Var^{\sched, \tsched}_{\cM}(\WR)$.
For this fixed scheduler $\tsched$, we can optimize $\Var^{\sched,\tsched}_{\cM}(\WR)$
by fixing the variables $y_q^\prime$ for $q\in T$ in the objective function (\ref{eq:objdem}).
The resulting \emph{linear} program consisting of constraints (\ref{eq:nonnegative}) -- (\ref{eq:def}) and the objective function (\ref{eq:objdem}), is the linear program that computes the 
maximal expected value of the weighted reachability problem with weight function $\wgt^\prime\colon T \to \mathbb{Q}$ given by $\wgt(q) = \sum_{r\in T}  y^\prime_r \cdot (\wgt(q)-\wgt(r))^2$.
As memoryless deterministic schedulers are sufficient to maximize weighted reachability (which is well-known, see, e.g., \cite{Kallenberg}), there is a  memoryless deterministic scheduler $\sched$
maximizing $\Var^{\sched,\tsched}_{\cM}(\WR)$ and hence $\Var^{\dem}_{\cM}(\WR)= \Var^{\sched, \tsched}_{\cM}(\WR)$. 
Analogously, we can show that $\tsched$ can be chosen to be memoryless deterministic.
\end{proof}

\NPupper*

\begin{proof}
We can guess two memoryless deterministic schedulers $\sched$ and $\tsched$. The value $\Var^{\sched,\tsched}_{\cM}(\WR)$ can then easily be computed in polynomial time and be compared to the threshold $\vartheta$.
By Corollary \ref{cor:MD_dem}, this solves the threshold problem in non-deterministic polynomial time.
\end{proof}

\section{Detailed exposition of the results of Section \ref{sec:acc}}
\label{app:acc}

Before we address the computation of the maximal variance and the demonic variance for accumulated rewards, we provide some prerequisites that are well-known.
\vspace{2pt}

\paragraph*{Prerequisites}

We assume that $\mathbb{E}^{\max}_{\cM}(\rew) < \infty$, which can be checked in polynomial time \cite{deAlf99}.
This implies that all reachable end components contain only states with reward $0$. In particular, all reachable absorbing states have reward $0$. Hence, we can perform the same preprocessing as in Section \ref{sec:WR} that introduces a new absorbing state $t^\ast$  and removes all end components. This pre-processing does not change the distributions of $\rew$ that can be realized by a scheduler. For details, see also \cite{deAlf99}.
So, w.l.o.g., we work under the following assumption:
\begin{assumption}
\label{ass:terminal}
We assume that an absorbing state $t$ with $\rew(t)=0$ is reached almost surely under any scheduler $\sched$ for $\cM$.
\end{assumption}

Under our assumption, the maximal expected accumulated reward
$
\mathbb{E}^{\max}_{\cM}(\rew)
$
as well as an optimal memoryless deterministic scheduler can be computed in polynomial time \cite{BerTsi91,deAlf99}.
The actions such a scheduler chooses in a state $s$ belong to the set of \emph{maximizing actions} $\Act^{\max}(s)$ that we define as
\[
\Act^{\max}(s) \\
=  \{\act\in \Act(s) \mid \mathbb{E}^{\max}_{\cM,s}(\rew) = \rew(s,\act) + \sum_{t\in S} P(s,\act,t) \cdot \mathbb{E}^{\max}_{\cM,t}(\rew) \}.
\]
Conversely, under Assumption \ref{ass:terminal} any scheduler only choosing actions in $\Act^{\max}(s)$ in each state $s$ maximize the expected value of $\rew$.
Furthermore, it is known that a bound for the expected value of $\rew^2$ can be computed in polynomial time:
\begin{lemma}[\textnormal{see, e.g.,} \cite{PiribauerSB22}]
\label{lem:boundQ}
Let $p_{\min}$ be the minimal non-zero transition probability in $\cM$ and let $R$ be the largest reward. Then,
\[
\max_{s\in S}\mathbb{E}^{\max}_{\cM,s}(\rew^2) \leq \frac{2\cdot |S|^2\cdot R^2}{p_{\min}{}^{2|S|}} \eqdef Q.
\]
\end{lemma}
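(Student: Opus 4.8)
The plan is to reduce the bound on $\mathbb{E}^{\max}_{\cM,s}(\rew^2)$ to a bound on the second moment of the hitting time of the terminal states. Let $T$ denote the set of absorbing states, which is reached almost surely under every scheduler (Assumption~\ref{ass:terminal}), and for an infinite path $\pi = s_0 \act_0 s_1 \dots$ let $\tau(\pi)$ be the first index $i$ with $s_i\in T$. Since $s_0,\dots,s_{\tau-1}$ are non-terminal and each carries reward at most $R$, while $s_\tau\in T$ and all later states are absorbing with reward $0$, we have $\rew(\pi)\le R\cdot\tau(\pi)$, hence $\rew^2\le R^2\tau^2$ pointwise. It therefore suffices to bound $\mathbb{E}^{\sched}_{\cM,s}(\tau^2)$ uniformly over all schedulers $\sched$ and all states $s$.

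The key step, and the main obstacle, is a uniform reachability bound: for every state $s$ and every scheduler $\sched$ we have $\Pr^{\sched}_{\cM,s}(\lozenge^{\le |S|}T)\ge p_{\min}^{|S|}$, where $\lozenge^{\le n}T$ denotes reaching $T$ within $n$ steps. The adversarial nature of the supremum is essential here: a scheduler maximizing $\rew^2$ will try to avoid $T$ as long as possible, so a shortest-path argument does not suffice and one must use that $T$ is reached almost surely under \emph{every} scheduler (equivalently, that no end component is left among the non-terminal states). I would establish the bound through the value functions $h_n(s)\eqdef\inf_{\sched}\Pr^{\sched}_{\cM,s}(\lozenge^{\le n}T)$, which satisfy $h_0=\mathds{1}_{T}$ and the Bellman recurrence $h_n(s)=\min_{\alpha\in\Act(s)}\sum_{t} P(s,\alpha,t)\,h_{n-1}(t)$ for $s\notin T$. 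An easy induction then shows $h_n(s)\in\{0\}\cup[p_{\min}^{n},1]$: if $h_n(s)>0$, then every action has positive one-step value, so for each $\alpha$ there is a successor $t$ with $P(s,\alpha,t)\ge p_{\min}$ and $h_{n-1}(t)\ge p_{\min}^{n-1}$, giving a one-step value of at least $p_{\min}^{n}$. It remains to see that the zero-set $Z_n\eqdef\{s:h_n(s)=0\}$ is empty by step $|S|$: the sets $Z_n$ are non-increasing in $n$, and if the chain stabilized at a non-empty set then every state of that set would retain an action staying inside it, yielding a non-terminal end component and contradicting almost-sure reachability of $T$; since a strictly decreasing chain of subsets of $S\setminus T$ empties within $|S|$ steps, $Z_{|S|}=\emptyset$, and therefore $h_{|S|}(s)\ge p_{\min}^{|S|}$ for all $s$.

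With the reachability bound in hand, the remainder is a routine tail estimate. Setting $q\eqdef p_{\min}^{|S|}$ and conditioning on the state reached after each block of $|S|$ steps, the bound applies afresh from every non-terminal state regardless of the past, so $\Pr^{\sched}_{\cM,s}(\tau>k|S|)\le(1-q)^{k}$ for all $k\in\Nat$. Writing $K\eqdef\lceil\tau/|S|\rceil$, this gives $\Pr(K\ge k)\le(1-q)^{k-1}$ together with $\tau\le |S|\cdot K$, whence $\mathbb{E}^{\sched}_{\cM,s}(\tau^2)\le|S|^2\,\mathbb{E}(K^2)\le|S|^2\sum_{k\ge1}(2k-1)(1-q)^{k-1}\le\frac{2|S|^2}{q^2}$. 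Combining with $\rew^2\le R^2\tau^2$ yields $\mathbb{E}^{\sched}_{\cM,s}(\rew^2)\le R^2\cdot\frac{2|S|^2}{p_{\min}^{2|S|}}=Q$, uniformly in $\sched$ and $s$, which is exactly the claimed bound. The only subtlety requiring care in a full write-up is making the block-wise conditioning precise for history-dependent randomized schedulers, but this is immediate from the fact that $h_{|S|}(s)\ge q$ holds from every state independently of the history.
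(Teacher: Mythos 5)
Your proof is correct, but there is nothing in the paper to compare it against: the paper does not prove Lemma~\ref{lem:boundQ} at all, it imports the bound from the cited reference \cite{PiribauerSB22}. Your argument is a sound, self-contained replacement. The decomposition $\rew(\pi)\le R\cdot\tau(\pi)$ is valid under Assumption~\ref{ass:terminal} (all states from the hitting time onwards contribute reward $0$), and the three subsequent steps all check out: the uniform bound $\inf_{\sched}\Pr^{\sched}_{\cM,s}(\lozenge^{\le|S|}T)\ge p_{\min}^{|S|}$ via the scheduler-infimum value functions $h_n$, the geometric block tail $\Pr^{\sched}_{\cM,s}(\tau>k|S|)\le(1-q)^{k}$, and the second-moment summation $\sum_{k\ge1}(2k-1)(1-q)^{k-1}=2/q^2-1/q\le 2/q^2$, which reproduces the constant $Q=2|S|^2R^2/p_{\min}^{2|S|}$ exactly (in fact with a small amount of slack). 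You also correctly identified and handled the one genuinely delicate point: because $\mathbb{E}^{\max}$ is a supremum over schedulers, a shortest-path argument (which only yields \emph{one} scheduler reaching $T$ quickly) is insufficient, and one must rule out a persistent zero-set of the $h_n$ recursion; your argument that a stabilized non-empty zero-set yields a scheduler confined to $S\setminus T$ forever, contradicting almost-sure reachability of $T$, is exactly the right use of the standing assumption, and the residual-scheduler observation makes the block-wise conditioning rigorous for history-dependent randomized schedulers. The only cosmetic remark is that the paper's reward function is sometimes written on state-action pairs ($\rew(s,\act)$); this changes nothing, since $R$ bounds the per-step reward in either convention.
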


Finally, we call a scheduler \emph{reward-based} if for paths $\pi$ and $\pi^\prime$ with $\rew(\pi)=\rew(\pi^\prime)$ and $\last(\pi)=\last(\pi^\prime)$, the scheduler chooses the same distribution over actions.
We say that a scheduler is a \emph{finite-memory} scheduler if it can be implemented with a finite set of memory states that can be updated after each transition according to the state that is reached such that the decisions are based only on the  memory state and the current state of the MDP.

\subsection{Computing the maximal variance}
\label{sec:structvar}
The goal of this section is to prove the following theorem.
\begin{theorem}
The maximal variance $\Var^{\max}_{\cM}(\rew)$ as well as an optimal randomized reward-based finite-memory scheduler can be computed in exponential time.
\end{theorem}

In this section, we will show that we can restrict the supremum in the definitions of the maximal variance \emph{and} the demonic variance to a class of schedulers that switch to a fixed  behavior given by a memoryless deterministic scheduler $\rsched$ once the reward on a path exceeds a computable bound $B$.
Before we prove this result, we introduce additional notation and afterwards define the bound $B$. 
\vspace{4pt}

\paragraph*{Notation}
We  define some notation used below.
Given a scheduler $\sched$ and a finite path $\pi$, we denote by $\sched_\pi$ the residual scheduler of $\sched$ after $\pi$ given by
$
\sched_\pi(\zeta) \eqdef \sched(\pi \circ \zeta)
$
for all finite paths $\zeta$ starting in $\last(\pi)$.
Further, for $\alpha\in \Act(\last(\pi))$, we denote by $\sched_{\pi,\alpha}$ the residual scheduler of $\sched$ after $\pi$ that starts by choosing $\alpha$ and behaves like $\sched$ afterwards. It is given by
assigning probability $1$ to action $\alpha$ initially in state $\last(\pi)$ and for all finite paths $\zeta=\last(\pi)\,\alpha\, \dots$ is given by 
$
\sched_{\pi,\alpha}(\zeta) \eqdef \sched(\pi \circ \zeta)$.
Further, given a scheduler $\tsched$, we denote by $\sched \uparrow_{\pi,\alpha} \tsched$ the scheduler that behaves like $\sched$ unless $\sched$ chooses $\alpha$ at the end of the path $\pi$. 
In this case, $\sched \uparrow_{\pi,\alpha} \tsched$ makes decisions according to $\tsched$ on the paths starting in $\last(\pi)$ from that moment on instead of according to $\sched_{\pi,\alpha}$.
\vspace{2pt}

\paragraph*{Bound $B$}
We begin by defining the  
 bound $B$, after which schedulers can switch to memoryless behavior as we will prove afterwards.
 This bound depends on the maximal expected value $M\eqdef  \max_{s\in S} \mathbb{E}^{\max}_{\cM,s}(\rew)$, the bound $Q$ for the expected value of $\rew^2$ given in Lemma \ref{lem:boundQ} and 
on the minimal loss in expected value $\delta$ that the choice of a non-maximizing action causes, which we define as follows:
\[
\delta \eqdef \min_{s\in S} \min_{\alpha\not \in \Act^{\max}(s)} \mathbb{E}^{\max}_{\cM,s} (\rew) - \sum_{t\in S} P(s,\alpha,t) \mathbb{E}^{\max}_{\cM,t}(\rew).
\]

We define
\[
B \eqdef \frac{Q + \frac{5}{2} M^2}{\delta} +2M +1
\]
The choice of $B$ will become clear later.
By Markov's Inequality, we know that $\Pr^{\sched}_{\cM}(\rew \geq 2 M) \leq \frac{1}{2}$. As $B\geq 2 M$, we conclude $\Pr^{\sched}_{\cM}(\rew\geq B)\leq \frac{1}{2}$.
\vspace{4pt}

\paragraph*{Switch to expectation maximization}
First, we show that the variance of any scheduler $\sched$ that chooses a non-maximizing action after a path with reward more than $B$ can be increased.
Let $\Sigma^{\max}$ be the set of schedulers $\sched$ with $\mathbb{E}^{\sched}_{\cM,s}(\rew) = \mathbb{E}^{\max}_{\cM,s}(\rew) $ for all states $s\in S$.

\begin{theorem}
\label{thm:ermax}
Let $\sched$ be a scheduler for $\cM$ and let $\pi$ be a finite path with $\rew(\pi)\geq B$. Assume that $\pi$ has positive probability under $\sched$ and that  there is an action $\alpha\in \Act(s) \setminus \Act^{\max}(s)$ such that
$\sched(\pi)(\alpha)>0$. 
Let $\tsched\in \Sigma^{\max}$. Then, setting $\rsched \eqdef \sched\uparrow_{\pi,\alpha}\tsched$, we have
\[
\Var^{\sched}_{\cM}(\rew) < \Var^{\rsched}_{\cM}(\rew) .
\]
\end{theorem}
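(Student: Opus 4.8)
The plan is to isolate the single ``branch'' on which $\rsched$ differs from $\sched$ and to compute the induced change in variance by conditioning on that branch. Write $s=\last(\pi)$, set $w \eqdef P^{\sched}(\pi)\cdot \sched(\pi)(\alpha)$ (positive by hypothesis), and observe that $\sched$ and $\rsched$ agree on every path that is \emph{not} of the form ``$\pi$ followed by the choice $\alpha$''. Realizing $\pi$ and then choosing any $\beta \neq \alpha$ is unaffected by the modification, so both schedulers reach the modified branch with the same probability $w$, and on its complement they induce identical laws of $\rew$. On the branch the total reward equals $r + W$, where $r \eqdef \rew(\pi)\geq B$ and $W$ is the future reward accumulated after $s$: distributed via $\sched_{\pi,\alpha}$ under $\sched$, and via $\tsched$ started in $s$ under $\rsched$. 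Since rewards are non-negative, this branch lies inside $\{\rew\geq B\}$.

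Writing $\mu_{\sched}\eqdef\mathbb{E}(W_{\sched})$, $\mu_{\tsched}\eqdef\mathbb{E}(W_{\tsched})$, $d\eqdef\mu_{\tsched}-\mu_{\sched}$ and $E\eqdef\mathbb{E}^{\sched}_{\cM}(\rew)$, conditioning on the branch gives $\mathbb{E}^{\rsched}_{\cM}(\rew)=E+w\,d$ and, as only the branch terms $(r+W)^2$ change, $\mathbb{E}^{\rsched}_{\cM}(\rew^2)-\mathbb{E}^{\sched}_{\cM}(\rew^2)=w\big(2rd+\mathbb{E}(W_{\tsched}^2)-\mathbb{E}(W_{\sched}^2)\big)$. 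Combining these through $\Var=\mathbb{E}(\rew^2)-\mathbb{E}(\rew)^2$ yields
\[
\Var^{\rsched}_{\cM}(\rew)-\Var^{\sched}_{\cM}(\rew)=w\Big(2d\,(r-E)+\big(\mathbb{E}(W_{\tsched}^2)-\mathbb{E}(W_{\sched}^2)\big)-w\,d^2\Big).
\]
As $w>0$, it suffices to prove the bracketed quantity is strictly positive.

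The final step is to bound each term by the quantities entering $B$. Because $\alpha\notin\Act^{\max}(s)$ and $\tsched\in\Sigma^{\max}$, starting with $\alpha$ costs at least $\delta$ in expectation, so $d\geq\delta>0$; also $d\leq M$ and $E\leq M$ for $M\eqdef\max_{s}\mathbb{E}^{\max}_{\cM,s}(\rew)$. The reward bound of Lemma~\ref{lem:boundQ} gives $\mathbb{E}(W_{\sched}^2)\leq Q$, while $\mathbb{E}(W_{\tsched}^2)\geq0$, so the middle term is at least $-Q$. Since the branch lies in $\{\rew\geq B\}$ and $B\geq 2M$, the inequality $\Pr^{\sched}_{\cM}(\rew\geq B)\leq\tfrac12$ gives $w\leq\tfrac12$, whence $w\,d^2\leq\tfrac12 M^2$. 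Using $r\geq B$, the bracket is therefore at least $2\delta(B-M)-Q-\tfrac12 M^2$, and substituting $B=\frac{Q+\frac52 M^2}{\delta}+2M+1$ makes this equal to $Q+\frac92 M^2+2\delta(M+1)>0$.

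I expect the main obstacle to be the bookkeeping that keeps the decomposition exact: verifying that $\sched$ and $\rsched$ induce literally the same conditional law of $\rew$ off the modified branch and assign it the same probability $w$, so that every term not involving $W$ cancels. Conceptually, the argument says that a prefix with $r\geq B$ already places the branch far to the right of the mean $E\leq M$, so shifting its expected outcome further right (by $d\geq\delta$) increases the spread; the role of $B$ is precisely to let this linear-in-$B$ gain dominate the bounded variance penalty $Q+\tfrac12 M^2$ incurred by replacing $\sched_{\pi,\alpha}$ with $\tsched$.
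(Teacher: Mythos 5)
Your proof is correct and follows essentially the same route as the paper's own proof: both isolate the single modified branch (the paper's event $A$, your probability $w$ being the paper's $p$), compare the first and second moments of the reward on that branch, and use $d\geq\delta$, $\mathbb{E}(W_{\sched}^2)\leq Q$, $E,d\leq M$, and $w\leq\tfrac{1}{2}$ together with the definition of $B$ so that the gain growing linearly in $r\geq B$ dominates the bounded loss terms. The only difference is cosmetic: you first derive an exact identity for the variance change and then bound it, whereas the paper bounds the two moment differences term by term before combining them.
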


\begin{proof}
Let $A$ be the event that scheduler $\sched$ chooses action $\alpha$ at the end of path $\pi$ and let $p$ be the probability of this event under $\sched$. Note that $\rsched$ switches to the behavior of $\tsched$ exactly in case $A$ occurs. By the definition of $B$, we know that $p\leq\frac{1}{2}$.
Now, we aim to estimate the difference
\begin{align*}
&\Var^{\sched}_{\cM}(\rew) - \Var^{\rsched}_{\cM}(\rew) \\
= & \mathbb{E}^{\sched}_{\cM}(\rew^2) - \mathbb{E}^{\rsched}_{\cM}(\rew^2) 
 + \mathbb{E}^{\rsched}_{\cM}(\rew)^2 -  \mathbb{E}^{\sched}_{\cM}(\rew)^2.
\end{align*}
First, we take a look at the expected values of $\rew^2$:
\begin{align*}
& \mathbb{E}^{\sched}_{\cM}(\rew^2) - \mathbb{E}^{\rsched}_{\cM}(\rew^2) \\
={} & (1-p)(\mathbb{E}^{\sched}_{\cM}(\rew^2\mid \neg A) - \mathbb{E}^{\rsched}_{\cM}(\rew^2\mid \neg A) ) \\
& + p (\mathbb{E}^{\sched}_{\cM}(\rew^2\mid  A) - \mathbb{E}^{\rsched}_{\cM}(\rew^2\mid  A) ) \\
={}&  p (\mathbb{E}^{\sched}_{\cM}(\rew^2\mid  A) - \mathbb{E}^{\rsched}_{\cM}(\rew^2\mid  A) ) 
\end{align*}
as $\sched$ and $\rsched$ behave identically under the condition $\neg A$. Further, letting $W \eqdef \rew(\pi)$,
\begin{align}
&\mathbb{E}^{\sched}_{\cM}(\rew^2\mid  A) - \mathbb{E}^{\rsched}_{\cM}(\rew^2\mid  A) \notag\\
={}& \mathbb{E}^{\sched_{\pi,\alpha}}_{\cM,\last(\pi)}((W+\rew)^2) - \mathbb{E}^{\tsched}_{\cM,\last(\pi)}((W+\rew)^2) \notag\\
={}& W^2 + 2 \cdot  W  \cdot \mathbb{E}^{\sched_{\pi,\alpha}}_{\cM,\last(\pi)}(\rew) + \mathbb{E}^{\sched_{\pi,\alpha}}_{\cM,\last(\pi)}(\rew^2) \notag\\
& - \left( W^2 + 2 \cdot W \cdot \mathbb{E}^{\tsched}_{\cM,\last(\pi)}(\rew) + \mathbb{E}^{\tsched}_{\cM,\last(\pi)}(\rew^2)  \right)  \label{eq:rew2}\tag{$\dagger$}\\
\leq{} & Q - 2\cdot W \cdot  (\mathbb{E}^{\tsched}_{\cM,\last(\pi)}(\rew) -  \mathbb{E}^{\sched_{\pi,\alpha}}_{\cM,\last(\pi)}(\rew)). \notag
\end{align}
Now, we estimate $\mathbb{E}^{\rsched}_{\cM}(\rew)^2 -  \mathbb{E}^{\sched}_{\cM}(\rew)^2$. 
Note that 
\[
\mathbb{E}^{\sched}_{\cM}(\rew) = (1-p)\mathbb{E}^{\sched}_{\cM}(\rew\mid \neg A) + p \mathbb{E}^{\sched}_{\cM}(\rew\mid  A)
\] 
and analogously for $\rsched$.
Let
\[
C \eqdef (1-p)\mathbb{E}^{\sched}_{\cM}(\rew\mid \neg A) \leq \mathbb{E}^{\max}_{\cM}(\rew).
\]
Now, recall that $M= \max_{s\in S} \mathbb{E}^{\max}_{\cM,s}(\rew)$. We compute
\begin{align*}
&\mathbb{E}^{\rsched}_{\cM}(\rew)^2 -  \mathbb{E}^{\sched}_{\cM}(\rew)^2 \\
={} & (C + p \mathbb{E}^{\rsched}_{\cM}(\rew\mid  A))^2 - (C + p \mathbb{E}^{\sched}_{\cM}(\rew\mid  A))^2 \\
={}& (C+p(W+\mathbb{E}^{\tsched}_{\cM,\last(\pi)}(\rew)))^2 - (C+p(W+\mathbb{E}^{\sched_{\pi,\alpha}}_{\cM,\last(\pi)}(\rew)))^2 \\
={}& 2\cdot C \cdot p \cdot ( \mathbb{E}^{\tsched}_{\cM,\last(\pi)}(\rew) - \mathbb{E}^{\sched_{\pi,\alpha}}_{\cM,\last(\pi)}(\rew)) \\
& +p^2 \cdot 2\cdot W \cdot (\mathbb{E}^{\tsched}_{\cM,\last(\pi)}(\rew) - \mathbb{E}^{\sched_{\pi,\alpha}}_{\cM,\last(\pi)}(\rew))  \\
& + p^2 ( \mathbb{E}^{\tsched}_{\cM,\last(\pi)}(\rew)^2 - \mathbb{E}^{\sched_{\pi,\alpha}}_{\cM,\last(\pi)}(\rew)^2) \\
\leq {} & 2   p  M^2 + p^2 \cdot 2 \cdot W \cdot (\mathbb{E}^{\tsched}_{\cM,\last(\pi)}(\rew) -  \mathbb{E}^{\sched_{\pi,\alpha}}_{\cM,\last(\pi)}(\rew)) + p^2  M^2 
\end{align*}
Using that $p<1/2$, we obtain that the last line is at most 
\begin{align*}
p\cdot \left(2   M^2 +  M^2/2 +  W \cdot (\mathbb{E}^{\tsched}_{\cM,\last(\pi)}(\rew) -  \mathbb{E}^{\sched_{\pi,\alpha}}_{\cM,\last(\pi)}(\rew))   \right)
\end{align*}
Put together, we obtain
\begin{align*}
& \Var^{\sched}_{\cM}(\rew) - \Var^{\rsched}_{\cM}(\rew)  \\
\leq {} & p\cdot \big( Q  - 2\cdot W \cdot  (\mathbb{E}^{\tsched}_{\cM,\last(\pi)}(\rew) -  \mathbb{E}^{\sched_{\pi,\alpha}}_{\cM,\last(\pi)}(\rew) ) \\
& + \frac{5}{2} M^2 +  W \cdot (\mathbb{E}^{\tsched}_{\cM,\last(\pi)}(\rew) -  \mathbb{E}^{\sched_{\pi,\alpha}}_{\cM,\last(\pi)}(\rew))   \big) \\
={} &  p\cdot \big(Q + \frac{5}{2} M^2 - W \cdot (\mathbb{E}^{\tsched}_{\cM,\last(\pi)}(\rew) -  \mathbb{E}^{\sched_{\pi,\alpha}}_{\cM,\last(\pi)}(\rew))   \big) \\
\leq {} & p \cdot (Q + \frac{5}{2} M^2 - W\cdot \delta).
\end{align*}
As $W\geq B$, we conclude
$
 \Var^{\sched}_{\cM}(\rew) - \Var^{\rsched}_{\cM}(\rew) <0
$
by the definition of $B$.
\end{proof}

The previous theorem tells us that it is sufficient to consider schedulers that maximize the expected future accumulated reward as soon as a reward of $B$ has been accumulated on a path.
The next question we answer is which of the expectation maximizing schedulers $\tsched \in \Sigma^{\max}$ a variance maximizing scheduler should choose above the bound $B$. The answer can be found in the computations for the proof of the previous theorem. A closer look at Equation (\ref{eq:rew2}) in the previous proof makes it clear that the scheduler $\tsched$ should maximize the expected value of $\rew^2$ among all schedulers in $\Sigma^{\max}$:  Changing the scheduler $\tsched$ among schedulers in $\tsched$ only influences the $\mathbb{E}^{\tsched}_{\cM,\last(\pi)}(\rew^2) $ on the right hand side of Equation (\ref{eq:rew2}).
The following lemma, which is a variation of a result shown in \cite{PiribauerSB22}, shows that such a scheduler can be computed in polynomial time:
\begin{lemma}[\cite{PiribauerSB22}]
\label{lem:rsched}
There is a memoryless deterministic scheduler $\usched \in \Sigma^{\max}$ with 
$
\mathbb{E}^{\usched}_{\cM,s}(\rew^2) = \sup_{\tsched\in \Sigma^{\max}} \mathbb{E}^{\tsched}_{\cM,s}(\rew^2)
$
for all states $s\in S$. Further, $\usched$ as well as $\mathbb{E}^{\usched}_{\cM,s}(\rew^2)$ for all states $s$ can be computed in polynomial time.
\end{lemma}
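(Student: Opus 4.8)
The plan is to reduce the lemma to an ordinary expected-total-reward maximization on a restricted MDP. First I would pass to the sub-MDP $\cM'$ obtained from $\cM$ by enabling in each state $s$ only the maximizing actions $\Act^{\max}(s)$. Under Assumption~\ref{ass:terminal} a scheduler lies in $\Sigma^{\max}$ if and only if it only ever chooses maximizing actions, so the schedulers of $\cM'$ are exactly the schedulers in $\Sigma^{\max}$; moreover every such scheduler still reaches the zero-reward terminal state almost surely, i.e.\ $\cM'$ is proper. Write $g(s)\eqdef \mathbb{E}^{\max}_{\cM,s}(\rew)$, the common expected accumulated reward of all schedulers in $\Sigma^{\max}$, which is computable in polynomial time.

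The key step is a reward transformation showing that, within $\Sigma^{\max}$, maximizing $\mathbb{E}(\rew^2)$ is an ordinary expected-total-reward problem. The point is that for any $\tsched\in\Sigma^{\max}$ the expected future reward out of every state $t$ is pinned to $g(t)$, since $\sum_{t}P(s,\alpha,t)g(t)=g(s)-\rew(s)$ for every maximizing action $\alpha$. Decomposing the accumulated reward as $\rew(s)+R'$ into the current reward and the remaining reward $R'$ and squaring, the cross term $2\rew(s)\,\mathbb{E}(R')$ collapses to the deterministic quantity $2\rew(s)(g(s)-\rew(s))$. For memoryless $\tsched\in\Sigma^{\max}$ this yields the Bellman equation $\mathbb{E}^{\tsched}_{\cM,s}(\rew^2)=\rew'(s)+\sum_{\alpha}\tsched(s)(\alpha)\sum_{t}P(s,\alpha,t)\,\mathbb{E}^{\tsched}_{\cM,t}(\rew^2)$ with transformed state reward $\rew'(s)\eqdef 2\rew(s)g(s)-\rew(s)^2$, so that $\mathbb{E}^{\tsched}_{\cM,s}(\rew^2)$ is exactly the expected accumulated $\rew'$-reward of $\tsched$ in $\cM'$.

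To cover arbitrary history-dependent randomized schedulers in $\Sigma^{\max}$, not only memoryless ones, I would establish the same identity through the martingale $M_n\eqdef \sum_{i<n}\rew(s_i)+g(s_n)$: a short computation using the maximizing-action identity above shows that $(M_n)_n$ is a martingale converging almost surely to $\rew$, and the bound $Q$ on $\mathbb{E}^{\max}_{\cM}(\rew^2)$ from Lemma~\ref{lem:boundQ} upgrades this to $L^2$-convergence, so $\mathbb{E}(\rew^2)=\lim_n\mathbb{E}(M_n^2)$ and the telescoping of the orthogonal martingale increments reproduces the accumulated $\rew'$-reward. With this identity in place, the lemma reduces to maximizing the expected total $\rew'$-reward over all schedulers of $\cM'$.

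Finally, since $\cM'$ is proper and $\rew'$ is bounded (though possibly negative), this is a well-posed stochastic-shortest-path problem; standard total-reward MDP theory then provides a single memoryless deterministic scheduler $\usched$ that is optimal from all states simultaneously, together with its values $\mathbb{E}^{\usched}_{\cM,s}(\rew^2)$, obtainable in polynomial time from the associated linear program. As $\usched$ is a scheduler of $\cM'$ it uses only maximizing actions and hence lies in $\Sigma^{\max}$, giving the claimed equality. I expect the main obstacle to be the rigorous treatment of the reward transformation for non-memoryless schedulers -- in particular justifying the interchange of limit and expectation via Lemma~\ref{lem:boundQ} -- and verifying that, despite the possibly negative $\rew'$, properness of $\cM'$ (the absence of reward-accumulating end components after the preprocessing) rules out exploitable cycles and thereby guarantees uniform memoryless-deterministic optimality and polynomial-time solvability.
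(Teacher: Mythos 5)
Your proposal is correct, and its skeleton coincides with the paper's: the paper's proof also begins by passing to the sub-MDP $\cM^{\max}$ in which only the actions in $\Act^{\max}(s)$ are enabled, and notes (using Assumption~\ref{ass:terminal}) that its schedulers are exactly $\Sigma^{\max}$. Where you diverge is in what comes next: the paper simply invokes Lemma~8 and Theorem~9 of \cite{PiribauerSB22}, which assert that in an MDP where every scheduler attains the same expected reward from every state, a variance-optimal memoryless deterministic scheduler exists and is computable in polynomial time ``via the definition of a new weight function $\rew'$'', and then observes that with constant expectation, maximizing the variance is the same as maximizing $\mathbb{E}(\rew^2)$. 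You instead reconstruct the content of that citation from first principles: your transformed reward $\rew'(s)=2\rew(s)g(s)-\rew(s)^2$ is precisely the kind of weight function the cited result relies on, your martingale argument extends the identity $\mathbb{E}^{\tsched}_{\cM,s}(\rew^2)=\mathbb{E}^{\tsched}_{\cM^{\max},s}\bigl(\textstyle\sum_i\rew'(s_i)\bigr)$ to all history-dependent randomized schedulers of the sub-MDP, and the final appeal to stochastic-shortest-path theory \cite{BerTsi91} yields uniform memoryless deterministic optimality plus the polynomial-time LP computation. Two small simplifications are available. First, your concern about $\rew'$ being negative is moot: rewards are non-negative, so $g(s)\geq\rew(s)$ and hence $\rew'(s)=\rew(s)\,(2g(s)-\rew(s))\geq\rew(s)^2\geq 0$. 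Second, the $L^2$-martingale machinery can be replaced by unfolding the one-step recursion $n$ times: the remainder term is the conditional second moment of the reward still to be accumulated, which is pointwise at most $Q$ (Lemma~\ref{lem:boundQ}) and vanishes once the terminal state is reached, so it is bounded by $Q\cdot\Pr^{\tsched}_{\cM}(\text{terminal not reached within } n \text{ steps})\to 0$ by Assumption~\ref{ass:terminal}, while the partial sums of $\rew'$ converge monotonically. What your route buys is a self-contained proof independent of the exact formulation of the cited lemmas (which are stated there for variance \emph{minimization} and only claimed to transfer); what the paper's route buys is brevity.
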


\begin{proof}
Let $\cM^{\max}$ be the MDP obtained from $\cM$ by only enabling actions in $\Act^{\max}(s)$ in state $s$. It is well-known that any scheduler in $\Sigma^{\max}$ can only schedule actions in $\Act^{\max}$. 
Conversely,
under Assumption \ref{ass:terminal}, any scheduler for $\cM^{\max}$ viewed as a scheduler for $\cM$ is in $\Sigma^{\max}$ (see, e.g., \cite{Kallenberg}).
So, it is sufficient to show that there is a memoryless deterministic scheduler for $\cM^{\max}$ that maximizes the expected value of $\rew^2$ from every state.

In \cite[Lemma 8 and Theorem 9]{PiribauerSB22}, it is shown that in an MDP, in which all schedulers have the same expected value of $\rew$ from any state (as in our MDP $\cM^{\max}$), a memoryless deterministic scheduler minimizing the variance exists and this scheduler as well as its variance can be computed in polynomial time via the definition of a new weight function $\rew^\prime \colon S \to \mathbb{Q}$. While the result is stated for the minimization of the variance in \cite{PiribauerSB22}, 
the same reasoning works for the maximization as well.

Since all schedulers in $\cM^{\max}$ have the same expected value from a state $s$, a scheduler $\sched$ maximizing $\Var^{\sched}_{\cM,s}(\rew)=\mathbb{E}^{\sched}_{\cM,s}(\rew^2) - \mathbb{E}^{\sched}_{\cM,s}(\rew)^2$
also maximizes $\mathbb{E}^{\sched}_{\cM,s}(\rew^2)$. So, the result of \cite{PiribauerSB22} implies that a scheduler $\usched$ as claimed in the lemma can be computed in polynomial time.
\end{proof}

Put together, we can conclude the following theorem:

\begin{theorem}
\label{thm:optimalsched}
The maximal variance can be expressed as 
\[
\Var^{\max}_{\cM}(\rew) = \sup_\sched \Var^{\sched}_{\cM}(\rew)
\]
where 
$\sched$ ranges over all schedulers  such that for all path $\pi$ with $\rew(\pi)\geq B$, we have $\sched_\pi=\usched$ for the memoryless deterministic scheduler $\usched$ given by Lemma \ref{lem:rsched}.
\end{theorem}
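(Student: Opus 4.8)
The inequality $\sup_{\sched}\Var^{\sched}_{\cM}(\rew)\le\Var^{\max}_{\cM}(\rew)$, with $\sched$ ranging over the restricted class, is immediate, since that class is a subset of all schedulers. The plan is to prove the reverse inequality by matching every scheduler with one from the restricted class: given an arbitrary $\sched$, let $\sched'$ agree with $\sched$ on all paths of reward below $B$ and satisfy $\sched'_\pi=\usched$ whenever $\rew(\pi)\ge B$. Because $\usched$ is memoryless and rewards are non-negative, $\sched'$ is well defined and lies in the restricted class, so it suffices to show $\Var^{\sched}_{\cM}(\rew)\le\Var^{\sched'}_{\cM}(\rew)$.

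First I would reduce to expectation-maximizing behaviour above $B$. Theorem~\ref{thm:ermax} shows that redirecting a single non-maximizing choice taken after a path of reward at least $B$ to $\usched\in\Sigma^{\max}$ strictly increases the variance. Iterating this redirection over all such choices and passing to a limit yields a scheduler $\sched''$ that agrees with $\sched$ below $B$, selects only actions from $\Act^{\max}$ above $B$ (hence whose every residual from a crossing path lies in $\Sigma^{\max}$), and satisfies $\Var^{\sched}_{\cM}(\rew)\le\Var^{\sched''}_{\cM}(\rew)$. For the limit I would invoke the result of Section~\ref{sec:demonic} that the scheduler space is compact and that convergence of schedulers entails weak convergence of the induced path measures; the bound $Q$ on $\mathbb{E}^{\max}_{\cM}(\rew^2)$ from Lemma~\ref{lem:boundQ} supplies the uniform integrability needed to carry the (unbounded) first and second moments of $\rew$ through the limit.

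It then remains to compare $\sched''$ with $\sched'$, which is the clean part. Let $Z$ be the first path along which the accumulated reward reaches $B$, with $Z=\bot$ if this never occurs; its law is identical under $\sched''$ and $\sched'$, as both agree below $B$. Conditioned on $\{Z=\pi\}$ the total reward equals a constant depending only on $\pi$ plus the reward accumulated from $\last(\pi)$ onward under the residual scheduler, so the conditional mean is $\rew(\pi)$ together with the maximal expected future reward from $\last(\pi)$, which is the same under $\sched''_\pi$ and under $\usched$ because both residuals lie in $\Sigma^{\max}$. Hence the conditional means, and therefore the term $\Var\big(\mathbb{E}(\rew\mid Z)\big)$ in the law of total variance, coincide for $\sched''$ and $\sched'$, so $\Var^{\sched'}_{\cM}(\rew)-\Var^{\sched''}_{\cM}(\rew)$ equals the difference of the terms $\mathbb{E}\big[\Var(\rew\mid Z)\big]$. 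This splits as a sum over the branches $\{Z=\pi\}$, weighted by $\Pr(Z=\pi)$, of differences of residual variances; since the residual mean is fixed, maximizing each residual variance amounts to maximizing the residual second moment $\mathbb{E}^{\tsched}_{\cM,\last(\pi)}(\rew^2)$ over $\tsched\in\Sigma^{\max}$, which by Lemma~\ref{lem:rsched} is attained by $\usched$ simultaneously from every state. This gives $\Var^{\sched''}_{\cM}(\rew)\le\Var^{\sched'}_{\cM}(\rew)$ and closes the chain $\Var^{\sched}_{\cM}(\rew)\le\Var^{\sched''}_{\cM}(\rew)\le\Var^{\sched'}_{\cM}(\rew)$.

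The main obstacle is the first reduction: Theorem~\ref{thm:ermax} improves only one deviation at a time, so turning it into a simultaneous replacement of all non-maximizing choices above $B$ requires the limiting argument, and the chief technical care lies in justifying the interchange of limit and expectation for the unbounded variables $\rew$ and $\rew^2$ via the second-moment bound $Q$. The final step, by contrast, is purely algebraic once one observes that fixing the expectation above $B$ decouples the branches in the law of total variance, reducing the problem to the per-state second-moment maximization already solved by $\usched$.
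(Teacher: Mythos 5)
Your proposal follows the same skeleton as the paper's (very terse) argument: the paper obtains this theorem by simply combining Theorem~\ref{thm:ermax} (a single non-maximizing choice above $B$ can be redirected to an expectation-maximizing continuation without decreasing the variance) with the observation, read off from equation~(\ref{eq:rew2}), that among continuations in $\Sigma^{\max}$ one should maximize $\mathbb{E}(\rew^2)$, which Lemma~\ref{lem:rsched} says $\usched$ does from every state simultaneously. Where you differ is in how the two reductions are justified. Your final step --- conditioning on the first crossing path $Z$, applying the law of total variance, and noting that the conditional means coincide because both residuals lie in $\Sigma^{\max}$ (which holds since rewards are non-negative, so every decision of a residual after a crossing path is made above $B$), leaving only the per-branch second moments to compare --- is a clean, self-contained replacement for the paper's ``inspect equation~(\ref{eq:rew2})'' remark, and it is correct. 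Your explicit treatment of the iteration-plus-limit issue addresses something the paper silently skips, and it is genuinely needed: Theorem~\ref{thm:ermax} removes only one deviation at a time, and there may be infinitely many.

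However, the limit step as you state it has a technical flaw. First, the weak-convergence result of Section~\ref{sec:demonic} concerns bounded \emph{continuous} test functions, and $\rew$, $\rew^2$ are neither: they are unbounded and only lower semicontinuous on the path space (a path can be approximated by paths accumulating arbitrarily more reward). This is repairable: under Assumption~\ref{ass:terminal}, $\rew$ is locally constant at every path that reaches an absorbing state --- any path sharing the prefix up to absorption has exactly the same total reward --- so $\rew$ and $\rew^2$ are continuous almost everywhere under every scheduler's measure, and the mapping theorem applies to their truncations. Second, and more substantively, the bound $Q$ of Lemma~\ref{lem:boundQ} is a uniform bound on \emph{second} moments; it yields uniform integrability of $\rew$ but not of $\rew^2$, so it cannot by itself carry $\mathbb{E}^{\sched_j}(\rew^2)$ through the limit. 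You need uniform control of something strictly stronger, e.g.\ a uniform fourth-moment bound or the uniform geometric absorption bound $\Pr^{\sched}_{\cM}(\text{not absorbed within } n \text{ steps}) \le c\lambda^n$ valid for all schedulers once end components are removed (this is exactly how $Q$ itself is derived). With that bound in hand, a finite-horizon truncation argument is in fact simpler than invoking weak convergence at all: the $n$-step truncated moments depend on finitely many scheduler decisions and converge pointwise, and the tails are uniformly small over all schedulers. So the gap is real but local and fixable; the architecture of your proof is sound and records more detail than the paper does.
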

Let us denote the set of these schedulers by $\Sigma^{\usched}_{B}$ in the sequel.
As weighted reachability can be seen as a special case of accumulated rewards, we know that randomization is necessary in order to maximize the variance.
While Theorem \ref{thm:optimalsched} does not allow us to reduce the maximization of the variance for accumulated rewards to the case of weighted reachability,
it will nevertheless allow us to construct a quadratic program whose solution is the maximal variance.

To this end, we consider an unfolding of the MDP $\cM$. Let $B$ be the bound provided above and $R$ be the maximal reward occurring in $\cM$ as before.
We define $\cN$ as the MDP that keeps track of the accumulated weight until a weight of more than $B$ has been accumulated.
The state space is given by $S\times \{0,\dots,\lfloor B \rfloor+R\}$. The initial state is $(\sinit,0)$.
The transition probability function $P^\prime$ for $(s,w)\in S\times \{0,\dots, \lfloor B \rfloor \}$ and $\alpha \in \Act$ is given by
		$
		P^{\prime}((s,w),\alpha,(t,v)) = 
		P(s,\alpha,t)$  if $v=w+\rew(s,\alpha)$, and is set to $0$ otherwise.
		All states $(s,w)$ with $w>B$ are made absorbing.
		So, the absorbing states are divided into the two sets 
		\[
		T_1^\prime \eqdef T\times  \{0,\dots, \lfloor B \rfloor \}  \quad \text{ and } T_2^\prime \eqdef S\times \{\lfloor B \rfloor+1, \dots, \lfloor B \rfloor+R\}
		\]
		where $T$ is the set of absorbing states of $\cM$.
		Now, there is a one-to-one correspondence between schedulers for $\cN$ and schedulers in $\Sigma^{\usched}_{B}$:
		In $\cN$ paths are simply equipped with the additional information how much reward has been accumulated, while this information is implicitly given in $\cM$ by the reward function $\rew$.
		 Other than that, the paths with reward $\leq B$ are the same in $\cN$ and $\cM$.
		For paths with length $>B$, schedulers in $\cN$ cannot make choices anymore as they reach an absorbing state. Schedulers in $\Sigma^{\usched}_{B}$ cannot make choices anymore on these paths 
		as they switch to the behavior of $\usched$.
		
		Now, we can derive a system of linear inequalities 
		\begin{equation}
		\label{eq:linear}
		Ax\leq b
		\end{equation}
		for a rational matrix $A$ and vector $b$ computable in polynomial time from $\cN$ such that the variable vector $x$ 
		 contains, among others, variables $y_q$ for each $q\in T_1^\prime \cup T_2^\prime$ such that these variables capture exactly the possible 
		combinations of reachability probabilities of the terminal states in $\cN$. This system of constraints is the system given in Equations  (\ref{eq:nonnegative}) -- (\ref{eq:def}) in  Section \ref{sec:WR} transferred to the MDP $\cN$.
		
		As for weighted reachability, we introduce two new variables $e_1$ and $e_2$ that express the expected value of $\rew$ and of $\rew^2$ and that only depend on the variables $y_q$ with $q\in T_1^\prime \cup T_2^\prime$.
		For the expected value, this is straight-forward:
		\begin{equation}
		\label{eq:exp1}
		e_1 =  \sum_{(s,w)\in T_1} y_{(s,w)} \cdot w +  \sum_{(s,w)\in T_2} y_{(s,w)} \cdot (w+\mathbb{E}^{\max}_{\cM,s}(\rew))
		\end{equation}
		For the expected value of $\rew^2$, we derive
		\begin{align}
		e_2 = {} & \sum_{(s,w)\in T_1} y_{(s,w)} \cdot w^2 \notag \\
		&+  \sum_{(s,w)\in T_2} y_{(s,w)} \cdot (w+ 2w\mathbb{E}^{\max}_{\cM,s}(\rew)+ \mathbb{E}^{\usched}_{\cM,s}(\rew^2))
		\label{eq:exp2}
		\end{align}

		The fact that this set of constraints works as intended is stated in the following lemma and proved in Appendix \ref{app:acc}.
		
\begin{restatable}{lemma}{correctquad}
\label{lem:correctquad}
For each scheduler $\sched \in \Sigma^{\usched}_{B}$, there is a solution to (\ref{eq:linear}) -- (\ref{eq:exp2}) in which
$
e_1 = \mathbb{E}^{\sched}_{\cM}(\rew) $ and $ e_2 = \mathbb{E}^{\sched}_{\cM}(\rew^2)$,
and vice versa.
\end{restatable}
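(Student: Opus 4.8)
The plan is to exploit the one-to-one correspondence between schedulers in $\Sigma^{\usched}_{B}$ and schedulers for the unfolded MDP $\cN$, and then to reduce the correctness of the constraints to the already-established correctness of the reachability constraints for weighted reachability. Fix a scheduler $\sched \in \Sigma^{\usched}_{B}$ and let $\sched^{\cN}$ be the corresponding scheduler for $\cN$. Since the constraints (\ref{eq:linear}) are precisely the system (\ref{eq:nonnegative})--(\ref{eq:def}) transferred to $\cN$, the proof of Theorem \ref{thm:WRmaxvar} (relying on \cite[Theorem 9.16]{Kallenberg}) yields a feasible assignment to $x$ in which $y_q = \Pr^{\sched^{\cN}}_{\cN}(\lozenge q)$ for every terminal state $q \in T_1^\prime \cup T_2^\prime$, and conversely every feasible $x$ arises as the vector of reachability probabilities of some scheduler for $\cN$.

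The core of the argument is then to express the first two moments of $\rew$ in $\cM$ under $\sched$ as the linear combinations of these $y_q$ prescribed by (\ref{eq:exp1}) and (\ref{eq:exp2}). First I would note that, by Assumption \ref{ass:terminal} together with the non-negativity of the rewards, $\cN$ reaches a terminal state in $T_1^\prime \cup T_2^\prime$ almost surely: the accumulated reward is non-decreasing, so either it exceeds $B$ (and a state in $T_2^\prime$ is reached) or it stays bounded and a terminal state of $\cM$, copied into $T_1^\prime$, is reached. I would then condition on which terminal state $q=(s,w)$ of $\cN$ is reached. Paths reaching $(s,w) \in T_1^\prime$ correspond to paths of $\cM$ that have reached a terminal state with total accumulated reward exactly $w$, contributing $w$ to $\mathbb{E}^{\sched}_{\cM}(\rew)$ and $w^2$ to $\mathbb{E}^{\sched}_{\cM}(\rew^2)$. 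Paths reaching $(s,w) \in T_2^\prime$ have accumulated reward $w>B$, after which $\sched$ switches to $\usched$ by definition of $\Sigma^{\usched}_{B}$; here the memorylessness of $\usched$ makes the conditional future distribution of $\rew$ depend only on $s$, so the conditional contributions are $w + \mathbb{E}^{\usched}_{\cM,s}(\rew)$ and $\mathbb{E}^{\usched}_{\cM,s}((w+\rew)^2) = w^2 + 2w\,\mathbb{E}^{\usched}_{\cM,s}(\rew) + \mathbb{E}^{\usched}_{\cM,s}(\rew^2)$.

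Using $\usched \in \Sigma^{\max}$, i.e.\ $\mathbb{E}^{\usched}_{\cM,s}(\rew) = \mathbb{E}^{\max}_{\cM,s}(\rew)$, together with the second-moment values $\mathbb{E}^{\usched}_{\cM,s}(\rew^2)$ supplied by Lemma \ref{lem:rsched}, the law of total expectation summed over $q \in T_1^\prime \cup T_2^\prime$ with weights $y_q$ reproduces exactly the right-hand sides of (\ref{eq:exp1}) and (\ref{eq:exp2}), so that the induced values satisfy $e_1 = \mathbb{E}^{\sched}_{\cM}(\rew)$ and $e_2 = \mathbb{E}^{\sched}_{\cM}(\rew^2)$. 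The converse direction runs backwards along the same chain: a feasible $x$ yields the reachability probabilities of some scheduler for $\cN$, hence of some $\sched \in \Sigma^{\usched}_{B}$, and the identical moment computation then identifies the $e_1,e_2$ determined by (\ref{eq:exp1})--(\ref{eq:exp2}) with the corresponding moments of $\rew$ in $\cM$.

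I expect the main obstacle to be the careful bookkeeping of the tail contribution from $T_2^\prime$: one must justify that, conditioned on first reaching $(s,w)\in T_2^\prime$, the remaining reward is distributed exactly as $\rew$ started from $s$ under $\usched$ and is independent of the prefix. This is precisely where the memorylessness of $\usched$ and the defining property of $\Sigma^{\usched}_{B}$ (switching to $\usched$ above the bound $B$) are essential, and the moment expansion $(w+\rew)^2 = w^2 + 2w\rew + \rew^2$ is what produces the cross term $2w\,\mathbb{E}^{\max}_{\cM,s}(\rew)$ appearing in (\ref{eq:exp2}).
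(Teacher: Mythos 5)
Your proposal is correct and follows essentially the same route as the paper's own proof: the correspondence between schedulers in $\Sigma^{\usched}_{B}$ and schedulers for $\cN$, the appeal to \cite[Theorem 9.16]{Kallenberg} for the reachability constraints, conditioning on which terminal state $(s,w)$ of $\cN$ is reached, and the expansion $\mathbb{E}^{\usched}_{\cM,s}((w+\rew)^2) = w^2 + 2w\,\mathbb{E}^{\usched}_{\cM,s}(\rew) + \mathbb{E}^{\usched}_{\cM,s}(\rew^2)$ for the tail contribution from $T_2^\prime$. Your explicit justification of almost-sure termination of $\cN$ and of the role of $\usched$'s memorylessness in the conditional-moment step are minor elaborations of points the paper leaves implicit, not a different argument.
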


\begin{proof}
Let $A_{s,w}$ be the event that terminal state $(s,w)\in T_1\cup T_2$ is reached in $\cN$. One of the events $A_{s,w}$ occurs with probability $1$ under any scheduler.
In $\cM$, the corresponding events are that either a absorbing state is reached after a path $\pi$ with $\rew(\pi)\leq B$
or that the first prefix of a run with reward $w>B$ ends in state $s$.

Given a scheduler $\sched \in \Sigma^{\usched}_{B}$ viewed as a scheduler for $\cN$, we know (see \cite[Theorem 9.16]{Kallenberg}) that there is a solution to (\ref{eq:linear})
such that $\Pr^{\sched}_{\cM} (\lozenge (s,w)) = y_{(s,w)}$ for all $(s,w)\in T_1\cup T_2$, and vice versa.
In $\cM$, this corresponds to $\Pr^{\sched}_{\cM} (A_{s,w}) = y_{(s,w)}$.

For the expected value, it is then clear that 
\[
\mathbb{E}^{\sched}_{\cM}(\rew)= \sum_{(s,w)\in T_1} y_{(s,w)} \cdot w +  \sum_{(s,w)\in T_2} y_{(s,w)} \cdot (w+\mathbb{E}^{\max}_{\cM,s}(\rew)) = e_1
\]
as the scheduler switches to the behavior of $\usched$ which maximizes the future expected rewards in case of event $A_{s,w}$ for $(s,w)\in T_2$.

For the expected value of $\rew^2$, we compute
\begin{align*}
&\mathbb{E}^{\sched}_{\cM}(\rew)  \\
={} & \sum_{(s,w)\in T_1\cup T_2} \mathbb{E}^{\sched}_{\cM}(\rew^2 \mid A_{s,w}) \cdot  \Pr^{\sched}_{\cM}(A_{s,w}) \\
={} &  \sum_{(s,w)\in T_1} w^2 \cdot y_{s,w} +  \sum_{(s,w)\in T_2} \mathbb{E}^{\sched}_{\cM}(\rew^2 \mid A_{s,w}) \cdot y_{s,w} \\
={}&  \sum_{(s,w)\in T_1} w^2 \cdot y_{s,w}  + \sum_{(s,w)\in T_2} \mathbb{E}^{\usched}_{\cM,s}((w+\rew)^2) \cdot y_{s,w} \\
={} & \sum_{(s,w)\in T_1} w^2 \cdot y_{s,w}  + \sum_{(s,w)\in T_2} \mathbb{E}^{\usched}_{\cM,s}((w+\rew)^2) \cdot y_{s,w} \\
={}& \sum_{(s,w)\in T_1} w^2 \cdot y_{s,w}  \\
& + \sum_{(s,w)\in T_2} (w^2 + 2 w  \mathbb{E}^{\usched}_{\cM,s}(\rew) +  \mathbb{E}^{\usched}_{\cM,s}(\rew^2))  \cdot y_{s,w} \\
={}& e_2. \qedhere
\end{align*}

\end{proof}

Consequently, the maximal variance can be found via the following optimization objective subject to  (\ref{eq:linear}) -- (\ref{eq:exp2}):
\begin{equation}
\label{obj:var}
\text{maximize }\quad e_2 - e_1^2.
\end{equation}

As the quadratic program (\ref{eq:linear}) -- (\ref{obj:var}) is a concave maximization problem and  can be constructed in polynomial time from $\cN$, which is of size exponential in the size of $\cM$,
we conclude that the optimal variance can be computed in exponential time.
Furthermore, from a solution to the quadratic program, we can extract a memoryless scheduler for $\cN$ as in Corollary \ref{cor:MR}.
A memoryless scheduler for $\cN$ corresponds to a {reward-based finite-memory} scheduler for $\cM$.
 This is the case here as a scheduler only has to keep track of the reward accumulated so far up to the bound $B$. Hence, we conclude:
\begin{theorem}
The maximal variance $\Var^{\max}_{\cM}(\rew)$ as well as an optimal randomized reward-based finite-memory scheduler can be computed in exponential time.
\end{theorem}

\subsection{Computation of the demonic variance}
For the demonic variance, we follow a similar line of argumentation as we did for the maximal variance. First, we restrict the class of schedulers that we have to consider to obtain the demonic variance. Afterwards, we formulate the computation of the demonic variance as a quadratic program.

To investigate the structure of schedulers necessary to obtain the demonic variance, fix a scheduler $\tsched$. We take a closer look at $\Var^{\sched,\tsched}_{\cM}(\rew)$ for a scheduler $\sched$:
\begin{align*}
& \Var^{\sched,\tsched}_{\cM}(\rew)  {=} \frac{1}{2} (\Var^{\sched}_{\cM}(\rew) {+} \Var^{\tsched}_{\cM}(\rew) {+} (\mathbb{E}^{\sched}_{\cM}(\rew) {-} \mathbb{E}^{\tsched}_{\cM}(\rew))^2) \\ 
={} &\frac{1}{2} (\mathbb{E}^{\sched}_{\cM}(\rew^2) - 2\mathbb{E}^{\sched}_{\cM}(\rew)\mathbb{E}^{\tsched}_{\cM}(\rew) + \Var^{\tsched}_{\cM}(\rew) + \mathbb{E}^{\tsched}_{\cM}(\rew)^2 )
\end{align*}
We can see that  in order to maximize $\Var^{\sched,\tsched}_{\cM}(\rew)$ for fixed $\tsched$, the scheduler $\sched$ has to maximize 
$\mathbb{E}^{\sched}_{\cM}(\rew^2) - 2\mathbb{E}^{\sched}_{\cM}(\rew)\mathbb{E}^{\tsched}_{\cM}(\rew)$.
For simplicity, abbreviate $C\eqdef \mathbb{E}^{\tsched}_{\cM}(\rew)$. 
Using the bound $M = \max_{s\in S}\mathbb{E}_{\cM,s}^{\max}(\rew)$, we know $0\leq C \leq M$ and want to find a scheduler $\sched$ maximizing 
$\mathbb{E}^{\sched}_{\cM}(\rew^2) - 2\cdot C\cdot \mathbb{E}^{\sched}_{\cM}(\rew)$.

We will  establish results analogous to the results in Section \ref{sec:structvar} for the maximization of this expression.
The proofs follow the similar ideas with slightly different calculations.
\vspace{6pt}

\paragraph*{Bound $B^\prime$}
To provide a bound $B^\prime$ in analogy to Section \ref{sec:structvar}, we use the values $Q$, $M$, and $\delta$ defined there again.
We set
\[
B^\prime = \frac{Q}{2\delta}+2M+1.
\]
As before, we know that $\Pr^{\sched}_{\cM}(\rew\geq B^\prime) \leq \frac{1}{2}$.
The following result is shown following the same idea as for Theorem \ref{thm:ermax} and is proved in Appendix \ref{app:acc}.
\begin{restatable}{theorem}{bounddem}
Let $0\leq C \leq M$.
Let $\sched$ be a scheduler for $\cM$ and let $\pi$ be a finite path with $\rew(\pi)\geq B^\prime$. Assume that $\pi$ has positive probability under $\sched$ and that  there is an action $\alpha\in \Act(s) \setminus \Act^{\max}(s)$ such that
$\sched(\pi)(\alpha)>0$. 
Let $\tsched\in \Sigma^{\max}$. Then, setting $\rsched \eqdef \sched\uparrow_{\pi,\alpha}\tsched$, we have
\[
\mathbb{E}^{\sched}_{\cM}(\rew^2) - 2\cdot C\cdot \mathbb{E}^{\sched}_{\cM}(\rew) < \mathbb{E}^{\rsched}_{\cM}(\rew^2) - 2\cdot C\cdot \mathbb{E}^{\rsched}_{\cM}(\rew) .
\]
\end{restatable}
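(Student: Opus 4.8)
The plan is to mirror the proof of Theorem~\ref{thm:ermax} almost step for step, simply replacing the variance by the objective $\mathbb{E}^{\cdot}_{\cM}(\rew^2) - 2C\mathbb{E}^{\cdot}_{\cM}(\rew)$; in fact the argument becomes \emph{simpler}, since this objective is linear in the expectation rather than quadratic. Let $A$ be the event that $\sched$ selects $\alpha$ at the end of $\pi$, and let $p=\Pr^{\sched}_{\cM}(A)>0$. Since $\sched$ and $\rsched=\sched\uparrow_{\pi,\alpha}\tsched$ agree on $\neg A$, both differences $\mathbb{E}^{\sched}_{\cM}(\rew^2)-\mathbb{E}^{\rsched}_{\cM}(\rew^2)$ and $\mathbb{E}^{\sched}_{\cM}(\rew)-\mathbb{E}^{\rsched}_{\cM}(\rew)$ factor through the conditional expectations given $A$ and carry a factor $p$.

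Next I would evaluate the two conditional contributions, writing $W\eqdef\rew(\pi)$. Expanding the future reward after $\pi$ as $(W+\rew)$ exactly as in Equation~(\ref{eq:rew2}), the $\rew^2$-part contributes $p\big[\mathbb{E}^{\sched_{\pi,\alpha}}_{\cM,\last(\pi)}(\rew^2)-\mathbb{E}^{\tsched}_{\cM,\last(\pi)}(\rew^2)-2W\,G\big]$, where $G\eqdef\mathbb{E}^{\tsched}_{\cM,\last(\pi)}(\rew)-\mathbb{E}^{\sched_{\pi,\alpha}}_{\cM,\last(\pi)}(\rew)$; by Lemma~\ref{lem:boundQ} this is at most $p\big(Q-2W\,G\big)$. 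The linear part is an exact identity, $\mathbb{E}^{\sched}_{\cM}(\rew)-\mathbb{E}^{\rsched}_{\cM}(\rew)=-p\,G$, so $-2C\big(\mathbb{E}^{\sched}_{\cM}(\rew)-\mathbb{E}^{\rsched}_{\cM}(\rew)\big)=2C p\,G$. Collecting both parts yields
\[
\big(\mathbb{E}^{\sched}_{\cM}(\rew^2)-2C\mathbb{E}^{\sched}_{\cM}(\rew)\big)-\big(\mathbb{E}^{\rsched}_{\cM}(\rew^2)-2C\mathbb{E}^{\rsched}_{\cM}(\rew)\big)\;\le\; p\big(Q-2G(W-C)\big).
\]

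Finally I would use that $\tsched\in\Sigma^{\max}$ is expectation-optimal while $\alpha\notin\Act^{\max}(\last(\pi))$, so $G\ge\delta>0$; together with $W\ge B^\prime$ and $C\le M$ this gives $Q-2G(W-C)\le Q-2\delta(B^\prime-M)=-2\delta(M+1)<0$ by the choice $B^\prime=\frac{Q}{2\delta}+2M+1$. Since $p>0$, the whole difference is strictly negative, which is precisely the claim. The only step requiring genuine care — the main obstacle — is the sign of the extra linear term $2C p\,G$: unlike in the variance case it works \emph{against} us, but it is controlled because $C\le M$ while the dominant negative term scales with $W\ge B^\prime$, so the margin $B^\prime-M$ built into $B^\prime$ absorbs it. Note also that, in contrast to Theorem~\ref{thm:ermax}, no $p^2$ term appears, so here the bound $p\le\frac12$ is not even needed and $B^\prime$ can be taken smaller than $B$.
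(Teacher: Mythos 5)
Your proof is correct and takes essentially the same route as the paper's: the same decomposition over the event $A$ with factor $p$, the same bound $p\,(Q - 2W G)$ on the $\rew^2$-difference (which the paper imports from the proof of Theorem~\ref{thm:ermax}), and the same final estimate combining $G \geq \delta$, $W \geq B^\prime$, and $C \leq M$ with the choice of $B^\prime$. Your side observation is also accurate: the paper states $p < \tfrac12$ at the start of its proof but never actually uses it, since no $p^2$ terms arise here.
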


\begin{proof}
Let $A$ be the event that scheduler $\sched$ chooses action $\alpha$ at the end of path $\pi$ and let $p$ be the probability of this event under $\sched$. Note that $\rsched$ switches to the behavior of $\tsched$ exactly in case $A$ occurs. By the definition of $B^\prime$, we know that $p<\frac{1}{2}$.
Now, we aim to estimate the difference
\begin{align*}
&\mathbb{E}^{\sched}_{\cM}(\rew^2) - 2\cdot C\cdot \mathbb{E}^{\sched}_{\cM}(\rew) - (\mathbb{E}^{\rsched}_{\cM}(\rew^2) - 2\cdot C\cdot \mathbb{E}^{\rsched}_{\cM}(\rew)) \\
={} & \mathbb{E}^{\sched}_{\cM}(\rew^2) - \mathbb{E}^{\rsched}_{\cM}(\rew^2) \\
& +2\cdot C \cdot p \cdot ( \mathbb{E}^{\tsched}_{\cM,\last(\pi)}(\rew) -  \mathbb{E}^{\sched_{\pi,\alpha}}_{\cM,\last(\pi)}(\rew))
\end{align*}

In the proof of Theorem \ref{thm:ermax}, we computed 
\begin{align*}
& \mathbb{E}^{\sched}_{\cM}(\rew^2) - \mathbb{E}^{\rsched}_{\cM}(\rew^2)  \\
 \leq {} & p ( Q - 2\cdot W \cdot  (\mathbb{E}^{\tsched}_{\cM,\last(\pi)}(\rew) -  \mathbb{E}^{\sched_{\pi,\alpha}}_{\cM,\last(\pi)}(\rew)))
\end{align*}
where  $W=\rew(\pi)$.
So, we conclude using $C\leq M$
\begin{align*}
&\mathbb{E}^{\sched}_{\cM}(\rew^2) - 2\cdot C\cdot \mathbb{E}^{\sched}_{\cM}(\rew) - (\mathbb{E}^{\rsched}_{\cM}(\rew^2) - 2\cdot C\cdot \mathbb{E}^{\rsched}_{\cM}(\rew)) \\
\leq{} & p \cdot ((2 M - 2 W ) ( \mathbb{E}^{\tsched}_{\cM,\last(\pi)}(\rew) -  \mathbb{E}^{\sched_{\pi,\alpha}}_{\cM,\last(\pi)}(\rew)) + Q )\\
\leq {} & p \cdot (2 (M-B^\prime) \delta +Q).
\end{align*}
By the choice of $B^\prime$ this value is less than $0$. 
\end{proof}

Again, we can observe in these calculations that an optimal scheduler should in fact switch to a scheduler maximizing the expected value of $\rew^2$ among the schedulers maximizing the expected value of $\rew$ above the bound $B^\prime$.

\begin{theorem}
\label{thm:optimalsched}
The demonic variance can be expressed as 
\[
\Var^{\dem}_{\cM}(\rew) = \sup_{\sched,\tsched} \Var^{\sched,\tsched}_{\cM}(\rew)
\]
where 
$\sched$ and $\tsched$ range over all schedulers  such that for all path $\pi$ with $\rew(\pi)\geq B$, we have $\sched_\pi=\usched$ for the memoryless deterministic scheduler $\usched$ given by Lemma \ref{lem:rsched}.
\end{theorem}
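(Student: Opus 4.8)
The inequality $\Var^{\dem}_{\cM}(\rew) \ge \sup_{\sched,\tsched} \Var^{\sched,\tsched}_{\cM}(\rew)$, where the supremum ranges over the restricted class (schedulers that follow $\usched$ once the accumulated reward reaches the bound $B^\prime$ defined above, which I denote $\Sigma^{\usched}_{B^\prime}$), is immediate, since this is a supremum over a subset of all scheduler pairs. The task is therefore the reverse inequality: to show that every pair $(\sched,\tsched)$ can be replaced by a pair in $\Sigma^{\usched}_{B^\prime} \times \Sigma^{\usched}_{B^\prime}$ without decreasing $\Var^{\sched,\tsched}_{\cM}(\rew)$. First I would fix $\tsched$ and, writing $C \eqdef \mathbb{E}^{\tsched}_{\cM}(\rew) \in [0,M]$, recall from Lemma \ref{lem:variance_two} and the computation preceding the theorem that maximizing $\Var^{\sched,\tsched}_{\cM}(\rew)$ over $\sched$ is equivalent to maximizing $\mathbb{E}^{\sched}_{\cM}(\rew^2) - 2C\,\mathbb{E}^{\sched}_{\cM}(\rew)$, all $\tsched$-dependent terms being constant.

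The heart of the argument is to show that this quantity does not decrease when $\sched$ is forced to follow $\usched$ as soon as the accumulated reward reaches $B^\prime$. I would make this precise by a first-passage decomposition: since rewards are non-negative, the accumulated reward is non-decreasing along every path, so on the event that it ever reaches $B^\prime$ there is a unique minimal prefix $\pi$ with $\rew(\pi) = W \ge B^\prime$, ending in some state $s = \last(\pi)$. Conditioning on $\pi$ and using independence of the future, the contribution of this branch to $\mathbb{E}^{\sched}_{\cM}(\rew^2) - 2C\,\mathbb{E}^{\sched}_{\cM}(\rew)$ equals $\Pr(\pi)$ times
\[
W^2 - 2CW + (2W - 2C)\,\mathbb{E}^{\sched_\pi}_{\cM,s}(\rew) + \mathbb{E}^{\sched_\pi}_{\cM,s}(\rew^2),
\]
whereas on the complementary event (the reward stays below $B^\prime$ until an absorbing state is reached) the modification changes nothing. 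Because $W \ge B^\prime > M \ge C$, the coefficient $2W - 2C$ is positive and large; by the preceding theorem it is never beneficial to choose a non-maximizing action on such a branch, so the branch-wise optimum lies in $\Sigma^{\max}$, where $\mathbb{E}^{\cdot}_{\cM,s}(\rew)$ takes the constant value $\mathbb{E}^{\max}_{\cM,s}(\rew)$ and the bracket is maximized exactly by the scheduler maximizing $\mathbb{E}^{\cdot}_{\cM,s}(\rew^2)$, namely $\usched$ (Lemma \ref{lem:rsched}). Applying this branch by branch, and using that $\usched$ is memoryless, the assembled scheduler $\sched^\ast$ --- equal to $\sched$ up to first passage of $B^\prime$ and equal to $\usched$ thereafter --- lies in $\Sigma^{\usched}_{B^\prime}$ and satisfies $\Var^{\sched^\ast,\tsched}_{\cM}(\rew) \ge \Var^{\sched,\tsched}_{\cM}(\rew)$.

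Since $\Var^{\sched,\tsched}_{\cM}(\rew) = \tfrac12 \mathbb{E}^{\sched\otimes\tsched}_{\cM\otimes\cM}((\rew_1 - \rew_2)^2)$ is symmetric in its two arguments, I would then repeat the construction with $\sched^\ast$ held fixed, replacing $\tsched$ by a $\tsched^\ast \in \Sigma^{\usched}_{B^\prime}$ with $\Var^{\sched^\ast,\tsched^\ast}_{\cM}(\rew) \ge \Var^{\sched^\ast,\tsched}_{\cM}(\rew)$. Chaining the two inequalities and taking the supremum over all original pairs gives the reverse inequality, hence the claimed equality. The main obstacle is precisely the passage performed in the second paragraph: the preceding theorem only records that a \emph{single} first deviation past $B^\prime$ can be improved, so concluding that the entire past-$B^\prime$ behavior may be replaced by $\usched$ requires the first-passage decomposition together with an argument --- iterating the single-step improvement and passing to a limit, or invoking the occupation-measure characterization on each branch --- that the branch-wise supremum is genuinely attained within $\Sigma^{\max}$ and there by $\usched$; the memorylessness of $\usched$ is what ensures the assembled scheduler really belongs to $\Sigma^{\usched}_{B^\prime}$.
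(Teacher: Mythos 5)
Your proposal takes essentially the same route as the paper: fix $\tsched$, reduce via Lemma~\ref{lem:variance_two} to maximizing $\mathbb{E}^{\sched}_{\cM}(\rew^2) - 2C\,\mathbb{E}^{\sched}_{\cM}(\rew)$ with $C = \mathbb{E}^{\tsched}_{\cM}(\rew) \in [0,M]$, invoke the single-deviation improvement result above the reward bound to force expectation-maximizing behavior, observe that within $\Sigma^{\max}$ the continuation maximizing $\mathbb{E}(\rew^2)$, namely $\usched$ from Lemma~\ref{lem:rsched}, is optimal, and then symmetrize in the two schedulers. The paper in fact states the theorem with no argument beyond this observation, so your first-passage decomposition and your explicit remark that the single-deviation theorem must be iterated (or replaced by an occupation-measure argument per branch) to justify wholesale replacement by $\usched$ supply detail that the paper leaves implicit.
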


In order to compute the demonic variance, we use the constraints (\ref{eq:linear}) -- (\ref{eq:exp2}) with a vector of variables $x$ containing the variables $e_1$ and $e_2$ as well as a copy
(\ref{eq:linear}$^\prime$) -- (\ref{eq:exp2}$^\prime$) using variables $x^\prime$ containing $e_1^\prime$ and $e_2^\prime$.
By Lemma \ref{lem:correctquad}, for each pair of schedulers $\sched$ and $\tsched$, there is a solution to (\ref{eq:linear}) -- (\ref{eq:exp2}) and (\ref{eq:linear}$^\prime$) -- (\ref{eq:exp2}$^\prime$) 
such that
\begin{align*}
e_1 = \mathbb{E}^{\sched}_{\cM}(\rew), \quad e_2 = \mathbb{E}^{\sched}_{\cM}(\rew^2), \\
e_1^\prime = \mathbb{E}^{\tsched}_{\cM}(\rew), \quad e_2^\prime = \mathbb{E}^{\tsched}_{\cM}(\rew^2),
\end{align*}
and vice versa. By Lemma \ref{lem:variance_two}, we have
\begin{align*}
&\Var^{\sched,\tsched}_{\cM}(\rew) \\
  ={}& \frac{1}{2} \left( \Var_{\cM}^{\sched}(\rew) + \Var_{\cM}^{\tsched}(\rew) + (\mathbb{E}_{\cM}^{\sched}(\rew) - \mathbb{E}_{\cM}^{\tsched}(\rew))^2 \right) \\
={}& \mathbb{E}^{\sched}_{\cM}(\rew^2) - 2 \mathbb{E}_{\cM}^{\sched}(\rew)\mathbb{E}_{\cM}^{\tsched}(\rew) +\mathbb{E}^{\tsched}_{\cM}(\rew^2).
\end{align*}
This means we can express the demonic variance via the objective function
\begin{equation}
\text{maximize} \quad e_2 - 2e_1e_1^\prime + e_2^\prime.
\end{equation}
Together with 
constraints (\ref{eq:linear}) -- (\ref{eq:exp2}) and (\ref{eq:linear}$^\prime$) -- (\ref{eq:exp2}$^\prime$), this is a bilinear program that can be computed from $\cM$ and $\rew$ in exponential time.
Reasoning as in Corollary \ref{cor:MD_dem}, we can extract memoryless deterministic schedulers for the unfolded MDP  $\cN$, i.e., deterministic reward-based finite-memory schedulers for $\cM$.
We conlcude:

\begin{theorem}
The demonic variance $\Var^{\dem}_{\cM}(\rew)$ as well as a pair of deterministic  reward-based finite-memory schedulers $\sched$ and $\tsched$ with
$
\Var^{\dem}_{\cM}(\rew) = \Var^{\sched,\tsched}_{\cM}(\rew)$
 can be computed via a bilinear program of exponential size computable in exponential time from $\cM$.
\end{theorem}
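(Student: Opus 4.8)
The plan is to assemble the machinery of this subsection into a single separable bilinear program over the unfolded MDP, and then to read off the structural statement about schedulers from its bilinear form, in direct analogy with the weighted-reachability construction of Section~\ref{sec:WR}. First I would invoke Theorem~\ref{thm:optimalsched} to restrict the supremum over scheduler pairs defining $\Var^{\dem}_{\cM}(\rew)$ to pairs that both revert to the fixed memoryless deterministic scheduler $\usched$ of Lemma~\ref{lem:rsched} once the accumulated reward exceeds the bound established earlier. This is the step that makes the problem finite: on the unfolded MDP $\cN$, whose states track the accumulated reward up to that bound and whose size is exponential in $\cM$, these restricted scheduler pairs are in one-to-one correspondence with arbitrary scheduler pairs for $\cN$.

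Next I would take two disjoint copies of the linear constraint system (\ref{eq:linear})--(\ref{eq:exp2}): an unprimed copy in variables $x, e_1, e_2$ and a primed copy (\ref{eq:linear}$^\prime$)--(\ref{eq:exp2}$^\prime$) in variables $x', e_1', e_2'$. By Lemma~\ref{lem:correctquad}, feasible points of the unprimed block correspond to schedulers $\sched$ realizing $e_1 = \mathbb{E}^{\sched}_{\cM}(\rew)$ and $e_2 = \mathbb{E}^{\sched}_{\cM}(\rew^2)$, and the primed block symmetrically encodes a second scheduler $\tsched$. Rewriting $\Var^{\sched,\tsched}_{\cM}(\rew)$ through Lemma~\ref{lem:variance_two} then expresses the demonic variance as the bilinear objective in $e_1, e_2, e_1', e_2'$, so that maximizing it yields exactly $\Var^{\dem}_{\cM}(\rew)$. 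Since every constraint involves only primed or only unprimed variables and the sole cross term in the objective is the product $e_1 e_1'$, the program is separable and bilinear; and because $\cN$ is of exponential size, it is constructible in exponential time.

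Finally, for the structural claim I would argue as in Corollary~\ref{cor:MD_dem}: fixing the primed variables (equivalently, fixing $\tsched$) collapses the objective to a linear function of the unprimed variables, turning the residual problem into a maximal-expected-reward instance on $\cN$ under a reward function derived from the fixed value of $e_1'$, for which a memoryless deterministic optimum exists; the symmetric argument fixes $\sched$ and optimizes over $\tsched$. Pulling such a memoryless deterministic scheduler for $\cN$ back to $\cM$ yields precisely a deterministic reward-based finite-memory scheduler, since the only memory required is the accumulated reward up to the bound. The genuine technical content underpinning all of this---and the step I expect to be the main obstacle---is not the assembly above but the switching result for the demonic setting (the analogue of Theorem~\ref{thm:ermax} proved earlier in this subsection), which guarantees that the restriction in Theorem~\ref{thm:optimalsched} loses nothing; once that is in hand, the bilinear-program construction and the deterministic-scheduler extraction are faithful transcriptions of the weighted-reachability development.
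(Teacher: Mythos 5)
Your proposal is correct and follows essentially the same route as the paper: restrict to scheduler pairs that both switch to $\usched$ above the reward bound (via the demonic switching theorem and Theorem~\ref{thm:optimalsched}), encode each of the two schedulers by its own copy of the constraint system for the unfolded MDP using Lemma~\ref{lem:correctquad}, convert Lemma~\ref{lem:variance_two} into the bilinear objective $e_2 - 2e_1e_1^\prime + e_2^\prime$, and extract deterministic reward-based finite-memory schedulers by the fix-one-side linearization argument of Corollary~\ref{cor:MD_dem}. This is precisely the paper's proof, so there is nothing to add.
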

Determining the precise complexity is left as future work.

\end{appendix}

\end{document}